\documentclass[11pt]{article}
\usepackage[margin=1in]{geometry}
\usepackage{amsmath, amssymb, amsthm}
\usepackage{graphicx}
\graphicspath{{figures/}{./}}
\usepackage{booktabs}
\usepackage{float}
\usepackage{algorithm}
\usepackage{caption}
\usepackage[round]{natbib}
\usepackage{setspace}
\usepackage{microtype}
\newtheorem{theorem}{Theorem}
\newtheorem{proposition}{Proposition}
\newtheorem{corollary}{Corollary}
\newtheorem{lemma}{Lemma}
\newtheorem{assumption}{Assumption}
\newtheorem{definition}{Definition}
\theoremstyle{remark}
\newtheorem{remark}{Remark}

\title{\textbf{Trading Scope for Credibility in Difference-in-Differences}}
\author{Parush Arora\thanks{Department of Economics, Ashoka University.} \and Abhishek Chand\thanks{Independent researcher.}}
\date{}

\begin{document}
\maketitle

\begin{abstract}
\noindent In staggered difference-in-differences, parallel trends may hold for some treated cohorts and
fail for others. The standard estimator is then biased for the average treatment effect on the treated
(ATT), which averages over all cohorts including the offenders. Honest inference that keeps the ATT as the
target does not repair this: it reads a pooled pre-trend that the cohort-level violations can wash out of,
leaving its confidence set either wide or, when the maintained sensitivity class is calibrated on that
washed-out aggregate, misleadingly narrow. We take a different route. We drop the offending cohorts and
target the effect for the credible ones, the difference-in-differences analogue of the local average
treatment effect and of overlap weighting. This credible-subpopulation effect is point-identified under a
weaker parallel-trends requirement that can hold when the ATT's fails. When the credible cohorts are read
off the observed pre-trends, however, selection reintroduces the pre-test problem the honest-DiD literature
was built to avoid, and existing honest methods do not cover it because they never select. Our
contributions are to formalize this change of estimand for staggered designs and to supply honest inference
for it. We compose the post-selection carving of \citet{lee2016} with the sensitivity bounds of
\citet{rambachan2023}: the carving makes inference for the selected reduced-form aggregate uniformly valid
against the data-driven selection, with no separation or consistent-selection condition when the
event-study covariance is known, and the sensitivity bounds carry it to the causal target at each level of
the residual violation the screen cannot rule out. We characterize in closed form when narrowing the target
lowers risk, a feasible rule that can cost efficiency but never coverage. In an application to the shale
boom, a relative-magnitudes analysis certifies a significantly positive pooled effect on house prices, but
the credible subpopulation, the counties whose oil-and-gas onset did not coincide with the mid-2000s
housing run-up, shows no effect, locating the pooled positive in cohorts already trending before onset.
\end{abstract}

\medskip
\noindent\textbf{Keywords. } Difference-in-differences, Staggered adoption, Parallel trends, Robust inference, Sensitivity analysis, Partial identification.

\smallskip
\noindent\textbf{JEL classification. } C1, C23, C51.

\newpage

\section{Introduction}

Difference-in-differences is among the most widely used research designs in empirical economics, and its
credibility rests on the parallel trends assumption, that absent treatment treated and comparison units
would have evolved in parallel. Researchers assess it by testing for pre-treatment differences in trends,
but that practice is fragile. Pre-trends tests are often underpowered against economically meaningful
violations \citep{bilinski2020, freyaldenhoven2019, kahnlang2020, roth2022pretrends}. Conditioning on
passing them induces a pre-test selection bias \citep{roth2022pretrends}, though such conditional inference
remains valid, if conservative, when parallel trends holds \citep{dechaisemartin2024pretest}. And even a clean pre-period
does not guarantee a clean post-period, since nothing prevents a confound from arriving at the moment of
treatment \citep{kahnlang2020}.

A growing literature, surveyed in \citet{roth2023trending}, responds by relaxing parallel trends while
holding the ATT fixed as the target. Building on \citet{manski2018}, \citet{rambachan2023} bound how far
post-treatment violations can depart from the observed pre-trends and report a uniformly valid confidence
set for the ATT. \citet{kwonroth2024} sharpen the set with an empirical-Bayes prior on the violation, and
\citet{dechaisemartin2026} rank the post-treatment differences against the pre-treatment ones. All keep
the ATT and use the pre-trends to extrapolate the violation across periods. But the ATT is where a
structural difficulty lies. When parallel trends fails for some cohorts and holds for others, the ATT is
an average over all treated cohorts, including the offenders, and its point estimate is biased. Staggered
adoption makes such heterogeneity the rule rather than the exception. Each cohort's timing must be defended
on its own, one cohort's adoption perhaps as good as randomly timed while another's coincides with a
confounding shock, so parallel trends is naturally a cohort-level property, credible for some and not for
others. Honest inference that keeps the ATT as the target does not repair this. It reads a pre-trend that pools the
cohorts, and the failure takes one of two forms. When the aggregate pre-trend reveals the violation, the
honest set is valid but wide, accommodating the worst cohort. When the offending cohorts' pre-trends wash
out of the aggregate, a minority, differently timed, the honest set is instead falsely precise, tight
around the biased estimate. As Section~\ref{sec:app} shows, a relative-magnitudes sensitivity analysis then
certifies a spurious effect. Either way the fixed-target analysis on the aggregate is the wrong instrument,
because the violation is a property of the cohort composition that the aggregate hides.

Our response is to change the target. When parallel trends fails for some cohorts, we drop them and report
the effect for the credible ones. The obstacle is inference. Selecting the credible cohorts from the
pre-trends revives the very pre-test bias honest-DiD exists to avoid, and no existing honest method
addresses it, because none selects. This paper supplies that inference. Applied practice already retreats
to the part of a design it trusts, but piecemeal and without inference to match. Researchers exclude
treated units whose inclusion would contaminate the comparison, select and weight comparison units for
pre-treatment fit \citep{abadie2010}, and, less formally, adjust samples and specifications until the
pre-trends look flat, the specification search whose statistical costs \citet{roth2022pretrends}
documents. Each is a retreat to a more credible subpopulation, made ad hoc or on design grounds, and none
carries inference valid against the selection it performs. When the population ATT is not credibly
identified, we make that retreat explicit and give it a target and an inference. We drop to the
subpopulation of cohorts where identification is credible, the logic of the local average treatment
effect \citep{imbens1994}, of the optimal-subpopulation average treatment effect under limited overlap
\citep{crump2009}, and of overlap weighting \citep{li2018}, each of which trades the original estimand for
a credibly identified effect on a selected subpopulation. The credible-subpopulation LATT is a reweighting
of standard group-time effects \citep{callaway2021} over the cohorts whose parallel trends is
credible.\footnote{These estimators
\citep{callaway2021, sun2021, borusyak2024, dechaisemartin2020} repair the aggregation failure of
two-way fixed effects under staggered adoption \citep{goodmanbacon2021}, whose implicit weights can turn
negative under treatment-effect heterogeneity. That repair concerns aggregation, not identification,
since they maintain parallel trends and are biased when it fails, which is the failure we address.}
Credibility can be decided ex ante, from covariates or institutional knowledge, but in the leading case it
is read from the pre-trends themselves, the same pre-trends used for estimation, which is exactly the
selection the honest-DiD literature avoided. The paper thus makes two contributions. The first is to
propose and formalize this change of estimand for staggered difference-in-differences, giving the
credible-subpopulation effect an identification result and a rule for when to adopt it. The second, where
the work lies, is to make the data-driven selection honest, with inference for the LATT that is uniformly
valid against both the selection and the residual violation of parallel trends that a pre-trend screen
cannot rule out.

That the LATT is point-identified under parallel trends on the selected cohorts alone, a weaker
requirement that can hold when the ATT's fails, is immediate. Because the
screen selects on pre-trends, and a flat pre-trend does not certify the flat post-trend that identification
requires, the honest interval, not the point, is the object we carry. This second contribution, the
inference a data-driven selection demands, has three parts. First, and centrally, honest post-selection
inference. We combine the post-selection
``carving'' of \citet{lee2016} with the sensitivity bounds of \citet{rambachan2023} into a single interval
for the LATT that is uniformly valid in the data-generating parameter, with no separation or
consistent-selection condition when the event-study covariance is known, and to first order when it is
consistently estimated. The carved component absorbs the data-driven
selection and the level bound absorbs the residual identification failure, and the two compose. Neither
ingredient suffices alone. Post-selection carving \citep{lee2016} delivers valid inference when the target
is point-identified conditional on the selected set. Here it is not, because a flat pre-trend does not
certify the flat post-trend that identification requires, so carving alone leaves a residual violation it
cannot touch, and composing it with a sensitivity class is what restores honesty. This is the step the
fixed-target sensitivity analysis never takes, because it never selects, and it carries selection and
identification uncertainty on separate instruments. Second, a feasible
decision rule.
We characterize in closed form when the trade pays, when the credible-subpopulation honest interval is
shorter than the honest interval for the ATT. A width-dominance theorem gives the explicit threshold on the dropped cohorts' violation
above which the credible-subpopulation interval is strictly shorter than the honest ATT interval, a
threshold estimable from the covariance and the chosen sensitivity bounds. The closed-form threshold is for
the fixed-length intervals of ex-ante selection. Under data-driven selection the carved interval's expected
length is infinite, so the comparison there is by coverage and median length rather than a
width threshold. We turn the characterization into an estimable rule reported alongside both
honest intervals. Because both intervals are honest, the rule can never cost
coverage, only efficiency. Under homogeneous effects it selects the lower-risk procedure with probability
approaching one, and under heterogeneity it returns a breakdown value $\Gamma^\ast$ in the idiom of a
sensitivity analysis. Third, evidence. Simulations trace the method's advantage across the
informativeness of pre-trends and confirm the width dominance directly. Once the dropped violation is large
enough, the honest credible-subpopulation interval is strictly shorter than the honest ATT interval, both
covering. An
application to the shale boom then withdraws a spuriously positive pooled estimate of the effect on house prices.
A relative-magnitudes sensitivity analysis certifies a $+6.7$ log-point effect, yet the credible
subpopulation, the counties whose onset did not coincide with the housing run-up, reveals essentially zero,
locating the pooled effect in the cohorts already trending before onset. An appendix develops the optimal credible
weighting that the same sensitivity class implies, a soft-thresholded
weighting of cohorts by credibility net of noise, of which the flatness screen is a special case,
completing the analogy to the variance-minimizing overlap weights of \citet{crump2009}.

Among existing robust-DiD methods, which hold the ATT fixed and use pre-trends to bound or extrapolate the
post-treatment violation, none selects, so none confronts selection in its inference. We instead change the
target and treat the resulting selection as the object requiring inference, producing an interval uniformly
valid against both the data-driven selection and the residual violation, a guarantee the fixed-target
methods have no need for. We nest the sensitivity analysis of
\citet{rambachan2023} by applying it to the retained sub-aggregate, and we connect to the post-selection
inference literature \citep{lee2016, leeb2005, leeb2008} that the honest-DiD papers, by never selecting,
did not need. The most closely related work is \citet{dechaisemartin2026}, which also brings the pre-trends into
the inference rather than leaving them to a pre-test, but keeps the ATT and bounds it by ranking the
post-treatment differences against the pre-treatment ones. We instead change the target to the credible
subpopulation, so the object inference must be valid against is the data-driven cohort selection, not the
extrapolation across periods.

The remainder of the paper is organized as follows. Section~\ref{sec:approach} fixes the staggered-DiD
setup, defines the credible-subpopulation LATT, and develops its identification and estimation.
Section~\ref{sec:honest} develops the honest inference that a data-driven selection demands, and
Section~\ref{sec:dominance} the decision of when to narrow the target. Section~\ref{sec:sim} reports the
simulation study,
Section~\ref{sec:app} applies the method to the local house-price effect of the shale boom, and
Section~\ref{sec:conc} concludes.

\section{The credible-subpopulation LATT}\label{sec:approach}

\subsection{Setup}\label{sec:setup}

We adopt the staggered adoption framework of \citet{callaway2021}. There are periods
$t=1,\dots,T$ and units $i$ indexed by their first treatment date $G_i \in \{2,\dots,T\}\cup\{\infty\}$,
where $G_i=\infty$ denotes never-treated. Let $Y_{it}(g)$ denote the potential outcome in period $t$
if unit $i$ is first treated in period $g$, and $Y_{it}(\infty)$ the never-treated potential outcome.
The building-block causal parameter is the group-time average treatment effect on the treated,
\begin{equation}
\mathrm{ATT}(g,t) = \mathbb{E}\!\left[\,Y_{it}(g)-Y_{it}(\infty)\mid G_i=g\,\right].
\end{equation}
Summary parameters are weighted aggregations $\theta = \sum_{g}\sum_t w(g,t)\,\mathrm{ATT}(g,t)$ with
researcher-chosen weights \citep{callaway2021}, of which the overall ATT and the event-study path are
special cases.

It is convenient to work with the event-study representation and the decomposition of
\citet{rambachan2023}. Let $\hat\beta=(\hat\beta_{\mathrm{pre}}',\hat\beta_{\mathrm{post}}')'$ be an
asymptotically normal vector of event-study coefficients estimated by any of the heterogeneity-robust
procedures above, with $\sqrt{n}(\hat\beta-\beta)\to\mathcal{N}(0,\Sigma)$. The finite-sample covariance of
$\hat\beta$ is then $\Sigma_n=\Sigma/n$, and the Gaussian working model of Section~\ref{sec:honest},
together with the standard errors, truncation limits, and randomization scale used throughout, is stated in
terms of $\Sigma_n$, and we write $\Sigma$ for $\Sigma_n$ when no confusion arises. The estimand decomposes as
\begin{equation}
\beta = \tau + \delta, \qquad \tau_{\mathrm{pre}}=0,
\label{eq:decomp}
\end{equation}
where $\tau$ collects the causal effects (zero before treatment, by no anticipation) and $\delta$ is
the differential trend between treated and comparison groups that would have occurred absent
treatment. Parallel trends is the restriction $\delta_{\mathrm{post}}=0$, under which
$\beta_{\mathrm{post}}=\tau_{\mathrm{post}}$, and a pre-trends test is a test of $\delta_{\mathrm{pre}}=0$.

\begin{assumption}[Maintained conditions]\label{ass:main}
(i) Asymptotic normality. The event-study estimator satisfies
$\sqrt{n}(\hat\beta-\beta)\xrightarrow{d}\mathcal N(0,\Sigma)$ for a positive-definite $\Sigma$, consistently
estimable by some $\hat\Sigma\xrightarrow{p}\Sigma$. (ii) No anticipation. $\tau_{g,\mathrm{pre}}=0$
for every cohort $g$, so that the pre-treatment coefficients identify the pre-treatment differential trend,
$\beta_{g,\mathrm{pre}}=\delta_{g,\mathrm{pre}}$.
\end{assumption}

All results maintain Assumption~\ref{ass:main}. The finite-sample carved inference of
Section~\ref{sec:honest} additionally works in the Gaussian model $\hat\beta\sim\mathcal N(\beta,\Sigma)$
with $\Sigma$ known, with the estimated-$\Sigma$ case treated in Remark~\ref{rem:sigmahat}.

We will require the analogous objects at the cohort level. Write $\delta_g$ for the differential
trend of cohort $g$ and say that parallel trends holds for $g$ if $\delta_{g,\mathrm{post}}=0$.
A key feature of applications, motivated in the introduction, is that this may hold for some cohorts and
fail for others, since each cohort's adoption timing must be defended on its own.

Finally, we distinguish two regimes for how cohorts are judged credible, because they carry different
inferential guarantees. Under ex-ante (or independent) selection, credibility is decided from
pre-determined covariates, institutional knowledge, or an independent data split, so that the selected
set is a fixed $S$ statistically independent of the estimation errors in $\hat\beta$. Under data-driven
selection, credibility is assessed from the realized pre-trends $\hat\beta_{\mathrm{pre}}$, so that the
selected set $\hat S$ is random and depends on the estimation sample, with a deterministic population
counterpart $S^\ast$, the set a population pre-trend statistic would select. The identification argument
below holds for any set held fixed, and so identifies the random target $\theta_{\hat S}$ conditional on the
realized $\hat S$. Identifying the fixed population parameter $\theta_{S^\ast}$ instead requires $\hat S$ to
recover $S^\ast$, a selection-consistency condition we return to under inference
(Remark~\ref{rem:oracle}). The distinction is thus minor for identification but central for inference,
and we treat it explicitly below.

\subsection{Estimand and identification}

Let $S$ denote a set of cohorts judged to have credible parallel trends, produced by a selection rule
$R$ that we specify below. We define the target as the treatment effect for that subpopulation.

\begin{definition}[Credible-subpopulation LATT]
For cohort weights $w_g>0$ (e.g.\ proportional to cohort size) and a within-cohort aggregation $a(\cdot)$, a
linear functional of a cohort's post-treatment vector (a fixed event-time, selecting one coordinate, or an
average over post-treatment periods), the credible-subpopulation local ATT, defined for any nonempty $S$, is
\begin{equation}
\theta_S \;=\; \frac{\sum_{g\in S} w_g\,\mathrm{ATT}(g,\cdot)}{\sum_{g\in S} w_g},
\qquad \mathrm{ATT}(g,\cdot)\;=\;a\!\left(\tau_{g,\mathrm{post}}\right),
\end{equation}
the aggregated causal effect of cohort $g$.
\end{definition}

The target is identified under a weaker condition than the ATT, a benchmark we state before turning,
in Section~\ref{sec:honest}, to the honest interval that is the paper's actual object.

\begin{proposition}[Identification benchmark]\label{prop:id}
If parallel trends holds for every $g\in S$, i.e.\ $\delta_{g,\mathrm{post}}=0$ for all $g\in S$, then
$\theta_S$ is point-identified by the reweighted aggregated group-time effects,
$\theta_S = \big(\sum_{g\in S} w_g\,a(\beta_{g,\mathrm{post}})\big)/\sum_{g\in S} w_g$, and this holds
irrespective of whether $\delta_{g,\mathrm{post}}\neq 0$ for cohorts $g\notin S$.
\end{proposition}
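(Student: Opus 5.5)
The plan is to apply the decomposition \eqref{eq:decomp} cohort by cohort and then use linearity of the within-cohort aggregation $a(\cdot)$. For each cohort $g$, the decomposition gives $\beta_{g,\mathrm{post}} = \tau_{g,\mathrm{post}} + \delta_{g,\mathrm{post}}$. Here $\beta_{g,\mathrm{post}}$ is the population counterpart of the cohort-level event-study coefficients, which are estimable by Assumption~\ref{ass:main}(i). For $g\in S$ the hypothesis $\delta_{g,\mathrm{post}}=0$ then gives $\beta_{g,\mathrm{post}}=\tau_{g,\mathrm{post}}$. Applying the linear functional $a$ yields
\[
a(\beta_{g,\mathrm{post}})=a(\tau_{g,\mathrm{post}})+a(\delta_{g,\mathrm{post}})=\mathrm{ATT}(g,\cdot)
\]
for every $g\in S$. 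Note that linearity is used only to split off the violation term. Since $\delta_{g,\mathrm{post}}=0$, even $a(0)=0$ would suffice.

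Next, multiply by the fixed weights $w_g>0$, sum over $g\in S$, and divide by $\sum_{g\in S}w_g$. This denominator is strictly positive because $S$ is nonempty and the weights are positive. The result is exactly the displayed formula for $\theta_S$. The right-hand side is a known function of the identified quantities $\beta_{g,\mathrm{post}}$, $g\in S$, and the researcher-chosen $w_g$ and $a$. Hence $\theta_S$ is point-identified. If $w_g$ is proportional to cohort size, it is identified from the distribution of $G_i$, so it introduces no further unidentified objects.

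For the clause about cohorts outside $S$, the key observation is that the identifying expression involves only $\{\beta_g\}_{g\in S}$. The cohort-level decomposition is separable: cohort $g$'s coefficients depend on $(\tau_g,\delta_g)$ alone. So arbitrary values of $\delta_{g',\mathrm{post}}$ for $g'\notin S$ leave both sides of the identity unchanged.

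The only step needing care, and the main obstacle, is justifying this separability for the chosen heterogeneity-robust estimator. I would argue it holds when cohort-level coefficients are built against a fixed comparison group (never-treated, or not-yet-treated units drawn from outside the cohort under study), as in \citet{callaway2021}. One must then interpret $\delta_g$ as the differential trend relative to that comparison. If the comparison pool includes not-yet-treated cohorts in $S^c$, their trends enter $\delta_g$ by definition rather than breaking the argument. The statement is phrased in terms of $\delta_g$ as so defined, so the identity goes through.
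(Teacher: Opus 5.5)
Your proposal is correct and matches the paper's own proof: for $g\in S$ you substitute $\beta_{g,\mathrm{post}}=\tau_{g,\mathrm{post}}$ via \eqref{eq:decomp}, apply $a$ to get $\mathrm{ATT}(g,\cdot)$, and reweight over $S$, and cohorts outside $S$ do not enter because they carry zero weight. Your remarks on the positive denominator and on the comparison group that defines $\delta_g$ are harmless care the paper leaves implicit; and once the vectors $\beta_{g,\mathrm{post}}$ and $\tau_{g,\mathrm{post}}$ coincide, no property of $a$ is needed at all, not even $a(0)=0$.
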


\begin{proof}
See Appendix~\ref{app:proofs}.
\end{proof}

The identification in Proposition~\ref{prop:id} is immediate. The substance is the honest inference for the
data-driven selection this target invites, developed in Section~\ref{sec:honest}. Changing the target to the
credible subpopulation weakens the identifying requirement from parallel trends on all cohorts to parallel
trends on $S$, which can hold when the former fails, at the cost of a narrower question whose subpopulation
$S$ must then be characterized. The weakening is genuine but conditional, and two things must be kept
separate. The identifying content is the substantive restriction $\delta_{g,\mathrm{post}}=0$ for $g\in S$,
an assumption about post-treatment trends. The flatness screen of Section~\ref{sec:select} is a statistical
classification rule that supplies evidence for that restriction from the pre-period, not the restriction
itself, since a cohort flat before treatment may still drift after it. Point identification of the causal
$\theta_S$ thus holds only when the restriction does, and the screen cannot certify it. We therefore treat
the point-identified $\theta_S$ as a benchmark and pair it with the honest sensitivity bounds developed
below, which bound the residual post-treatment violation the screen leaves behind and, in doing so,
return a set rather than a point when that violation is nonzero.

Narrowing the target carries an interpretability cost, and a standing reporting rule addresses it. The
credible subpopulation is delimited by a statistical property, flat pre-trends, rather than by a
substantive behavioral criterion, so it is less transparently interpretable than the compliers of the local
average treatment effect \citep{imbens1994}, and under heterogeneous effects it answers a different question
than the ATT for the full treated population. We handle this directly. The gap between the two is the
composition gap $\Gamma$, unidentified because it depends on the effects of the discarded cohorts, so we
always report both honest intervals with the breakdown value $\Gamma^\ast$, and the reader sees the
full-population target alongside the credible one and how far the discarded cohorts' effects must differ to
overturn the
reading, and wherever the retained set admits a substantive description, in the application of
Section~\ref{sec:app}, the counties whose oil-and-gas onset did not coincide with the mid-2000s housing
run-up, lead with that description, which converts a cohort selected for a convenient statistical
property into a subpopulation a reader can name, characterizing it by its covariates the way applied work
characterizes the compliers of an instrumental-variables design \citep{abadie2003}.

The construction uses staggered timing only to supply the groups. What it requires is a partition of the
treated population into groups whose parallel trends can each be assessed. Adoption cohorts are the
natural groups, and their credibility heterogeneity comes for free from the variation in when and why
units adopt. In a non-staggered design the same construction applies with groups defined by covariates or
geography, provided the groups differ in credibility and there are enough pre-periods to estimate each
group's pre-trend rather than chase noise. The groups should accordingly be subpopulations large enough
to assess, not individual units, whose pre-trends are too noisy to select on without reviving the
pre-test bias in its most severe form.

\subsection{Selection and estimation}\label{sec:select}

The selection rule $R$ maps the data to the set $S$, and its form fixes both which subpopulation
$\theta_S$ describes and the inferential guarantees available. It takes two forms, matching the two
regimes of Section~\ref{sec:setup}. Under ex-ante selection, $S$ is fixed before estimation from
pre-determined covariates, institutional knowledge of which cohorts were plausibly confounded, or an
independent data split. Under data-driven selection, credibility is read from the estimated
pre-trends, and a cohort enters $S$ when its estimated pre-trend is close to flat,
\begin{equation}
S \;=\; \big\{\, g : \max_{e<0}\, \lvert \hat\beta_{g,\mathrm{pre}}(e)\rvert \le c \,\big\},
\label{eq:rule}
\end{equation}
for a threshold $c$.

This is the direct reading of parallel trends. The assumption is that the treated-comparison difference
would have stayed constant absent treatment, so a cohort is credible when its pre-treatment difference
is close to constant, and the screen keeps exactly those. It is posed in the same currency as the honest
inference of the next subsection, which bounds how far the post-treatment difference could have wandered
from flat, so selection and inference test one object, the level of the parallel-trends
violation.\footnote{A researcher willing to trust that a straight pre-trend would have continued linearly
can relax flatness to approximate linearity, screening instead on the pre-trend's curvature (its second
difference) and pairing it with the smoothness sensitivity class of \citet{rambachan2023} and a
linear-extrapolation estimator \citep{armstrong2018}. This buys the steep-but-straight cohorts that
flatness discards, at the cost of the extrapolation assumption. The principle is the same in either
currency. The screen should be posed in whichever functional the sensitivity class bounds, flatness with
the level bound and curvature with the smoothness class, since level and curvature rank cohorts
differently and select different sets.}

The threshold trades scope
against credibility, with a lax $c$ retaining more cohorts but admitting larger residual violations and a
strict $c$ the reverse. Because rule~\eqref{eq:rule} reads $S$ off the noisy estimates $\hat\beta_{\mathrm{pre}}$, the selected set
is random and dependent on the estimation sample, which is the source of the pre-test bias below and of the
inference subtleties that follow it.

We do not recommend committing to a single $c$. The sensitivity interval of
Section~\ref{sec:approach} is applied to whichever set $c$ selects, and its bound $M$ must dominate that
set's residual post-treatment violation, so the effect of $c$ on the required $M$ runs through how
far screening on flat pre-trends also flattens post-trends. When pre-trends are informative about
post-treatment violations, a stricter $c$ yields a cleaner set that a smaller $M$ suffices to cover, while
a laxer $c$ buys scope at the cost of a larger $M$, so the threshold trades scope for precision rather than
credibility for nothing. When they are not, tightening $c$ need not shrink the post-treatment violation and
the required $M$ can be nonmonotone in $c$. Either way the honest object to report is the estimand together
with its interval traced as a function of $c$, the selection analogue of the breakdown curve we report for
$M$. A reader then sees the scope-credibility-precision frontier directly and need not trust a single
cutoff or a presumed monotone mapping. When a
single value is nonetheless wanted, we recommend calibrating it to negligibility rather than to
significance, setting $c$ to the largest pre-trend that would bias $\theta_S$ by less than a stated
tolerance, in the spirit of the non-inferiority tests of \citet{bilinski2020}, rather than to a critical
value of a pre-trends test, which would reinherit the low power that motivates the paper. The
application of Section~\ref{sec:app} reports such a path.

The estimator $\hat\theta_S$ simply reweights, over the selected cohorts, the sample group-time effects
of whichever heterogeneity-robust estimator one uses, and nothing below depends on that choice.

\begin{proposition}[Consistency and efficiency]\label{prop:cons}
Under ex-ante selection and the regularity conditions of the underlying group-time estimator,
$\hat\theta_S$ is consistent for $\theta_S$. If in addition parallel trends holds for all cohorts,
$\hat\theta_S$ is consistent for the ATT when either treatment effects are homogeneous or $S$ comprises all
cohorts, and otherwise for the selected-cohort LATT, which differs from the ATT by the gap between the
selected- and full-population weighted averages of the cohort effects. In either case there is an
efficiency loss relative to estimators that use all cohorts.
\end{proposition}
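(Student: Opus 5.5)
Since $S$ is fixed and independent of the estimation errors, I would treat $\hat\theta_S$ as a fixed linear functional of $\hat\beta$ and apply the continuous mapping theorem; no post-selection issue arises. Write $\hat\theta_S=\lambda_S'\hat\beta_{\mathrm{post}}$. The coefficient vector $\lambda_S$ places weight $w_g/\sum_{h\in S}w_h$ on the coordinates of cohort $g$, composed with the linear aggregation $a(\cdot)$, and zero weight on cohorts outside $S$. By Assumption~\ref{ass:main}(i), $\hat\beta\xrightarrow{p}\beta$, so $\hat\theta_S\xrightarrow{p}\lambda_S'\beta_{\mathrm{post}}=\sum_{g\in S}w_g a(\beta_{g,\mathrm{post}})/\sum_{g\in S}w_g$. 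I read ``consistent for $\theta_S$'' as being under the hypothesis of Proposition~\ref{prop:id}, parallel trends on $S$. Then $\beta_{g,\mathrm{post}}=\tau_{g,\mathrm{post}}$ for $g\in S$, and Proposition~\ref{prop:id} identifies this limit with $\theta_S$. If the cohort weights are estimated shares, I would add that $\hat w_g\xrightarrow{p}w_g>0$; the ratio map is continuous because the denominator is bounded away from zero.

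Next, assume parallel trends for all cohorts. The same limit then holds, and the full-population ATT is $\theta_{\mathcal G}$ with $\mathcal G$ the set of all treated cohorts. I would write the gap as
\[
\theta_S-\theta_{\mathcal G}=\sum_{g}\Big(\tfrac{w_g1\{g\in S\}}{\sum_{h\in S}w_h}-\tfrac{w_g}{\sum_{h}w_h}\Big)\mathrm{ATT}(g,\cdot).
\]
If $S=\mathcal G$, the bracket vanishes term by term. If effects are homogeneous, $\mathrm{ATT}(g,\cdot)=\bar\tau$ for all $g$, and the brackets sum to zero, so the gap is zero. Otherwise the limit is the selected-cohort LATT, and the displayed expression is exactly the stated difference between the selected- and full-population weighted averages.

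For the efficiency claim, the asymptotic variance is $\lambda_S'\Sigma_{\mathrm{post}}\lambda_S$. Under homogeneity, $\theta_S$ and the ATT coincide. The full-cohort estimator $\lambda_{\mathcal G}'\hat\beta_{\mathrm{post}}$ and, more to the point, the efficient linear combination over all cohorts, $\min_{\lambda:\,\lambda'\iota=1}\lambda'\Sigma_{\mathrm{post}}\lambda$ over the relevant coordinates, optimize over a set that contains $\lambda_S$. Its variance therefore cannot exceed that of $\hat\theta_S$, and the inequality is strict unless the optimum puts zero weight outside $S$. This step is the main obstacle. Read literally, ``efficiency loss'' is not a pointwise inequality against the particular fixed-weight full-cohort estimator: with arbitrary $\Sigma$ and weights, dropping a very noisy cohort can reduce variance. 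I would therefore state the loss relative to the best estimator using all cohorts, where it follows from the constraint-set inclusion above. In the leading case of independent cohort estimates with size-proportional weights and variances inversely proportional to size, I would also verify that the plain full-cohort average dominates, by Cauchy--Schwarz. Under heterogeneity the comparison concerns different targets, so the loss is read as the information discarded from cohorts outside $S$.
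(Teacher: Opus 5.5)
The paper states Proposition~\ref{prop:cons} without proof; it is not among the omitted proofs of Appendix~\ref{app:proofs}. So there is no authors' route to compare against. Your argument is correct, it is the natural way to fill the gap, and it agrees with the reasoning the paper uses nearby:
\begin{itemize}
\item Your consistency step is the computation in the proof of Proposition~\ref{prop:id}, combined with $\hat\beta\xrightarrow{p}\beta$ and continuous mapping.
\item Your gap display is the target-equivalence paragraph that follows the proposition.
\item Your constraint-inclusion argument for efficiency is the one the authors gesture at in the proof of Theorem~\ref{thm:width}. There they note that $s_S\ge s_{\mathcal G}$ follows because, under precision weights, the full-cohort average is the full set's optimum, and that under other weightings the ordering must be checked.
\end{itemize}
Both of your qualifications are needed, not pedantic. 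Without parallel trends on $S$, the first sentence holds only for the reduced-form aggregate $\ell_S'\beta_{\mathrm{post}}$, which is the sense in which Section~\ref{sec:honest} uses $\theta_S$. Proposition~\ref{prop:pretest} likewise makes consistency for the causal target conditional on the selected aggregate's residual violation vanishing. The efficiency clause, read literally against the fixed-weight full-cohort average, is false in general, and Theorem~\ref{thm:width} itself allows this when it notes that $s_S<s_{\mathcal G}$ can occur.

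One tightening: in the heterogeneous case you can replace the ``information discarded'' gloss with an actual inequality, because variances do not depend on the targets. Take independent cohort estimates and precision weights $w_g\propto\sigma_g^{-2}$, which covers your leading case $w_g\propto n_g$, $\sigma_g^2\propto 1/n_g$. Direct computation gives $\mathrm{Var}(\hat\theta_S)=1/\sum_{g\in S}\sigma_g^{-2}$. This is strictly larger than $1/\sum_{g\in\mathcal G}\sigma_g^{-2}=\mathrm{Var}(\hat\theta_{\mathcal G})$ whenever $S\neq\mathcal G$, whatever the cohort effects. The same computation makes Cauchy--Schwarz unnecessary in the homogeneous case. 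Under general cross-cohort correlation even this fixed-weight comparison can fail. Your constraint-inclusion argument, which needs the common target of the homogeneous case, therefore remains the general form of the efficiency claim.
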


The consistency clause of Proposition~\ref{prop:cons} invites a sharper question. When does narrowing the
target cost nothing, in that the LATT and the ATT are the same object? Writing $\theta_g:=\mathrm{ATT}(g,\cdot)$,
$\mathcal G$ for the full set of treated cohorts, and $\theta_{\mathcal G}$ for the ATT, the two coincide for
a fixed selected set $S$, $\theta_S=\theta_{\mathcal G}$, exactly when the retained cohorts' weighted-average
effect equals the full population's,
$\sum_{g\in S}w_g\theta_g/\sum_{g\in S}w_g=\sum_{g\in\mathcal G}w_g\theta_g/\sum_{g\in\mathcal G}w_g$. In
particular when no cohort is dropped, when effects are homogeneous, and more generally whenever selection is
uninformative about effect size. The gap $\theta_S-\theta_{\mathcal G}$ is the composition gap $\Gamma$ that
Proposition~\ref{prop:generalK} isolates in Appendix~\ref{app:mse}, the LATT's counterpart of the selection
that separates a local from an average treatment effect \citep{imbens1994,li2018}. Target equivalence and estimator convergence are
distinct. Under data-driven selection the estimators can still converge, when no cohort is detectable and the
whole population is retained, by Proposition~\ref{prop:pretest}, even where the targets would differ for any
strict selection, while a strict but ignorable selection leaves the targets equal yet the finite-sample
estimators differ by the efficiency cost of Proposition~\ref{prop:cons}.

Under data-driven selection the rule~\eqref{eq:rule} couples $S$ to the estimation errors. A confounded
cohort whose pre-trend noise happens to fall below $c$ is admitted, and its differential trend enters
$\hat\theta_S$. This is a pre-test selection bias in the sense of \citet{roth2022pretrends}, with a
definite source and a definite fate.

\begin{proposition}[Pre-test bias]\label{prop:pretest}
Under data-driven selection, $\hat\theta_S$ carries a finite-sample bias whose leading term is the weighted
average of the admitted confounded cohorts' post-treatment differential trends, times their probability of
admission. The exact bias also reflects the random weight normalization on the selected set and, when pre-
and post-treatment estimation errors are correlated, a term from conditioning on the selection. Whether the
leading term vanishes turns on detectability. Call a confounded cohort detectable if its
population pre-trend statistic exceeds the threshold, $\max_{e<0}\lvert\beta_{g,\mathrm{pre}}(e)\rvert>c$,
and undetectable if it is flat in the pre-period, $\max_{e<0}\lvert\beta_{g,\mathrm{pre}}(e)\rvert\le c$,
yet violates parallel trends in the post-period. As per-cohort information grows, a detectable cohort's
admission probability tends to zero and its contribution to the bias vanishes, whereas an undetectable
cohort's admission probability tends to one and its contribution persists at every sample size, while a
cohort exactly at the threshold is admitted with an interior probability. Hence
$\hat\theta_S$ is consistent for $\theta_S$ whenever the selected aggregate's residual violation vanishes,
which holds in particular if every confounded cohort is detectable, though detectability of each is
sufficient rather than necessary, since cohort-level violations may cancel in the aggregate. Otherwise
$\hat\theta_S$ retains an asymptotic bias equal to the residual violation of the undetectable cohorts, the
identification gap that the honest inference of the next subsection bounds.
\end{proposition}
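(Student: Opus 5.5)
Our approach is to write the estimator as a ratio over a random selected set and expand it. For each cohort let $D_g=\mathbf 1\{\max_{e<0}|\hat\beta_{g,\mathrm{pre}}(e)|\le c\}$ be the admission indicator, so $\hat\theta_S=\sum_g w_g D_g\,a(\hat\beta_{g,\mathrm{post}})/\sum_g w_g D_g$ on the event that $S$ is nonempty. By Assumption~\ref{ass:main}(ii) and the decomposition~\eqref{eq:decomp}, $a(\hat\beta_{g,\mathrm{post}})=\theta_g+a(\delta_{g,\mathrm{post}})+a(\varepsilon_{g,\mathrm{post}})$, with $\varepsilon$ the estimation error and $a$ linear. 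The difference $\hat\theta_S-\theta_S$ then splits into three pieces. The first is the residual-violation term $\sum_g w_gD_g a(\delta_{g,\mathrm{post}})/\sum_g w_gD_g$, which is nonzero only through confounded cohorts. The second is a noise term $\sum_g w_gD_g a(\varepsilon_{g,\mathrm{post}})/\sum_g w_gD_g$. The third is the discrepancy that arises because the selected-set target is random.

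The next step is to take expectations. Linearizing the ratio around the population normalization $\sum_g w_g\pi_g$, with $\pi_g=\Pr(D_g=1)$, makes the leading term of the first piece the weighted average $\sum_g w_g\pi_g a(\delta_{g,\mathrm{post}})/\sum_g w_g\pi_g$. This is the "admitted confounded cohorts' differential trends times their admission probability." The linearization remainder is the random-normalization term. For the noise piece, $D_g$ depends only on $\hat\beta_{g,\mathrm{pre}}$. Hence $\mathbb E[D_g\varepsilon_{g,\mathrm{post}}]=\mathbb E[D_g\,\mathbb E(\varepsilon_{g,\mathrm{post}}\mid\hat\beta_{\mathrm{pre}})]$, and in the Gaussian model the inner conditional mean is $\Sigma_{\mathrm{post,pre}}\Sigma_{\mathrm{pre}}^{-1}(\hat\beta_{\mathrm{pre}}-\beta_{\mathrm{pre}})$. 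This term vanishes when the pre- and post-period errors are uncorrelated, and otherwise it gives the conditioning term named in the statement.

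For the asymptotic clauses we take $\hat\beta_{g,\mathrm{pre}}\to\beta_{g,\mathrm{pre}}$ at rate $\sqrt n$ by Assumption~\ref{ass:main}(i). If a cohort is detectable, then $\max_e|\beta_{g,\mathrm{pre}}(e)|>c$, and a union bound over the finitely many $e$ combined with the normal tail gives $\pi_g\to0$. If a cohort is undetectable in the strict sense, with the maximum below $c$, then $\pi_g\to1$ by the same argument. When the maximum equals $c$ exactly, the limiting probability is a Gaussian orthant probability lying strictly between 0 and 1. In every case $D_g\to\mathbf 1\{g\in S^\ast\}$ in probability, except at boundary cohorts. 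The continuous mapping theorem applied to the ratio, together with $\varepsilon=O_p(n^{-1/2})$, then gives $\hat\theta_S-\theta_{\hat S}\to_p \sum_{g\in S^\ast}w_g a(\delta_{g,\mathrm{post}})/\sum_{g\in S^\ast}w_g$, which is the residual violation of the undetectable cohorts. Consistency holds exactly when this aggregate is zero. That happens whenever every confounded cohort is detectable, but it can also happen when violations cancel, which shows the detectability condition is sufficient but not necessary.

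We expect the main obstacle to be making the bias expansion rigorous. The issue is the event $S=\emptyset$ and the nonlinearity of the ratio. We would first define $\hat\theta_S$ as $0$ on the empty-set event and show that event has exponentially small probability whenever some cohort is asymptotically admitted. We would then bound the second-order remainder using boundedness of the weights in place of moment conditions, and state the leading term only up to an $o(1)$ remainder. A secondary subtlety is the boundary case, where the limit is a mixture rather than a single number. We would simply record that case as the interior-probability statement and exclude it from the consistency clause.
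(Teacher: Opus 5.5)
The paper states this proposition without proof (Appendix~\ref{app:proofs} contains none for it), so there is no argument of the paper's to match, and your sketch is essentially sound. Its main pieces are right. First, $\hat\theta_S-\theta_{\hat S}$ splits exactly into the selected residual violation $R_{\hat S}=\sum_{g\in\hat S}w_gv_g/W_{\hat S}$, with $v_g=a(\delta_{g,\mathrm{post}})$, plus a convex combination of post-period errors; with the random target $\theta_{\hat S}$ there is no third piece. Second, your conditional-mean treatment of the noise is exact for the whole ratio, not only its numerator, because every $D_h$ is $\hat\beta_{\mathrm{pre}}$-measurable. Third, the admission-probability limits of $0$, $1$, or a Gaussian orthant probability correctly resolve the statement's overlap between ``undetectable'' ($\le c$) and ``at the threshold.''

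Where you depart from the paper's nearest computation, the configuration-averaging in the proof of Proposition~\ref{prop:generalK}, is in linearizing the ratio, and the exact route is better. Averaging exactly gives $\mathbb{E}[R_{\hat S}\mathbf{1}\{\hat S\neq\varnothing\}]=\sum_g\pi_gw_gv_g\,\mathbb{E}[W_{\hat S}^{-1}\mid g\in\hat S]$. This is literally ``trend times admission probability,'' with the random normalization isolated in the conditional expectation. Since $W_{\hat S}\ge w_g$ on $\{g\in\hat S\}$, each cohort's contribution is at most $\pi_g\lvert v_g\rvert$, so detectable cohorts drop out with no remainder bound and no empty-set bookkeeping beyond its own separate term.

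Your linearized term $\sum_gw_g\pi_gv_g/\sum_gw_g\pi_g$ is a fair heuristic, but the plan to show it correct up to $o(1)$ fails at the threshold. Take an always-retained clean cohort and an equally weighted confounded cohort at the boundary, admitted with limiting probability $p\in(0,1)$. The exact residual bias tends to $pV/2$, while the linearization gives $pV/(1+p)$. Confine that claim to configurations where every $\pi_g\to0$ or $1$; the statement does not need it anyway, since it assigns the normalization to the exact bias.

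Nor do you need the continuous mapping theorem or $\hat S\to S^\ast$ for the consistency clause. On the nonempty event the noise piece is bounded by $\max_g\lvert a(\varepsilon_{g,\mathrm{post}})\rvert=O_p(n^{-1/2})$, so consistency is exactly $R_{\hat S}\xrightarrow{p}0$. That formulation also covers clean cohorts sitting at the threshold, which your boundary exclusion unnecessarily discards.
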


\section{Honest inference}\label{sec:honest}

A flat pre-trend does not guarantee a flat post-trend. Selection removes gross violations but not subtle
residual ones. We therefore pair the point estimate with the sensitivity analysis of \citet{rambachan2023},
applied to the selected-cohort aggregate. It imposes a restriction $\delta_{S,\mathrm{post}}\in\Delta$ on
the residual differential trend and reports the implied confidence set for $\theta_S$. To match the
flatness screen we use the level bound
$\Delta^{\mathrm{Level}}(M)=\{\delta:\lvert\delta_{\mathrm{post}}(e)\rvert\le M\ \forall e\}$, which asks how
far the treated-comparison difference could have drifted from its flat pre-treatment level. The estimator
is then the reweighted post-treatment average, with no extrapolation, and its fixed-length confidence
interval (FLCI) is near-optimal \citep{armstrong2018}. We use this imported construction throughout. For a
scalar target estimated with standard error $s$ under a worst-case bias bound $B$, the near-optimal FLCI of
\citet{armstrong2018} has half-width
\begin{equation}\label{eq:flci-hw}
h(B,s)\;=\;s\,\mathrm{cv}_\alpha\!\big(B/s\big),\qquad
\mathrm{cv}_\alpha(t)=\text{the }(1-\alpha)\text{ quantile of }\lvert\mathcal N(t,1)\rvert,
\end{equation}
with $\mathrm{cv}_\alpha(0)=z_{1-\alpha/2}$ the pure-noise critical value and
$\mathrm{cv}_\alpha(t)\uparrow t+z_{1-\alpha}$ as the bias comes to dominate. The bound $M$ is not estimated but is the axis of a
sensitivity analysis. The pre-period identifies $\delta_{\mathrm{pre}}$ and not $\delta_{\mathrm{post}}$, so no
screen can certify a value of $M$, and the honest object is not a single interval but the map from $M$ to the
interval it implies, together with the breakdown value $M^\ast$, the smallest post-treatment violation under
which the stated conclusion no longer follows. This is the level-bound analogue of the breakdown reporting of
\citet{rambachan2023}. The honest object is not a claim that the retained cohorts' post-trends are flat, which no pre-trend can
establish, but the map of how flat they must be for the reading to stand. Because $M$ and $M^\ast$ are in
the outcome's own units, $M^\ast$ is directly interpretable. We benchmark it two ways: against the retained
cohorts' own pre-trends, which the screen holds below $c$ in the same units, and against the effect at
stake. An $M^\ast$ several times the retained cohorts' visible pre-trends signals a conclusion surviving
violations far larger than anything the pre-period displays. Section~\ref{sec:app} reports both benchmarks.

Validity depends on the selection regime. Under ex-ante selection $S$ is independent of $\hat\beta$, and the
uniform validity of \citet{rambachan2023} applies verbatim. Under data-driven selection $S$ is random, and
two issues arise. The first is post-selection inference. It is asymptotically harmless when cohorts are well
separated from the threshold.

\begin{remark}[Pointwise oracle coverage under separation]\label{rem:oracle}
A consistency argument already secures coverage under a strong condition. If the selection statistic is
consistent and every cohort's population value lies a fixed distance from the threshold $c$ (a separation
condition), and the population credible set's residual violation satisfies
$\delta_{S^\ast,\mathrm{post}}\in\Delta^{\mathrm{Level}}(M)$, then $P(\hat S=S^\ast)\to1$ and the FLCI on the
estimated set $\hat S$ inherits the oracle coverage of the interval on the population set $S^\ast$, nominal
by the fixed-set validity noted above. This oracle guarantee has two limitations, both removed by the carved construction
below. The separation condition is untestable
and nearly restates the parallel-trends question, relocating the identifying content rather than supplying
it, and the coverage is only pointwise, since a post-selection distribution cannot be estimated uniformly
consistently \citep{leeb2005,leeb2008}, the obstruction confined to the local-to-threshold configurations
where the selection statistic drifts toward the boundary at the sampling rate.
\end{remark}

A researcher unwilling to assume separation has two options. One secures validity through independence,
selecting on pre-determined covariates or an independent sample split, at the cost of the efficiency the
split forgoes. The other conditions on the selection and recovers that efficiency. We develop it now,
because it delivers the uniform guarantee the consistency argument lacks. The result rests on the following
conditions and on the classical post-selection lemma, both collected here so the construction is
self-contained.

\begin{assumption}[Carved-inference conditions]\label{ass:carve}
In addition to Assumption~\ref{ass:main}, the finite-sample carved inference maintains. (i) the Gaussian
model $\hat\beta\sim\mathcal N(\beta,\Sigma)$ with $\Sigma$ known, the estimated-$\Sigma$ case treated in
Remark~\ref{rem:sigmahat}. (ii) the data-driven flatness screen $R$ of \eqref{eq:rule}, randomized at a
scale $\gamma\ge0$, and (iii) conditioning on the realized selection event together with the active
pre-period constraints and their signs, for each retained cohort the constraints that bind, for each
dropped cohort a breaching coordinate and its sign.
\end{assumption}

\begin{lemma}[Polyhedral lemma, \citealp{lee2016}]\label{lem:poly}
Let $y\sim\mathcal N(\mu,\Sigma)$ and fix a contrast $\eta$. On any polyhedral event $\{Ay\le b\}$,
decompose $Ay$ into its component along $\eta'y$ and a remainder $r$ that is independent of $\eta'y$.
Conditional on the event $\{Ay\le b\}$ and on the realized remainder $r$, the event confines $\eta'y$ to an
interval $[\mathcal V^-,\mathcal V^+]$ whose endpoints are computable in closed form from $(A,b,\Sigma,\eta)$
and $r$, and its law is $\mathcal N(\eta'\mu,\eta'\Sigma\eta)$ truncated to $[\mathcal V^-,\mathcal V^+]$. The
truncated-normal pivot $F^{[\mathcal V^-,\mathcal V^+]}_{\eta'\mu,\,\eta'\Sigma\eta}(\eta'y)$ is therefore
$\mathrm{Uniform}[0,1]$ conditional on the event and $r$, and, since this holds for every value of $r$,
conditional on the event alone by iterated expectations.
\end{lemma}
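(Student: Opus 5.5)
The plan is the standard conditional-Gaussian argument of \citet{lee2016}, applied to the affine decomposition of $y$ along $\eta$. Write $\sigma^2=\eta'\Sigma\eta$, which is positive unless $\eta=0$, and define $c=\Sigma\eta/\sigma^2$ and $r=(I-c\eta')y$. Then $y=c\,(\eta'y)+r$ holds identically. The pair $(\eta'y,r)$ is jointly Gaussian because it is a linear map of $y$. Its cross-covariance is $\mathrm{Cov}(r,\eta'y)=(I-c\eta')\Sigma\eta=\Sigma\eta-c\sigma^2=0$. In the Gaussian model zero covariance gives independence, so $r$ is independent of $\eta'y$, and $\eta'y\sim\mathcal N(\eta'\mu,\sigma^2)$ marginally. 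This is the remainder referred to in the statement.

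Next, rewrite the event in terms of $\eta'y$ and $r$:
\[
\{Ay\le b\}=\{(Ac)\,\eta'y\le b-Ar\}.
\]
For fixed $r$ this is a finite collection of scalar linear inequalities in $\eta'y$, which I handle row by row.
\begin{itemize}
\item If $(Ac)_j>0$, row $j$ gives the upper bound $\eta'y\le (b-Ar)_j/(Ac)_j$.
\item If $(Ac)_j<0$, row $j$ gives a lower bound of the same form.
\item If $(Ac)_j=0$, row $j$ gives the constraint $0\le(b-Ar)_j$, which involves $r$ alone and not $\eta'y$.
\end{itemize}
Set $\mathcal V^-(r)$ to the maximum of the lower bounds and $\mathcal V^+(r)$ to the minimum of the upper bounds, with the conventions $\mathcal V^-=-\infty$ and $\mathcal V^+=+\infty$ when no such row exists. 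Also set $\mathcal V^0(r)=\min_{j:(Ac)_j=0}(b-Ar)_j$. Then
\[
\{Ay\le b\}=\{\mathcal V^-(r)\le\eta'y\le\mathcal V^+(r)\}\cap\{\mathcal V^0(r)\ge0\}.
\]
This gives the closed-form endpoints in terms of $(A,b,\Sigma,\eta)$ and $r$.

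For the truncated law, use independence. Conditioning on $r$ leaves $\eta'y\sim\mathcal N(\eta'\mu,\sigma^2)$ unchanged. The event, given $r$, restricts $\eta'y$ to a deterministic interval, and the $\mathcal V^0$ condition is also deterministic given $r$. So, on the positive-probability set of $r$ where the event is nonempty, the conditional law of $\eta'y$ is the normal truncated to $[\mathcal V^-,\mathcal V^+]$.

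Applying the continuous CDF of a random variable to itself gives a uniform variable (the probability integral transform). Hence the pivot $F^{[\mathcal V^-,\mathcal V^+]}_{\eta'\mu,\sigma^2}(\eta'y)$ is $\mathrm{Uniform}[0,1]$ conditional on the event and on $r$. This conditional law does not depend on $r$. Integrating over the conditional distribution of $r$ given the event, by the tower property, shows the pivot is also uniform conditional on the event alone.

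I expect the main obstacle to be measure-theoretic care rather than algebra. Three points need handling:
\begin{itemize}
\item The conditioning should be on the event intersected with $\{\mathcal V^0\ge0\}$, so that the $r$-only constraints are not lost.
\item Conditioning on a continuous $r$ should be made rigorous through regular conditional distributions, or equivalently by computing $P(\text{pivot}\le u,\ Ay\le b)$ as an integral over $r$.
\item The degenerate case $\eta'\Sigma\eta=0$ must be excluded.
\end{itemize}
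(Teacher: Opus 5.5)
Your proposal is correct and follows essentially the same route as the paper, which gives no separate proof but embeds the \citet{lee2016} argument in the statement itself. That argument decomposes along $\eta'y$ with an independent Gaussian remainder, reads off the truncation interval, applies the probability integral transform conditional on $r$, and integrates out $r$ by iterated expectations. Your explicit choice $c=\Sigma\eta/(\eta'\Sigma\eta)$ and $r=(I-c\eta')y$, and your handling of the $r$-only constraint $\mathcal V^0(r)\ge0$, make precise what the paper's sketch leaves implicit.
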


The construction randomizes the screen and conditions on its outcome. In the reduced-form model
$\hat\beta\sim\mathcal N(\beta,\Sigma)$ with $\Sigma$ known, draw independent noise
$\omega\sim\mathcal N(0,\gamma\,\Sigma_{\mathrm{pre}})$ for a scale $\gamma\ge0$, and select on the noised
pre-trends, $\hat S=\{g:\max_e\lvert\hat\beta_{g,\mathrm{pre}}(e)+\omega_{g}(e)\rvert\le c\}$, while
estimating the aggregate $\theta_S=\ell_S'\beta_{\mathrm{post}}$ from the unrandomized
$\hat\theta_S=\ell_S'\hat\beta_{\mathrm{post}}$, where $\ell_S$ carries the reweighting onto $S$. The
target is the reduced-form aggregate. Under parallel trends for $S$ it is the causal $\theta_S$, and
otherwise it is that quantity plus the residual violation the level bound handles below. Selection
uncertainty and identification are thus carried by separate instruments.

\begin{theorem}[Uniformly valid carved inference]\label{thm:carve}
Maintain Assumption~\ref{ass:carve} and fix $\gamma\ge0$. Conditioning on the selection event $\{\hat S=S\}$
together with the active flatness-screen coordinates and their signs renders the event a polyhedron in the
Gaussian vector even with several pre-periods per cohort. Then, conditional on $\{\hat S=S\}$ and the
Gaussian remainder $r$ orthogonal to $\hat\theta_S$, the estimate has a truncated Gaussian law,
\[
\hat\theta_S\mid\{\hat S=S\},\,r\ \sim\ \mathcal{TN}\big(\theta_S,\ s_S^2,\ [\mathcal V^-,\mathcal V^+]\big),
\qquad s_S^2=\ell_S'\Sigma_{\mathrm{post}}\,\ell_S,
\]
whose truncation limits $[\mathcal V^-,\mathcal V^+]$ are computable in closed form from $\Sigma$, $\gamma$,
and the realized data by the polyhedral lemma of \citet{lee2016}. Inverting the
truncated-normal pivot $F_{\theta_S,s_S^2}^{[\mathcal V^-,\mathcal V^+]}(\hat\theta_S)$ yields an interval
$\mathcal C_{1-\alpha}(S)$ whose pivot is uniform for every realized $r$, hence uniform conditional on
$\{\hat S=S\}$ by iterated expectations. This gives exact conditional coverage
$\Pr(\theta_S\in\mathcal C_{1-\alpha}(S)\mid\hat S=S)=1-\alpha$ for every $S\neq\varnothing$ and every
$\beta$, and hence exact unconditional coverage $\Pr(\theta_S\in\mathcal C_{1-\alpha}(\hat S)\mid\hat
S\neq\varnothing)=1-\alpha$, uniformly in $\beta$ and without any separation condition.
\end{theorem}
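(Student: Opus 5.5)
The plan is to reduce everything to Lemma~\ref{lem:poly}, applied to the stacked Gaussian vector $y=(\hat\beta',\omega')'$. Under Assumption~\ref{ass:carve}(i)--(ii), $y$ is jointly normal with mean $(\beta',0')'$ and block-diagonal covariance $\mathrm{diag}(\Sigma,\gamma\Sigma_{\mathrm{pre}})$, which is known. The target contrast is $\eta=(\ell_S',0')'$ restricted to the post block, so that $\eta'y=\hat\theta_S$ and $\eta'\mu=\theta_S$. When $\gamma=0$ the $\omega$ block is degenerate and is dropped; nothing below changes.

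The first step is to show that the selection event, refined as in Assumption~\ref{ass:carve}(iii), is a polyhedron in $y$. For a retained cohort $g$, the condition $\max_e\lvert\hat\beta_{g,\mathrm{pre}}(e)+\omega_g(e)\rvert\le c$ is already polyhedral: it is the $2E$ linear inequalities $\pm(\hat\beta_{g,\mathrm{pre}}(e)+\omega_g(e))\le c$. For a dropped cohort, the event $\max_e\lvert\cdot\rvert>c$ is a union over coordinates and signs, and this is the only non-convexity. Conditioning on a designated breaching coordinate $e_g$ and its sign $\sigma_g$ turns it into the single half-space $-\sigma_g(\hat\beta_{g,\mathrm{pre}}(e_g)+\omega_g(e_g))<-c$. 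To make these refined events partition $\{\hat S=S\}$, I fix a deterministic tie-breaking rule, for example the first breaching coordinate in a fixed order. Under that rule the additional conditions "no earlier coordinate breaches" are themselves linear inequalities. The refined event $E=\{Ay\le b\}$ is then an intersection of finitely many half-spaces, and $\{\hat S=S\}$ is a finite disjoint union of such events. Strict versus weak inequalities differ only on a null set and can be ignored.

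The second step applies Lemma~\ref{lem:poly} on each refined event $E$. This gives the truncated-normal law of $\hat\theta_S$ given $E$ and the remainder $r=y-\mathrm{Cov}(y,\eta'y)(\eta'\Sigma_y\eta)^{-1}\eta'y$. The variance is $\eta'\Sigma_y\eta=\ell_S'\Sigma_{\mathrm{post}}\ell_S=s_S^2$, because $\omega$ is independent of $\hat\beta_{\mathrm{post}}$. The truncation limits are the usual $\mathcal V^\pm$, namely the max/min over the rows of $A$ of $(b_j-(Ar)_j)/(Ac)_j$ according to the sign of $(Ac)_j$, where $c=\mathrm{Cov}(y,\eta'y)/s_S^2$. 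These limits depend on $\Sigma$ and $\gamma$ through $c$ and on the data through $r$, which is the claimed closed form. Inverting the pivot then works as follows. The pivot is monotone decreasing in the mean parameter (the truncated-normal family has monotone likelihood ratio), so the set $\{\theta:\alpha/2\le F^{[\mathcal V^-,\mathcal V^+]}_{\theta,s_S^2}(\hat\theta_S)\le1-\alpha/2\}$ is an interval. By uniformity of the pivot, this interval covers $\theta_S$ with probability exactly $1-\alpha$ given $(E,r)$. Iterated expectations over $r$ give exact coverage given $E$.

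The third step aggregates, and it is the point where I expect care is needed. One must check that conditional coverage given each refined event implies coverage given $\{\hat S=S\}$. It does, because $\{\hat S=S\}$ is a disjoint union of the refined events and coverage is exactly $1-\alpha$ on each piece, so the mixture also has coverage $1-\alpha$. The interval used is the one computed on the realized piece, which is how $\mathcal C_{1-\alpha}(S)$ is defined. A second mixture over the nonempty values of $S$ then gives $\Pr(\theta_{\hat S}\in\mathcal C_{1-\alpha}(\hat S)\mid\hat S\neq\varnothing)=1-\alpha$. No step uses the value of $\beta$ beyond the Gaussian model, and none uses the distance of population pre-trends from $c$, so the coverage is uniform in $\beta$ and needs no separation condition. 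The main subtlety I would watch is that events of probability zero must be excluded so the conditioning is well defined. Any refined event with $P(E)=0$ contributes nothing to the mixture.
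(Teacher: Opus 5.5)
Your proposal is correct and follows the paper's proof. It stacks $(\hat\beta,\omega)$, refines each dropped cohort's breach event into a single half-space so that $\{\hat S=S\}$ becomes a finite union of polyhedra, applies the polyhedral lemma with contrast $\ell_S$, inverts the truncated-normal pivot, and averages over the refined pieces and over nonempty $S$. Your first-breaching-coordinate tie-break is a slightly coarser partition than the paper's conditioning on the full active set, either one gives a valid polyhedral partition, and your appeal to monotonicity of the pivot in the mean supplies a step the paper leaves implicit.
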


\begin{proof}
For a retained cohort the screen $\max_e\lvert\hat\beta_{g,\mathrm{pre}}(e)+\omega_g\rvert\le c$ is a box, an
intersection of half-spaces. For a dropped cohort the complementary event $\max_e\lvert\cdot\rvert>c$ is a
union of half-spaces and is not itself convex. Conditioning on a breaching coordinate $e_g^\ast$ and its
sign selects the single half-space $\pm(\hat\beta_{g,\mathrm{pre}}(e_g^\ast)+\omega_g)>c$, and when several
coordinates breach we condition on the full active set. The conditioned selection event is thus an intersection
of half-spaces, a polyhedron in $(\hat\beta,\omega)$, for any number of pre-periods per cohort. This
conditioning only refines the event, so the conditional pivot stays exact and unconditional coverage is
recovered by averaging over the finer partition as well as over $S$. The
pair $(\hat\theta_S,\,\text{selection})$ is jointly Gaussian, and decomposing the constraint vector into
its component along $\hat\theta_S$ and an orthogonal remainder that is independent of $\hat\theta_S$, the
event fixes the remainder and confines $\hat\theta_S$ to an interval $[\mathcal V^-,\mathcal V^+]$ whose
endpoints are the tightest of the resulting linear bounds. This is Lemma~\ref{lem:poly} applied to the
conditioned event. Hence the conditional law of $\hat\theta_S$ is $\mathcal N(\theta_S,s_S^2)$ truncated to
that interval, the pivot is uniform on $[0,1]$ conditional on $\hat S=S$, and inverting it gives an
interval with conditional coverage $1-\alpha$ for every $S$ and every $\beta$. Averaging over $S$ gives
the same unconditional coverage, and since the finite-sample pivot is exact for known $\Sigma$ regardless
of where $\beta$ sits relative to the threshold, coverage is uniform. The endpoint claims in $\gamma$ hold
because $\gamma=0$ leaves the selection a deterministic function of $\hat\beta_{\mathrm{pre}}$, while as
$\gamma\to\infty$ the correlation between the selection variable and $\hat\theta_S$ vanishes, so
$[\mathcal V^-,\mathcal V^+]$ expands to the real line and the truncated law returns to
$\mathcal N(\theta_S,s_S^2)$.
\end{proof}

\begin{remark}[The truncation limits and the role of $\gamma$]\label{rem:carve-detail}
The limits $[\mathcal V^-,\mathcal V^+]$ are the intersection, over retained cohorts of the two-sided
constraints $\lvert\hat\beta_{g,\mathrm{pre}}+\omega_g\rvert\le c$ and over dropped cohorts of the one-sided
active constraint $\pm(\hat\beta_{g,\mathrm{pre}}(e_g^\ast)+\omega_g)>c$, of the intervals these place on
$\hat\theta_S$ once $r$ is held fixed. With a single pre-period per cohort, and when the dropped cohorts'
screening statistics are uncorrelated with $\hat\theta_S$, the retained-cohort constraints alone bind; in
general the dropped-cohort constraints also enter and are retained. The scale $\gamma$ is a power
parameter, not a validity parameter. At $\gamma=0$ the interval is fully conditional; intermediate $\gamma$
carves, recovering efficiency that full conditioning spends while retaining exact validity; and as
$\gamma\to\infty$ the noise decorrelates the selection from $\hat\theta_S$ but sends every retention
probability to zero, so $\hat S=\varnothing$ with probability approaching one. The useful regime is
intermediate $\gamma$, and on the empty event the procedure reports the ATT interval.
\end{remark}

Theorem~\ref{thm:carve} upgrades the pointwise oracle coverage of Remark~\ref{rem:oracle} to a
uniform guarantee, precisely across the local-to-threshold configurations that obstruct it. The guarantee is
exact for known $\Sigma$ and degrades to first-order validity when $\Sigma$ is consistently estimated
(Remark~\ref{rem:sigmahat}). That is the paper's single scope qualification, and we do not restate it
elsewhere. It does so for
the sampling and selection uncertainty in the reduced-form aggregate. The identification gap remains with
the level bound, which we now fold in to reach the causal target.

\begin{corollary}[Honest causal interval under data-driven selection]\label{cor:causal}
Let $\theta_{\hat S}^{\mathrm c}=\ell_{\hat S}'\tau_{\mathrm{post}}$ be the causal target on the selected set and
$B_{\hat S}=\sup_{\delta_{\mathrm{post}}\in\Delta^{\mathrm{Level}}(M)}\lvert\ell_{\hat S}'\delta_{\mathrm{post}}\rvert$
the worst-case residual bias the level bound allows, so that
$\lvert\theta_{\hat S}-\theta_{\hat S}^{\mathrm c}\rvert\le B_{\hat S}$ whenever
$\delta_{\hat S,\mathrm{post}}\in\Delta^{\mathrm{Level}}(M)$. For a single-period or convex-averaging functional
$a(\cdot)$, $B_{\hat S}=M$. Then the widened carved interval
\[
\mathcal C^{M}_{1-\alpha}(\hat S)\ :=\ \big[\,\inf\mathcal C_{1-\alpha}(\hat S)-B_{\hat S},\
\sup\mathcal C_{1-\alpha}(\hat S)+B_{\hat S}\,\big]
\]
covers the causal target, $\Pr\big(\theta_{\hat S}^{\mathrm c}\in\mathcal C^{M}_{1-\alpha}(\hat S)\big)\ge1-\alpha$,
uniformly in $\beta$ and without a separation condition, whenever
$\delta_{\hat S,\mathrm{post}}\in\Delta^{\mathrm{Level}}(M)$.
\end{corollary}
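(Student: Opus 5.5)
The plan is to derive the corollary from Theorem~\ref{thm:carve} by a deterministic widening argument, applied pointwise on each selection event. I will first prove the bias bound $\lvert\theta_{\hat S}-\theta^{\mathrm c}_{\hat S}\rvert\le B_{\hat S}$ and then transfer coverage from the reduced-form target to the causal one.

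\textbf{Bias bound.} Since $\hat\beta$ estimates $\beta=\tau+\delta$ by \eqref{eq:decomp}, the reduced-form aggregate decomposes as
\[
\theta_{\hat S}=\ell_{\hat S}'\beta_{\mathrm{post}}=\theta^{\mathrm c}_{\hat S}+\ell_{\hat S}'\delta_{\mathrm{post}}.
\]
So whenever $\delta_{\hat S,\mathrm{post}}\in\Delta^{\mathrm{Level}}(M)$, the gap between the two targets is at most the supremum defining $B_{\hat S}$. For the claim $B_{\hat S}=M$, note that $\ell_{\hat S}$ has nonnegative entries summing to one. This holds because the cohort weights $w_g>0$ are normalized over $\hat S$, and the single-period or convex-averaging $a(\cdot)$ contributes convex coefficients. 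Hence $\lvert\ell'\delta\rvert\le\sum_j\ell_j\lvert\delta_j\rvert\le M$, and equality is attained at $\delta_{\mathrm{post}}\equiv M$. One point needs care. The restriction must be read as applying to the coordinates that $\ell_{\hat S}$ loads on, namely the retained cohorts' post-period violations, so that the supremum is taken over exactly those coordinates.

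\textbf{Coverage transfer.} On any realization with $\hat S=S\neq\varnothing$ and $\theta_S\in\mathcal C_{1-\alpha}(S)$, we have $\inf\mathcal C-B_S\le\theta_S-B_S\le\theta^{\mathrm c}_S$. Symmetrically, $\theta^{\mathrm c}_S\le\sup\mathcal C+B_S$. This yields the event inclusion
\[
\{\theta_S\in\mathcal C_{1-\alpha}(S)\}\subseteq\{\theta^{\mathrm c}_S\in\mathcal C^{M}_{1-\alpha}(S)\}.
\]
The same inclusion holds after intersecting with the finer conditioning on active constraints and signs. Taking conditional probabilities and invoking the exact conditional coverage of Theorem~\ref{thm:carve} gives conditional coverage of at least $1-\alpha$ for every $S$. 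Averaging over $S$ gives the unconditional statement. This average is understood conditional on $\hat S\neq\varnothing$, as in the theorem; on the empty event the procedure reports the ATT interval per Remark~\ref{rem:carve-detail}. The bound is uniform in $\beta$ because the theorem's guarantee is uniform, and the widening step is deterministic. No separation condition enters anywhere.

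\textbf{Main obstacle.} The delicate step is the interaction between the data-dependent set and the hypothesis ``$\delta_{\hat S,\mathrm{post}}\in\Delta^{\mathrm{Level}}(M)$'', which is itself random through $\hat S$. I would handle this by stating the result setwise. For each fixed $S$ with $\delta_{S,\mathrm{post}}\in\Delta^{\mathrm{Level}}(M)$, the conditional coverage given $\hat S=S$ is at least $1-\alpha$. Unconditional coverage then follows if this holds for every $S$ the rule can select with positive probability. The cleanest sufficient condition is that every cohort's $\lvert\delta_{g,\mathrm{post}}\rvert\le M$ on the relevant set. Otherwise the guarantee is read as conditional on the selected set satisfying the restriction. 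This adds no probabilistic work, since $\delta$ is a fixed parameter and the restriction is a deterministic property of $S$, not of the noise. I will make that explicit, so that conditioning on $\{\hat S=S\}$ never interacts with the bias bound.
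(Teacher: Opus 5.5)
Your proposal is correct and follows the paper's own argument: the bias bound $\lvert\theta_{\hat S}-\theta^{\mathrm c}_{\hat S}\rvert\le B_{\hat S}$ holds deterministically, so it yields the event inclusion $\{\theta_{\hat S}\in\mathcal C_{1-\alpha}(\hat S)\}\subseteq\{\theta^{\mathrm c}_{\hat S}\in\mathcal C^{M}_{1-\alpha}(\hat S)\}$, and Theorem~\ref{thm:carve} together with monotonicity of probability then delivers the bound. Your setwise handling of the random hypothesis and of the $\hat S\neq\varnothing$ qualifier is a little more careful than the paper's one-line proof, and it is consistent with what the paper defers to Remark~\ref{rem:calib}.
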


\begin{proof}
See Appendix~\ref{app:proofs}.
\end{proof}

The widened interval $\mathcal C^{M}_{1-\alpha}(\hat S)$ is conservative. It superposes the
exact carved interval for the reduced-form $\theta_{\hat S}$ on the worst-case additive bias $B_{\hat S}$ rather
than folding sampling, selection, and identification uncertainty into a single length-optimal object, and is
accordingly not the near-optimal fixed-length interval of \citet{armstrong2018}. That construction and its
optimality belong to the fixed-set case of Theorem~\ref{thm:width}, where selection does not randomize the target
and the bias enters only through the standard error. Under data-driven selection the superposition buys a single
interval valid against both the selection and the residual violation, at a modest excess length. The carved
interval's exact coverage is the content of Theorem~\ref{thm:carve}, and the multi-cohort panel study of
Section~\ref{sec:calib} exhibits the understatement of naive uncertainty under estimated selection that
carving is built to correct.

The second issue is that the bound $M$ must accommodate a random selected set. The FLCI at $M$ is valid
only if the selected set's residual violation lies in $\Delta^{\mathrm{Level}}(M)$. Since that violation
is itself random, unconditional coverage requires $M$ to dominate it with high probability.

\begin{remark}[Calibration to the random selected set]\label{rem:calib}
Under data-driven selection the selected aggregate's residual violation is random, so the level bound $M$
may fail to dominate it. Write $\pi_M=\Pr\big(\delta_{\hat S,\mathrm{post}}\notin\Delta^{\mathrm{Level}}(M)\big)$
for the probability that the realized violation escapes the bound. Then unconditional coverage is at least
$1-\alpha-\pi_M$. The carved interval covers the reduced-form $\theta_{\hat S}$ with probability $1-\alpha$
by Theorem~\ref{thm:carve}, and its widening covers the causal target on the event that the bound holds, so
$\{\theta_{\hat S}\in\mathcal C_{1-\alpha}(\hat S)\}\cap\{\delta_{\hat S,\mathrm{post}}\in
\Delta^{\mathrm{Level}}(M)\}\subseteq\{\theta_{\hat S}^{\mathrm c}\in\mathcal C^{M}_{1-\alpha}(\hat S)\}$ and a
union bound gives the claim. Coverage thus approaches nominal only as $\pi_M\to0$. Calibrating $M$
to the mean residual violation typically leaves $\pi_M$ bounded away from zero and undercovers. Taking $M$ at
an upper quantile of the residual-violation distribution makes $\pi_M$ small but not zero, so it delivers
coverage $1-\alpha-\pi_M$ rather than exactly $1-\alpha$ unless the error budget is widened to absorb
$\pi_M$. That distribution is unidentified, which is
why the reported object is the breakdown curve rather than a single $M$.
\end{remark}

The practical implication is not a recipe for $M$ but a warning against one. Because the violation $M$ must
dominate is itself random, no single $M$ is at once known and unconditionally valid. Calibrating $M$ to the
mean residual violation undercovers, since the realized violation exceeds it in a nontrivial fraction of
samples, while the $M$ that would deliver a target coverage sits at an upper quantile of a distribution the
design cannot identify. This is why the reported object is the breakdown curve, the map from $M$ to its
interval, and the value $M^\ast$, read conservatively, rather than a point calibration of $M$. Each interval
on the curve is honest at its own $M$ by Corollary~\ref{cor:causal}, and Remark~\ref{rem:calib} adds only that
the curve must be read as a whole, since a bound calibrated to the typical selected set would place a falsely
narrow interval at the small-$M$ end.

Finally, we caution against one otherwise-natural choice of $\Delta$. The relative-magnitudes restriction
$\Delta^{RM}(\bar M)$ of \citet{rambachan2023}, following \citet{manski2018}, bounds post-treatment
violations by $\bar M$ times the observed maximal pre-treatment violation, so its identified-set half-width
is proportional to the observed pre-trend. When pre-trends are uninformative, they are small while the true
violation need not be, so the identified set narrows toward a point even as the bias does not. Coverage then
fails not from undercoverage within the maintained class but because the truth can lie outside it, the small
observed pre-trend implying a relative-magnitudes class that excludes the true post-treatment violation. The
fixed bound $\Delta^{\mathrm{Level}}(M)$, set by the
researcher rather than by the observed pre-trends, avoids tying the admissible violation to the pre-trend,
though it too is honest only when its chosen $M$ contains the true violation.

\begin{remark}[Estimated covariance]\label{rem:sigmahat}
Theorem~\ref{thm:carve} treats $\Sigma$ as known. In practice one plugs in a consistent $\hat\Sigma$, which
enters twice, in the aggregate standard error $s_S$ and in the selection geometry
$[\mathcal V^-,\mathcal V^+]$ and carving scale, so it is more consequential here than in a fixed-target
sensitivity analysis, where $\Sigma$ enters only the standard error. If $\hat\Sigma\xrightarrow{p}\Sigma$ and
the separation condition of Remark~\ref{rem:oracle} holds, the truncation limits and scale are continuous in
$\hat\Sigma$ and the selection event is correctly classified with probability approaching one, so by Slutsky
the pivot converges to $\mathrm{Uniform}[0,1]$. The known-$\Sigma$ exactness degrades to first-order valid
coverage, which the panel simulation of Section~\ref{sec:calib}, estimating the covariance from the data,
bears out. Separation is not cosmetic. Within a $\hat\Sigma$-sampling neighbourhood of the
threshold a cohort's inclusion is itself estimated, the estimated-$\Sigma$ shadow of the Leeb--P\"otscher
obstruction, so a researcher unwilling to assume it should hold the carving scale $\gamma$ away from its
degenerate limit or use the independent split.
\end{remark}

\section{When the trade pays}\label{sec:dominance}

The deliverable of this section is a rule the practitioner reports alongside both honest intervals. Prefer
the credible-subpopulation LATT when narrowing lowers risk, and the ATT otherwise. Three forces set whether
narrowing helps: the violation the screen removes, the variance that dropping a cohort costs, and the
heterogeneity between the retained subpopulation and the full one. The sign of the advantage turns on their
balance, not on the informativeness of the pre-trends, which governs only how much of a fixed-sign
advantage is realized. The mean-squared-error account of these forces supplies the intuition and the
explicit thresholds, the two-cohort law $V^2>\gamma^2+2\sigma^2$, the estimable diagnostic
$D^2>\Delta\mathrm{Var}$, and the breakdown value $\Gamma^\ast$ for the unidentified composition gap. It is
developed in Appendix~\ref{app:mse}, whose results we cite as needed. Here we give the operative test, the
width comparison of the honest intervals themselves, since the intervals, not the point estimators, are the
recommended output.

The comparison of honest intervals can be made exact in a two-cohort model, in the informative regime where
the screen discards the confounded cohort with probability approaching one, so that the LATT selects the
clean cohort alone. This is the regime in which the point-estimate advantage attains its ceiling
(Appendix~\ref{app:mse}), and it is where the method is meant to help. Both procedures report the
level-bound fixed-length interval \eqref{eq:flci-hw}. The LATT, using the clean cohort alone, is unbiased
with standard error $\sigma$ and a residual-violation bound $M_L$, giving half-width
$\sigma\,\mathrm{cv}_\alpha(M_L/\sigma)$. The ATT estimator, using both cohorts, has standard error
$\sigma/\sqrt2$ but must set its bound to cover its own aggregate violation, $M_A=V/2$, giving half-width
$(\sigma/\sqrt2)\,\mathrm{cv}_\alpha(V/(\sqrt2\,\sigma))$. Under homogeneous effects, so that the
clean-cohort LATT and the full ATT share a common target, each interval covers that target at its
nominal level by construction, since each bound dominates its own aggregate's violation, so the honest
comparison is not one of coverage but of expected length at equal coverage. Under heterogeneity the two
intervals honestly cover different objects, and the comparison is one of length only once the composition
gap is folded into the LATT interval evaluated as inference for the ATT.

We state the comparison in general, since it holds at
any cohort count and isolates the single quantity, the violation carried by the dropped
cohorts, that decides it. Write the ATT estimator as
$\hat\theta_{\mathcal G}=\ell_{\mathcal G}'\hat\beta_{\mathrm{post}}$ with aggregate standard
error $s_{\mathcal G}^2=\ell_{\mathcal G}'\Sigma_{\mathrm{post}}\ell_{\mathcal G}$ and honest
bias bound $B_{\mathcal G}$ dominating the full set's violation
$\lvert\ell_{\mathcal G}'\delta_{\mathrm{post}}\rvert$, and the LATT estimator as
$\hat\theta_S=\ell_S'\hat\beta_{\mathrm{post}}$ with carved standard error $s_S$ and residual
bound $B_S$. Both honest intervals are the fixed-length construction \eqref{eq:flci-hw}.

\begin{theorem}[Width dominance]\label{thm:width}
Suppose both honest intervals are the fixed-length construction \eqref{eq:flci-hw}, the LATT interval under
ex-ante selection, equivalently the \citet{rambachan2023} interval on the retained set, with half-width
$h_S=h(B_S,s_S)$, and the ATT interval $h_{\mathcal G}=h(B_{\mathcal G},s_{\mathcal G})$. Then
\[
h_S<h_{\mathcal G}\quad\Longleftrightarrow\quad
s_S\,\mathrm{cv}_\alpha\!\big(B_S/s_S\big)\;<\;
s_{\mathcal G}\,\mathrm{cv}_\alpha\!\big(B_{\mathcal G}/s_{\mathcal G}\big).
\]
Holding $(s_{\mathcal G},s_S,B_S)$ fixed, if $s_S\ge s_{\mathcal G}$ there is a unique threshold
\[
B_{\mathcal G}^\ast\;=\;s_{\mathcal G}\,\mathrm{cv}_\alpha^{-1}\!\Big(\tfrac{s_S}{s_{\mathcal G}}\,
\mathrm{cv}_\alpha\!\big(B_S/s_S\big)\Big)
\]
such that the credible-subpopulation interval is strictly shorter than the ATT interval if and
only if $B_{\mathcal G}>B_{\mathcal G}^\ast$. If instead $s_S<s_{\mathcal G}$, the LATT interval can be
shorter already at $B_{\mathcal G}=0$ and no positive crossing threshold exists. Since $B_{\mathcal G}$
carries the dropped cohorts' aggregate violation $B_{\mathcal F}$ while $B_S$ does not, this is a threshold
on $B_{\mathcal F}$, and when restricting to $S$ raises the aggregate variance,
$s_S>s_{\mathcal G}$, the threshold is strictly positive, so a minimum dropped violation is
required before bias relief outweighs the variance cost.
\end{theorem}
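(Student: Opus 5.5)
The plan is to treat the half-width map $h(B,s)=s\,\mathrm{cv}_\alpha(B/s)$ of \eqref{eq:flci-hw} as the single object to analyze. I will first record its monotonicity properties and then read every claim of the theorem off them. The first display is a definition: $h_S<h_{\mathcal G}$ is literally the stated inequality once both intervals are the fixed-length construction. Under ex-ante selection the LATT interval is the \citet{rambachan2023} interval on the fixed set $S$, so $h_S=h(B_S,s_S)$ with no selection adjustment.

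The substantive step is a lemma on $\mathrm{cv}_\alpha$. The map $t\mapsto\mathrm{cv}_\alpha(t)$ is continuous and strictly increasing on $[0,\infty)$, with $\mathrm{cv}_\alpha(0)=z_{1-\alpha/2}$ and $\mathrm{cv}_\alpha(t)\to\infty$. To show this, write $\Pr(|\mathcal N(t,1)|\le x)=\Phi(x-t)-\Phi(-x-t)$. Its derivative in $t$ is $-\phi(x-t)+\phi(x+t)$, which is negative for $t,x>0$. The quantile $x=\mathrm{cv}_\alpha(t)$ therefore strictly increases, and the implicit function theorem gives continuity. Consequently $\mathrm{cv}_\alpha$ is a bijection from $[0,\infty)$ onto $[z_{1-\alpha/2},\infty)$, so $\mathrm{cv}_\alpha^{-1}$ is well defined and strictly increasing on that range. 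For fixed $s_{\mathcal G}$, the map $B_{\mathcal G}\mapsto h(B_{\mathcal G},s_{\mathcal G})$ is then continuous and strictly increasing from $s_{\mathcal G}z_{1-\alpha/2}$ to $\infty$.

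With $(s_{\mathcal G},s_S,B_S)$ fixed, $h_S$ is a constant. Take the case $s_S\ge s_{\mathcal G}$. Then $h_S=s_S\mathrm{cv}_\alpha(B_S/s_S)\ge s_{\mathcal G}z_{1-\alpha/2}$, so $h_S/s_{\mathcal G}$ lies in the range of $\mathrm{cv}_\alpha$. Strict monotonicity of the ATT half-width in $B_{\mathcal G}$ then gives $h_S<h_{\mathcal G}$ if and only if $B_{\mathcal G}>s_{\mathcal G}\mathrm{cv}_\alpha^{-1}(h_S/s_{\mathcal G})=B^\ast_{\mathcal G}$. Uniqueness follows from strict monotonicity. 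For strict positivity when $s_S>s_{\mathcal G}$, note that $h_S\ge s_S z_{1-\alpha/2}>s_{\mathcal G}z_{1-\alpha/2}$, and $\mathrm{cv}_\alpha^{-1}$ is strictly increasing with $\mathrm{cv}_\alpha^{-1}(z_{1-\alpha/2})=0$, so $B^\ast_{\mathcal G}>0$.

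Now take the case $s_S<s_{\mathcal G}$. If $B_S$ is small enough that $h_S<s_{\mathcal G}z_{1-\alpha/2}=h(0,s_{\mathcal G})$, for instance at $B_S=0$, then $h_S<h_{\mathcal G}$ for every $B_{\mathcal G}\ge0$ and no positive crossing exists. This is the "can be shorter already" clause. I expect this clause to be the main obstacle, or at least the place needing care about its precise quantifier. Since $h_S$ may exceed $h(0,s_{\mathcal G})$ for large $B_S$, I would state it as existence, matching the word "can". Finally, I translate the threshold to the dropped cohorts. Decompose $\ell_{\mathcal G}'\delta_{\mathrm{post}}$ into the retained part and the weighted dropped part, so that $B_{\mathcal G}$ is nondecreasing in $B_{\mathcal F}$ with $B_S$ held fixed. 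The condition $B_{\mathcal G}>B^\ast_{\mathcal G}$ is then a threshold condition on $B_{\mathcal F}$, which I would make explicit using the additive level-bound form $B_{\mathcal G}=\frac{W_S}{W_{\mathcal G}}B_S+\frac{W_{\mathcal F}}{W_{\mathcal G}}B_{\mathcal F}$.
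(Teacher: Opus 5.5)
Your proposal is correct and follows the paper's proof. Both arguments rest on $\mathrm{cv}_\alpha$ being a continuous, strictly increasing bijection of $[0,\infty)$ onto $[z_{1-\alpha/2},\infty)$, invert $s_{\mathcal G}\,\mathrm{cv}_\alpha(B/s_{\mathcal G})=h_S$ when $s_S\ge s_{\mathcal G}$ using $h_S\ge s_S z_{1-\alpha/2}\ge s_{\mathcal G}z_{1-\alpha/2}$, and read the $s_S<s_{\mathcal G}$ clause as a possibility rather than a universal claim. Where they differ, you do more: you verify the monotonicity of $\mathrm{cv}_\alpha$ via $\partial_t[\Phi(x-t)-\Phi(-x-t)]=\phi(x+t)-\phi(x-t)<0$, and you spell out why $B^\ast_{\mathcal G}>0$ when $s_S>s_{\mathcal G}$, both of which the paper simply asserts.
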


\begin{proof}
See Appendix~\ref{app:proofs}.
\end{proof}

Under data-driven selection the LATT interval is instead the carved interval of Theorem~\ref{thm:carve},
whose truncated-normal inversion gives a data-dependent, generally asymmetric length rather than the
fixed half-width $h(B_S,s_S)$. Here the width comparison does not extend, and for a definite reason. The
carved interval inverts a truncated-normal pivot, and by \citet{kivaranovicleeb2021} such an interval has
\emph{infinite} conditional expected length, so an expected-length threshold is not merely unproven but
ill-posed; Proposition~\ref{prop:nolength} records this. The honest data-driven comparison is therefore by
coverage, both intervals valid by Theorem~\ref{thm:carve} and Corollary~\ref{cor:causal}, together with a
robust length summary such as the median, which the simulation of Section~\ref{sec:calib} reports. The
fixed-length statement of Theorem~\ref{thm:width} is accordingly stated for ex-ante selection, where the
intervals are the fixed-length construction \eqref{eq:flci-hw} and the threshold is exact.

\begin{proposition}[No expected-length dominance under data-driven selection]\label{prop:nolength}
Maintain Assumption~\ref{ass:carve}. Under data-driven selection the carved interval
$\mathcal C_{1-\alpha}(\hat S)$ of Theorem~\ref{thm:carve} is the polyhedral inversion of the
truncated-normal pivot $F_{\theta_S,s_S^2}^{[\mathcal V^-,\mathcal V^+]}$, and for every $\beta$ and every
$S\neq\varnothing$,
\[
\mathbb E\big[\,\lvert\mathcal C_{1-\alpha}(S)\rvert\ \big|\ \hat S=S\,\big]\;=\;\infty.
\]
Consequently no expected-length ordering between the carved LATT interval and the ATT interval exists, and
the width comparison of Theorem~\ref{thm:width} is well-posed only for the ex-ante fixed-length intervals.
The randomization scale does not remove this, since every finite $\gamma$ conditions on $\{\hat S=S\}$ and
the active screen coordinates, leaving the pivot truncated normal.
\end{proposition}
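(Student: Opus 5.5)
We prove the infinite expected length by reducing to the scalar truncated-normal case treated by \citet{kivaranovicleeb2021}. The argument has three steps: verify the conditional law, invoke their result for a fixed truncation interval, and integrate over the random interval.

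\emph{Step 1: conditional law.} Condition on $\{\hat S=S\}$, on the active screen coordinates and their signs, and on the Gaussian remainder $r$. By Theorem~\ref{thm:carve} and Lemma~\ref{lem:poly}, $\hat\theta_S$ is then $\mathcal{TN}(\theta_S,s_S^2,[\mathcal V^-,\mathcal V^+])$ with $[\mathcal V^-,\mathcal V^+]$ fixed. Once $r$ and the active set are held fixed, $\mathcal C_{1-\alpha}(S)$ is exactly the equal-tailed inversion of that pivot on a fixed interval. This is the setting of \citet{kivaranovicleeb2021}: for a truncated normal with truncation set bounded on at least one side, the pivot-inverted interval has infinite expected length. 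The reason is that as the observation approaches a finite truncation endpoint, the confidence bound on that side diverges, roughly like $s_S^2/\mathrm{dist}$. That divergence is non-integrable against a truncated density that stays positive at the endpoint.

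\emph{Step 2: bounded truncation with positive probability.} We must check that, with positive conditional probability, at least one of $\mathcal V^-,\mathcal V^+$ is finite. For a retained cohort, the box constraint $\lvert\hat\beta_{g,\mathrm{pre}}(e)+\omega_g(e)\rvert\le c$ bounds $\hat\theta_S$ on both sides whenever its coefficient on $\hat\theta_S$, namely $\mathrm{Cov}(\hat\beta_{g,\mathrm{pre}}(e),\hat\theta_S)/s_S^2$, is nonzero. Such a nonzero coefficient arises whenever pre- and post-period estimation errors are correlated, which holds generically under Assumption~\ref{ass:main} for any $\Sigma$ with the relevant covariance entries nonzero. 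I expect this step to be the main obstacle. If every selection variable is uncorrelated with $\hat\theta_S$, the truncation set is all of $\mathbb R$ and the interval reduces to the finite Gaussian one. The proposition's ``for every $\beta$'' therefore implicitly requires that the selection not be independent of the target. The sensible fix is to state this nondegeneracy explicitly and note that it is the generic case for $\gamma<\infty$. The dependence on $\gamma$ enters only through the covariance $\Sigma_{\mathrm{pre}}(1+\gamma)$ of the selection variable, which leaves the coefficients nonzero for every finite $\gamma$.

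\emph{Step 3: integration and consequences.} Given Step 2, the event on which the conditional expectation is infinite has positive probability, so the tower property gives $\mathbb E[\lvert\mathcal C_{1-\alpha}(S)\rvert\mid\hat S=S]=\infty$. Two consequences follow directly. First, comparing this with the finite fixed half-width $h(B_{\mathcal G},s_{\mathcal G})$ of the ATT interval from \eqref{eq:flci-hw} shows that no expected-length ordering of the threshold type in Theorem~\ref{thm:width} can be formulated: one side is always $+\infty$. Second, the claim about $\gamma$ follows from the observation already made in Step 2: each finite $\gamma$ still conditions on a polyhedron with finite limits, so the pivot remains truncated normal.
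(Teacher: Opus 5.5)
Your proof follows the paper's own route. You take the polyhedral conditional law from Theorem~\ref{thm:carve}, apply the Kivaranovic--Leeb result to a truncation set bounded on at least one side, check that this occurs with positive probability, and average by the tower property over the finer conditioning partition. Two points where you are more precise than the paper: a retained cohort's box constraint with nonzero coefficient bounds $\hat\theta_S$ on both sides on the whole selection event, which is sharper than the paper's ``generically one face binds''; and your nondegeneracy caveat is warranted, since if every pre-period coordinate entering the selection event is uncorrelated with $\hat\theta_S$ the truncation is all of $\mathbb R$, the carved interval is the Gaussian one, and the ``for every $\beta$'' claim fails, a case the paper's proof covers only with the word ``generically''.
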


\begin{proof}
See Appendix~\ref{app:proofs}.
\end{proof}

Specializing to equal weights makes the threshold explicit. Let $K$ equally weighted cohorts with common
per-cohort post-treatment variance $\sigma^2$ and
independent estimates comprise $k$ clean cohorts ($V_g=0$), retained, and $K-k$ confounded cohorts
(violation $V$), dropped in the informative limit, with the retained set truly flat, $B_S=0$.
Writing $f=(K-k)/K$, so that $s_{\mathcal G}=\sigma/\sqrt K$, $s_S=\sigma/\sqrt k$, and
$B_{\mathcal G}=fV$,
\[
h_S<h_{\mathcal G}\quad\Longleftrightarrow\quad
\mathrm{cv}_\alpha\!\Big(\tfrac{fV\sqrt K}{\sigma}\Big)>\sqrt{\tfrac{K}{k}}\,z_{1-\alpha/2}
\quad\Longleftrightarrow\quad
V>V^\ast:=\frac{\sigma}{f\sqrt K}\,
\mathrm{cv}_\alpha^{-1}\!\Big(\sqrt{\tfrac{K}{k}}\,z_{1-\alpha/2}\Big),
\]
equivalently $B_{\mathcal F}=fV>B_{\mathcal F}^\ast=(\sigma/\sqrt K)\,
\mathrm{cv}_\alpha^{-1}\big(\sqrt{K/k}\,z_{1-\alpha/2}\big)$. At $K=2$, $k=1$ this is $V>1.59\,\sigma$
(at $\alpha=0.05$).

Two features carry the economics. The width threshold $V>1.59\,\sigma$ exceeds the mean-squared-error
threshold $V>\sqrt2\,\sigma\approx1.41\,\sigma$ of Remark~\ref{rem:inform}, because the fixed-length
interval charges for the standard-error inflation of dropping a cohort, not only for its variance, and
coverage is never the margin, since both procedures are honest, the ATT pays for honesty in a width that
grows linearly in $V$, whereas the LATT pays a fixed $2\sigma\,z_{1-\alpha/2}$ set by its variance alone,
the price-of-honesty reading of Figure~\ref{fig:layer2}. Allowing the retained set a residual bound
$M_L>0$ rather than exact flatness, the LATT interval carries standard error $\sigma$ and bound $M_L$
against the ATT's $\sigma/\sqrt2$ and $M_A=V/2$, so by \eqref{eq:flci-hw} it is shorter if and only if
$\mathrm{cv}_\alpha(M_L/\sigma)<\tfrac1{\sqrt2}\,\mathrm{cv}_\alpha(V/(\sqrt2\,\sigma))$. A residual
$M_L>0$, and heterogeneity $\gamma\neq0$ when the target is the ATT (which adds $\lvert\gamma\rvert/2$ to
the LATT's effective bias), both raise the threshold, in the direction the point analysis already
indicated.

The width comparison assembles into a rule the practitioner can run. Report both honest
intervals, for the ATT and for the LATT, and prefer whichever is shorter. Because each interval covers its
own target whatever the data say, ranking them can misfire on efficiency but never on coverage.

\begin{proposition}[A feasible and honest switching rule]\label{prop:switch}
Suppose the researcher reports the honest level-bound interval for both the ATT and the LATT, with
half-widths $h_{\mathcal G}=h(B_{\mathcal G},s_{\mathcal G})$ and $h_S=h(B_S,s_S)$ of \eqref{eq:flci-hw},
together with the rule. Prefer the LATT when its honest interval is strictly shorter, $h_S<h_{\mathcal G}$,
and the ATT otherwise. Then the following hold.
\begin{enumerate}
\item[(i)] (Honesty is unconditional.) Each reported interval covers its own target at level
$1-\alpha$ for every $\beta$, by the validity of the level-bound interval (carved, under data-driven
selection, by Theorem~\ref{thm:carve}). Since both are always reported, the rule, however it
decides, cannot affect coverage. It ranks the two honest intervals and is an efficiency device only.
\item[(ii)] (Feasibility.) The rule is computable from $\hat\Sigma$ and the chosen bounds alone. By
Theorem~\ref{thm:width}, $h_S<h_{\mathcal G}$ if and only if the ATT's honest bound $B_{\mathcal G}$, which
carries the dropped cohorts' aggregate violation $B_{\mathcal F}$, exceeds the threshold
$B_{\mathcal G}^\ast$, so a minimum dropped violation is required before the LATT interval is the shorter
one. The gap between the two point estimators $\hat D=\hat\theta_{\mathcal G}-\hat\theta_{\hat S}$, testable
with the carved standard error of Theorem~\ref{thm:carve} applied to the contrast
$(\ell_{\mathcal G}-\ell_S)'\hat\beta_{\mathrm{post}}$, consistently estimates that dropped violation net
of the composition gap in the informative limit, when the retained set's own residual violation is
negligible.
\item[(iii)] (Sensitivity under heterogeneity.) The two intervals target different objects when the
retained and discarded cohorts differ in effect, so the ranking can misfire on efficiency but never on
coverage. The one quantity the data cannot identify is the composition gap
$\Gamma=\theta_{\hat S}-\theta_{\mathcal G}$. We report it as a breakdown $\Gamma^\ast$ in the same idiom as
$M^\ast$, the amount by which the discarded cohorts' effects must differ from the retained cohorts' to
overturn the reading, with its explicit form $\Gamma^\ast=(\Delta\mathrm{Var}-D^2)/(2D)$ derived in
Appendix~\ref{app:mse}. This is a signed threshold defined for $D\neq0$, with dominance requiring
$\Gamma>\Gamma^\ast$ when $D>0$ and $\Gamma<\Gamma^\ast$ when $D<0$. At $D=0$ the advantage is
$-\Delta\mathrm{Var}\le0$, so no finite breakdown exists, and the plug-in $\hat\Gamma^\ast$ is unstable when
$\hat D$ is near zero.
\end{enumerate}
\end{proposition}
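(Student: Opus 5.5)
The plan is to prove the three parts in order, each resting on an earlier result. Part (i) takes the least work. For the ATT interval, the target is a fixed set $\mathcal G$, so the fixed-length construction \eqref{eq:flci-hw} applies directly. Its bias bound $B_{\mathcal G}$ dominates $\lvert\ell_{\mathcal G}'\delta_{\mathrm{post}}\rvert$, and coverage at level $1-\alpha$ for every $\beta$ then follows from the \citet{armstrong2018}/\citet{rambachan2023} validity: $\hat\theta_{\mathcal G}-\theta_{\mathcal G}^{\mathrm c}$ is $\mathcal N(b,s_{\mathcal G}^2)$ with $\lvert b\rvert\le B_{\mathcal G}$, and $\mathrm{cv}_\alpha(B/s)$ is by definition the $(1-\alpha)$ quantile of $\lvert\mathcal N(t,1)\rvert$, which is increasing in $\lvert t\rvert$.

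For the LATT, the argument splits by regime. Under ex-ante selection the same argument applies on $S$. Under data-driven selection I will invoke Theorem~\ref{thm:carve} together with Corollary~\ref{cor:causal}. The rule is a function of the data that selects which of two intervals to headline, but both intervals are reported in every sample. The coverage event for each interval is therefore unchanged by the rule: for each target, $\Pr(\text{target}\in\text{its interval})$ does not involve the rule's indicator. That is the whole content of ``cannot affect coverage.''

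Part (ii) is an assembly step. First, feasibility: $s_{\mathcal G}$ and $s_S$ are quadratic forms in $\hat\Sigma$, and $B_S,B_{\mathcal G}$ are determined by the chosen $M$ (equal to $M$ for single-period or convex-averaging $a(\cdot)$). The comparison $h_S<h_{\mathcal G}$ is therefore computable from $\hat\Sigma$ and the chosen bounds alone. Second, the threshold characterization follows verbatim from Theorem~\ref{thm:width}, using the monotonicity of $\mathrm{cv}_\alpha$ in its argument so that $\mathrm{cv}_\alpha^{-1}$ is well defined on $[z_{1-\alpha/2},\infty)$. Third, the claim about $\hat D$ needs a decomposition of the expected contrast:
\[
\mathbb E\hat D=(\ell_{\mathcal G}-\ell_S)'(\tau_{\mathrm{post}}+\delta_{\mathrm{post}})=-\Gamma+\ell_{\mathcal G}'\delta_{\mathrm{post}}-\ell_S'\delta_{\mathrm{post}}.
\]
In the informative limit, Proposition~\ref{prop:pretest} gives $\hat S\to S^\ast$ with detectable confounded cohorts dropped, and $\ell_S'\delta_{\mathrm{post}}$ is negligible by hypothesis. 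So $\hat D$ estimates the dropped cohorts' weighted violation (the $B_{\mathcal F}$ component of $\ell_{\mathcal G}'\delta$) net of $\Gamma$, and consistency follows from $\hat\beta\to\beta$. Testability with the carved standard error follows by applying Lemma~\ref{lem:poly} to the contrast $\eta=(0,\ell_{\mathcal G}-\ell_S)$ on the same conditioned polyhedron used in Theorem~\ref{thm:carve}.

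Part (iii) leans on Appendix~\ref{app:mse}, taking its advantage function as given in the form $\Delta\mathrm{Var}-D^2-2D\Gamma$ (or its sign-equivalent). Setting this to zero and solving the linear equation in $\Gamma$ gives $\Gamma^\ast=(\Delta\mathrm{Var}-D^2)/(2D)$ for $D\neq0$. The direction of the inequality flips with the sign of $D$ when dividing by $2D$. At $D=0$ the advantage reduces to $-\Delta\mathrm{Var}\le0$, which is independent of $\Gamma$, so no finite breakdown exists. The instability of the plug-in $\hat\Gamma^\ast$ near $D=0$ follows from the continuous-mapping theorem failing at the pole $D=0$: the derivative of $\Gamma^\ast$ in $D$ is unbounded there.

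The main obstacle I expect is the data-driven case of (i). There the LATT target $\theta_{\hat S}$ is itself random, and coverage of the causal target requires $\delta_{\hat S,\mathrm{post}}\in\Delta^{\mathrm{Level}}(M)$. Remark~\ref{rem:calib} shows this bound can fail with probability $\pi_M$. I would therefore state (i) as holding whenever the maintained class contains the realized violation, which is the hypothesis of Corollary~\ref{cor:causal}, noting that otherwise coverage is at least $1-\alpha-\pi_M$. This keeps the honesty claim exact at each point of the breakdown curve.
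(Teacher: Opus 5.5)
Your proposal follows the paper's proof part for part: coverage as a property of each interval alone for (i), Theorem~\ref{thm:width} plus the mean decomposition $\mathbb E\hat D=-\Gamma+\ell_{\mathcal G}'\delta_{\mathrm{post}}-\ell_S'\delta_{\mathrm{post}}$ for (ii), and the linear-in-$\Gamma$ advantage of Appendix~\ref{app:mse} for (iii). Your explicit FLCI-validity argument and the $\delta_{\hat S,\mathrm{post}}\in\Delta^{\mathrm{Level}}(M)$ caveat (coverage $1-\alpha-\pi_M$ otherwise) spell out what the paper's proof leaves implicit. One slip to fix: the advantage is $D^2+2D\Gamma-\Delta\mathrm{Var}$, not $\Delta\mathrm{Var}-D^2-2D\Gamma$ as you wrote (your own value $-\Delta\mathrm{Var}$ at $D=0$ already uses the correct sign), and this sign is what makes dominance read $\Gamma>\Gamma^\ast$ rather than $\Gamma<\Gamma^\ast$ when $D>0$.
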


\begin{proof}
See Appendix~\ref{app:proofs}.
\end{proof}

Proposition~\ref{prop:switch} turns the informal ``report the LATT when pre-trends are informative'' into a
rule the data can run, and separates what they decide from what they cannot. The width comparison and $\hat
D$ are estimable, while the unidentified composition gap is reported as the breakdown $\Gamma^\ast$, read as
$M^\ast$ is for the identifying restriction. The rule never suppresses an interval, so it cannot cost
coverage.

Assembled in one place, the procedure is given in Algorithm~\ref{alg:procedure}.

\begin{algorithm}[H]
\caption{The credible-subpopulation procedure}
\label{alg:procedure}
\begin{enumerate}\setlength\itemsep{2pt}
\item Estimate cohort-level (group-time) event studies with any heterogeneity-robust estimator
\citep{callaway2021, sun2021, borusyak2024, dechaisemartin2020}.
\item Select the credible cohorts, from institutional knowledge or a covariate split (ex ante), or
by screening the pre-trends for flatness (data-driven).
\item Estimate the LATT by reweighting the retained cohorts' post-treatment effects.
\item Report honest inference, the level-bound sensitivity interval on the retained set, carved for
the selection when it is data-driven (Theorem~\ref{thm:carve}), together with the breakdown violation
$M^\ast$.
\item Report the trade, both honest intervals (LATT and ATT), preferring whichever is shorter by the
width comparison (Theorem~\ref{thm:width}), with the breakdown value $\Gamma^\ast$ for the unidentified
composition gap (Proposition~\ref{prop:switch}).
\end{enumerate}
\end{algorithm}

\section{Simulation study}\label{sec:sim}

\subsection{Design}\label{sec:design}

We study the method in two environments. In the reduced-form model we draw event-study
coefficients directly from their asymptotic distribution, $\hat\beta\sim\mathcal{N}(\tau+\delta,\Sigma)$
with $\Sigma$ known, which isolates the identification and inference mechanics in the setting where the
underlying theory is exact. In the panel model we generate individual outcomes for a staggered
design and estimate both the coefficients and their covariance from the data, which confirms that the
conclusions survive realistic estimation. Unless stated otherwise, results are averages over eight to
twenty thousand replications.

The cohort structure follows a common template. Some cohorts are clean, with $\delta_g=0$, and the
remainder confounded, with a nonzero differential trend, and Table~\ref{tab:dgp} gives the count and
clean-confounded split used in each exercise. Every cohort is given the same true treatment effect,
normalized to one. This normalization is deliberate, making
the true ATT and the true LATT both equal to one, so that any departure of an estimator from one is
attributable to a violation of parallel trends rather than to treatment-effect heterogeneity. It
therefore isolates the identification problem that the paper is about.

Throughout, each cohort carries a four-period pre-treatment and four-period post-treatment event study
with the reference period normalized to zero. A confounded cohort carries a level violation, its
treated-comparison difference shifting by $V>0$ in the post-period and by $\phi V$ in the pre-period, for
a scalar $\phi\ge 0$ we call the informativeness parameter and that is the master axis of the study. When
$\phi=0$ the confound is flat before treatment and no pre-trend rule can detect it, and as $\phi$ grows
the pre-period shift announces it and it becomes detectable. Because $V$ is the post-treatment violation it
is directly comparable to the level bound $M$ of $\Delta^{\mathrm{Level}}(M)$. In the controlled coverage
exercise below the aggregate being tested carries this violation exactly, which makes the boundary
statement ``covers if and only if $M\ge V$'' checkable. More generally, coverage requires only that $M$
dominate the selected aggregate's violation, which for a set mixing clean and confounded cohorts is weakly
below the largest single-cohort $V$, so $M\ge V$ is sufficient and is necessary only in the worst case of an
all-confounded set. This also makes $\phi$ the operational version of the informativeness of pre-trends
discussed in Section~\ref{sec:approach}.

The selection rule is the flatness screen, posed in the same currency as the honest inference, so cohort
$g$ is retained if its estimated pre-trend is close to flat, $\max_e\lvert\hat\beta_{g,\mathrm{pre}}(e)\rvert\le c$.
The estimator of the ATT averages the post-treatment effect over all cohorts, as a heterogeneity-robust
estimator would, and the LATT averages it over the selected cohorts only. Because the screen reads the selected
set off the same noisy pre-trends, the selection is genuinely data-dependent
in the sense of Section~\ref{sec:approach}, and the pre-test bias of Proposition~\ref{prop:pretest} is
operative.

\begin{table}[t]
\centering
\caption{Simulation parameters. Every exercise uses a four-period pre and four-period post event study with
reference period zero, equal cohort weights, and true effect one for every cohort. The reduced-form model
draws coefficients directly, $\hat\beta\sim\mathcal{N}(\tau+\delta,\Sigma)$ with within-cohort AR(1)
covariance $\Sigma_{ij}=s^2\rho^{|i-j|}$. The panel model draws individual untreated outcomes
$Y_{it}(0)=\delta_g(e)+\epsilon_{it}$, $\epsilon\sim\mathcal{N}(0,\sigma_\epsilon^2)$, for $500$ treated
units per cohort and $1500$ shared controls over $T=14$ periods, and estimates the coefficients and their
covariance from the data.}
\label{tab:dgp}
\begin{tabular}{lcccc}
\toprule
& Estimand & Scope & Honest cov.\ & Panel \\
& (Table~\ref{tab:tier1}) & (Fig.~\ref{fig:scope}) & (Table~\ref{tab:flci}, Fig.~\ref{fig:layer2}) & (Sec.~\ref{sec:calib}) \\
\midrule
Cohorts (clean/confounded) & $6$ ($4/2$) & $12$ ($6/6$) &, & $6$ ($3/3$) \\
Post-treatment violation $V$ & $0.8$ & $0.6$ & $0.3,\ 0.6$ & $0.6$ \\
Informativeness $\phi$ & $1.0$ & $[0,1.5]$ &, & $[0,1.5]$ \\
Screen threshold $c$ & $0.40$ & $0.40$ &, & $0.40$ \\
Noise & $s{=}0.03\text{--}0.40$ & $\sigma{=}0.20$ & $\sigma_{\mathrm{agg}}{=}0.10$ & $\sigma_\epsilon{=}0.5$ \\
Pre/post correlation $\rho$ & $0.5$ & $0.5$ &, & estimated \\
Replications & $20{,}000$ & $8{,}000$ & $4{,}000$ & $2000\text{--}3000$ \\
\bottomrule
\end{tabular}
\end{table}

\subsection{Recovering a clean effect where the ATT is biased}

We first illustrate Proposition~\ref{prop:id} in the simplest configuration, with a subset of cohorts
violating parallel trends in the post-period. Table~\ref{tab:tier1} reports the result. The estimator
of the ATT is biased by $+0.267$, since it averages over all cohorts, so the offending cohorts'
differential trends enter it directly. The LATT lands at $1.000$, on the truth, because the selection
removes those cohorts from the average and, by Proposition~\ref{prop:id}, their violations then do not
enter the estimand at all.

The lower panel of the table sweeps the sampling noise $s$.
As precision improves the LATT's bias falls monotonically from $+0.013$ to zero, and this residual is the
selection-induced bias of Proposition~\ref{prop:pretest}, which vanishes as the selection rule becomes reliable.
Two sources could in principle generate it, a noisy screen that admits genuinely confounded cohorts, which
would persist even under independent selection, and the correlation between a cohort's pre- and
post-treatment sampling errors, which distorts the retained cohorts' estimates only under same-sample
selection. The final column isolates the two by selecting on an independent pre-trend draw. It tracks the
full-sample column almost exactly, so the residual is the first source, admission of confounded cohorts,
and the second is negligible here because the two-sided flatness screen conditions the retained estimates
symmetrically. The persistent component is thus a residual identification bias, present even under
independent selection and addressed by the sensitivity widening of the next section, rather than the strict
post-selection bias that carving removes. The ATT estimator's bias, by contrast, is frozen at $+0.267$ regardless of precision. The
LATT faces a noise problem, which more data cures, whereas the ATT estimator faces a wrong-target problem,
which no amount of data cures.

\begin{table}[t]
\centering
\caption{The estimand demonstration (reduced-form model). True effect $=1$ for all cohorts. In the lower
panel LATT (full) selects on the same pre-trends used for estimation, while LATT (split) selects on an
independent draw. Their near-equality shows the residual bias is admission of confounded cohorts, not
same-sample conditioning.}
\label{tab:tier1}
\begin{tabular}{lccc}
\toprule
& ATT & LATT & LATT \\
& & (full) & (split) \\
\midrule
Point estimate & $1.267$ & $1.000$ &, \\
Bias vs.\ truth & $+0.267$ & $+0.000$ &, \\
\midrule
\multicolumn{4}{l}{Residual LATT bias as sampling noise $s$ falls}\\
$s=0.40$ & $+0.265$ & $+0.013$ & $+0.014$ \\
$s=0.20$ & $+0.267$ & $-0.001$ & $-0.000$ \\
$s=0.12$ & $+0.267$ & $-0.000$ & $-0.000$ \\
$s=0.03$ & $+0.267$ & $+0.000$ & $+0.000$ \\
\bottomrule
\end{tabular}
\end{table}

\subsection{The scope condition}

Figure~\ref{fig:scope} traces the method across the informativeness axis $\phi$ described in
Section~\ref{sec:design}.

The upper-left panel plots the bias of the two point estimators against the truth. The bias of the ATT
estimator is a flat line at $+0.30$, since it never consults pre-trends, so its performance cannot depend
on how informative they are. The LATT's bias begins on top of that line at $\phi=0$ and falls to
zero as pre-trends become informative. The vertical gap between the two curves is therefore exactly the
method's advantage, and reading it off from left to right gives the scope condition, with the advantage
nil when pre-trends carry no information about post-treatment violations and maximal when they carry a
great deal.

The two curves coincide at the left edge because selecting on uninformative pre-trends buys nothing, as
Section~\ref{sec:approach} argues.

The upper-right panel explains why the LATT's bias falls, by plotting the identification gap of
the selected set, the average post-treatment violation among the cohorts that survive selection. It
tracks the LATT's bias almost exactly, falling from $0.30$ to essentially zero. This confirms that the
LATT's remaining bias is not an artifact of the estimator but the residual violation that
selection has failed to remove. The lower-right panel gives the mechanism behind both, the average
number of confounded cohorts that slip through the screen falling from $5.09$ of six when the confound is
invisible to essentially zero when it is plainly visible.

The lower-left panel turns to inference and reports the coverage of a point-based confidence interval
for the causal target, computed both in the naive way and by sample-splitting. Both collapse at low
informativeness and recover only at high informativeness. The comparison is diagnostic, since
sample-splitting removes any contamination from data-dependent selection by construction, so the fact
that it collapses
just as badly establishes that the failure is not selection distortion but the identification gap
of the upper-right panel, since the surviving cohorts genuinely still violate parallel trends. This is what
motivates the sensitivity bounds of the next subsection, since no purely inferential fix can repair a
violation of the identifying assumption.

(At the right edge the naive interval slightly outperforms the split one. This is because
splitting discards half the data, and by that point selection is nearly deterministic, so there is no
contamination left to correct.)

\begin{figure}[t]
\centering
\includegraphics[width=\textwidth]{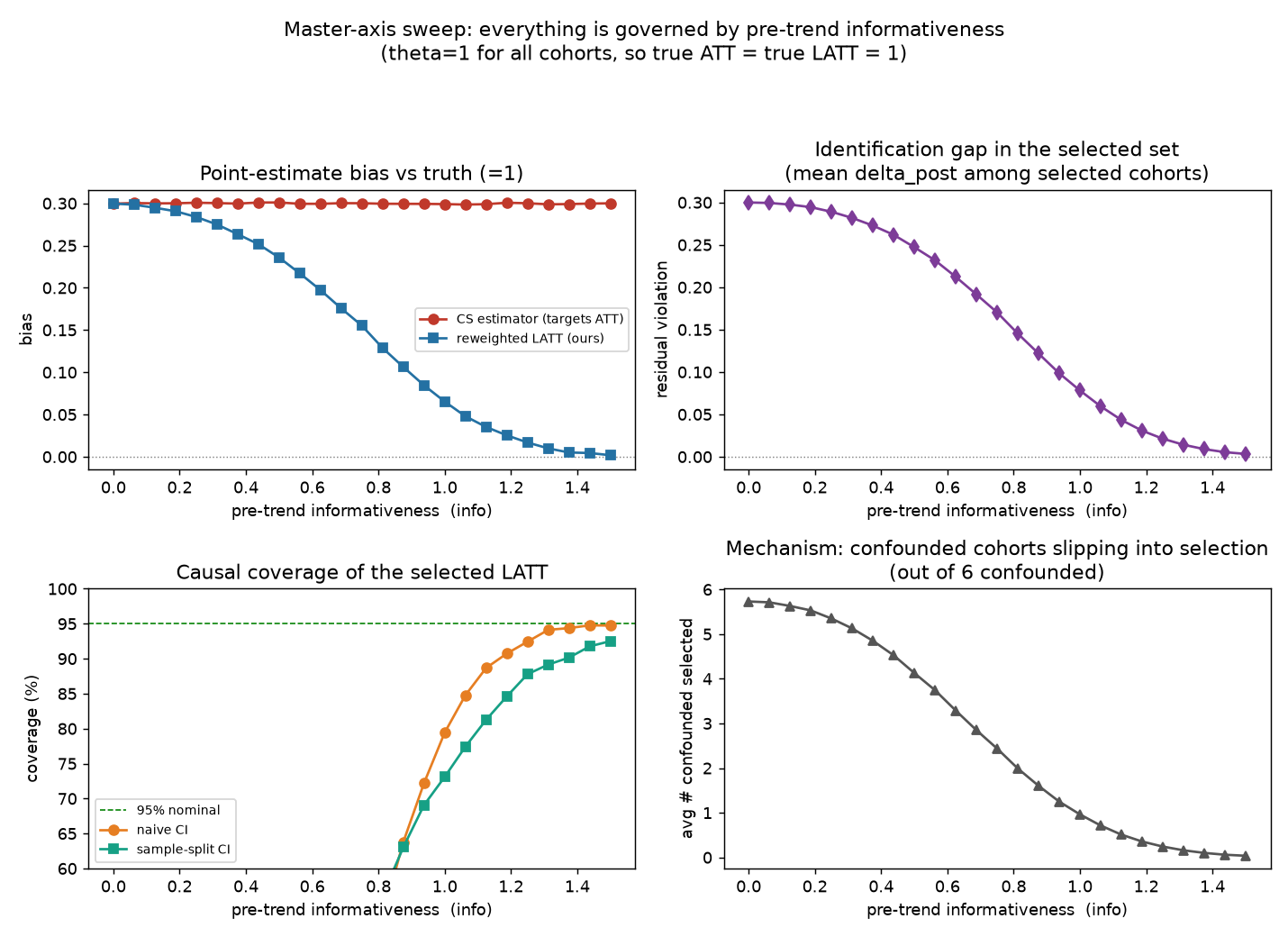}
\caption{Scope condition. Everything is governed by pre-trend informativeness. The ATT estimator's
bias is constant, and the LATT's advantage is the gap between the lines. Causal-coverage collapse at low
informativeness is an identification gap (surviving cohorts still violate parallel trends), not
selection distortion, since sample-splitting collapses just as badly.}
\label{fig:scope}
\end{figure}

\subsection{Honest inference restores coverage}\label{sec:flci}

We now validate the level-bound FLCI of Section~\ref{sec:approach} applied to the selected-cohort
aggregate, using the level violation of Section~\ref{sec:design} whose size $V$ is directly comparable
to the assumed bound $M$.

Table~\ref{tab:flci} is the validity check. The FLCI is honest exactly on the range it claims, nominal at
$V=M$, conservative when $M>V$, collapsing when $M<V$, failing visibly outside its bound rather than
silently. The naive point interval, by contrast, collapses to near-zero coverage the instant any violation
is present, since it is centered on a biased estimate with a width reflecting sampling noise alone.

Figure~\ref{fig:layer2} repeats the exercise across a continuum of violation sizes and adds the practical
output. In the left panel the point interval falls away rapidly as the violation grows, while each FLCI
holds its nominal level until the true violation reaches its own assumed bound, marked by the vertical
dotted lines, and declines thereafter. The middle panel records the price. Interval half-width is
governed by the assumed $M$ rather than by the realized violation, so honesty about a wider class of
violations is paid for in width, immediately and proportionately.

The right panel is what a practitioner would actually report. Holding the data fixed, it traces the
confidence band as the assumed bound is relaxed and identifies the breakdown value $M^\ast$. In this
simulation, where the truth is known, $M^\ast$ is the bound at which the band first admits the true effect.
In an application, where it is not, the operative breakdown is the smallest $M$ at which the band no longer
supports the stated conclusion, for instance first includes zero, and the two need not coincide. Because $M$
is in the outcome's own units, this quantity answers the
question a reader of a DiD paper wants answered directly, how large a post-treatment parallel-trends
violation one must be willing to countenance before the stated conclusion no longer follows. And because
the bound is fixed by the researcher rather than read off the pre-trends, the exercise remains
informative even when pre-trends are uninformative, in direct contrast to the relative-magnitudes
restriction, whose identified set would shrink toward zero in exactly that case, leaving the truth able to
fall outside the maintained class.

\begin{table}[t]
\centering
\caption{FLCI coverage of the causal target (reduced-form model), for an aggregate whose violation equals
$V$, valid iff $M\ge V$.}
\label{tab:flci}
\begin{tabular}{cccc}
\toprule
True violation $V$ & Assumed $M$ & FLCI coverage & Point-estimate coverage \\
\midrule
$0.30$ & $0.30\ (=V)$ & $95.0\%$ & $15.9\%$ \\
$0.30$ & $0.60\ (>V)$ & $100\%$ & $14.9\%$ \\
$0.60$ & $0.30\ (<V)$ & $9.3\%$ & $0.0\%$ \\
$0.60$ & $0.60\ (=V)$ & $95.0\%$ & $0.0\%$ \\
\bottomrule
\end{tabular}
\end{table}

\begin{figure}[t]
\centering
\includegraphics[width=\textwidth]{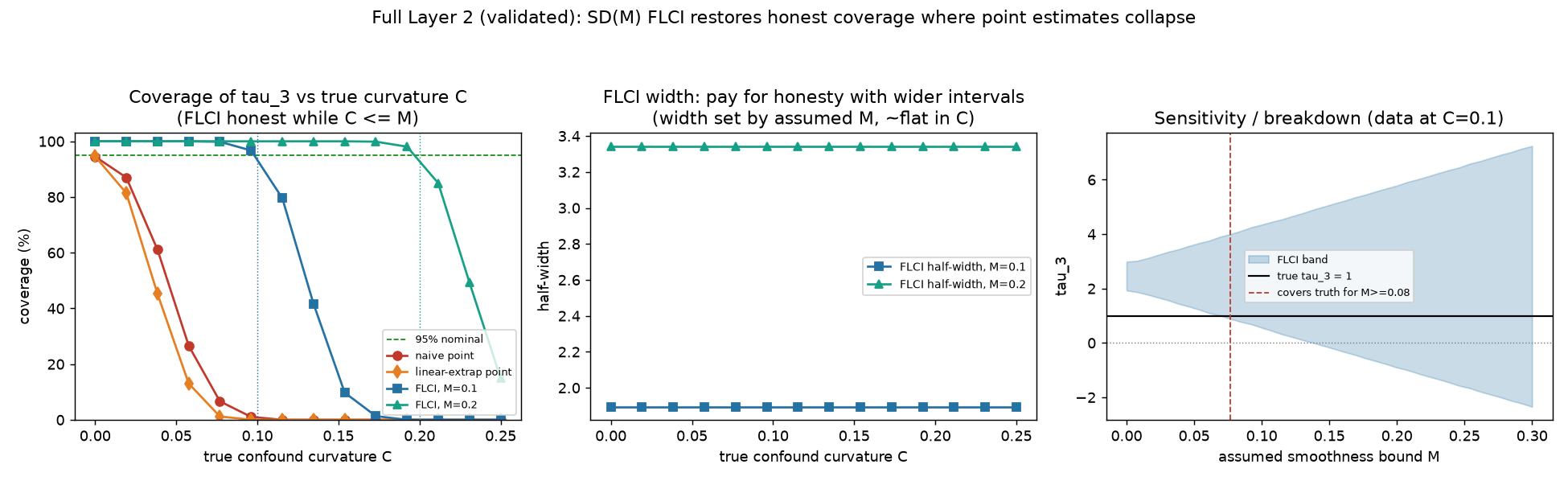}
\caption{Sensitivity bounds restore honest coverage. The point estimate collapses as the violation grows.
The FLCI is honest exactly while $M\ge V$. Interval width is the price of honesty, set by $M$. The right
panel is the sensitivity/breakdown curve, with $M^\ast$ in outcome units.}
\label{fig:layer2}
\end{figure}

\subsection{Width dominance and the shorter honest interval}\label{sec:width}

Having confirmed that the honest interval covers, we return to the decision of Section~\ref{sec:dominance}.
when is it actually shorter than the ATT's? Theorem~\ref{thm:width} answered this analytically, with the
explicit equal-weights threshold $V>1.59\,\sigma$ at $K=2$, $k=1$. We confirm it directly and check the point
that makes the comparison meaningful. The width gain is between two valid intervals, not an artifact
of undercoverage.

The design is the informative limit of Section~\ref{sec:dominance}. Each of $K$ equally weighted cohorts
contributes an independent post-treatment aggregate $\hat\beta_g\sim\mathcal N(\tau_g+V_g,\sigma^2)$, with
$\tau_g=1$ so that the ATT and the LATT are both one and any width difference is attributable to the
violation alone. The $k$ clean cohorts ($V_g=0$) are retained and the $K-k$ confounded ones ($V_g=V$) are
dropped, ex ante, so both intervals are ordinary level-bound FLCIs \citep{armstrong2018}. The LATT's, on the
flat retained set, carries bound $B_S=0$ and standard error $\sigma/\sqrt k$. The ATT's must bound the
full-set violation, $B_{\mathcal G}=fV$ with $f=(K-k)/K$, at standard error $\sigma/\sqrt K$.

Figure~\ref{fig:width} reports the result at $K=2$, $k=1$. Panel (a) plots the two honest half-widths. The
LATT's is flat in $V$, since its retained set is clean, while the ATT's rises as it widens to cover the
growing violation, the two crossing at $V^\ast=1.59\,\sigma$ exactly as Section~\ref{sec:dominance} predicts.
Panel (b) is the check that matters. Across the whole range both honest intervals cover the target at $95\%$,
so past $V^\ast$ the LATT interval is strictly shorter at equal, nominal coverage. The naive ATT
interval, which ignores the violation, is narrower still but collapses in coverage, from $95\%$ to below
$60\%$, and is the interval honest inference is meant to replace. Table~\ref{tab:width} records the
crossing. At $V=2\sigma$ the LATT half-width $1.96\,\sigma$ undercuts the ATT's $2.16\,\sigma$ with both
covering, while the naive interval covers only $71\%$. Panel (c) traces $V^\ast$ across designs. In each
design the width threshold exceeds the corresponding mean-squared-error threshold,
$V/\sigma>\sqrt{1/k-1/K}\,/f$ with $f=(K-k)/K$, which equals $\sqrt2$ at $K=2$, $k=1$, since the
fixed-length interval charges for the standard-error inflation of dropping cohorts and not only for their
variance.

\begin{table}[t]
\centering
\caption{Width dominance ($K=2$, $k=1$, $\sigma=1$, $40{,}000$ replications). Both honest intervals cover the
target at the nominal $95\%$. Past $V^\ast=1.59$ the LATT half-width $h_S$ undercuts the ATT's
$h_{\mathcal G}$ at equal coverage, while the naive ATT interval is narrower only by undercovering.}
\label{tab:width}
\begin{tabular}{cccccc}
\toprule
$V/\sigma$ & $h_S$ & $h_{\mathcal G}$ & LATT cov.\ & ATT cov.\ & naive cov.\ \\
\midrule
$0.80$ & $1.96$ & $1.58$ & $94.9\%$ & $94.9\%$ & $91.2\%$ \\
$1.59$ & $1.96$ & $1.96$ & $94.9\%$ & $95.0\%$ & $79.6\%$ \\
$2.00$ & $1.96$ & $2.16$ & $94.9\%$ & $95.0\%$ & $70.9\%$ \\
$2.50$ & $1.96$ & $2.41$ & $94.9\%$ & $95.0\%$ & $57.8\%$ \\
\bottomrule
\end{tabular}
\end{table}

\begin{figure}[t]
\centering
\includegraphics[width=\textwidth]{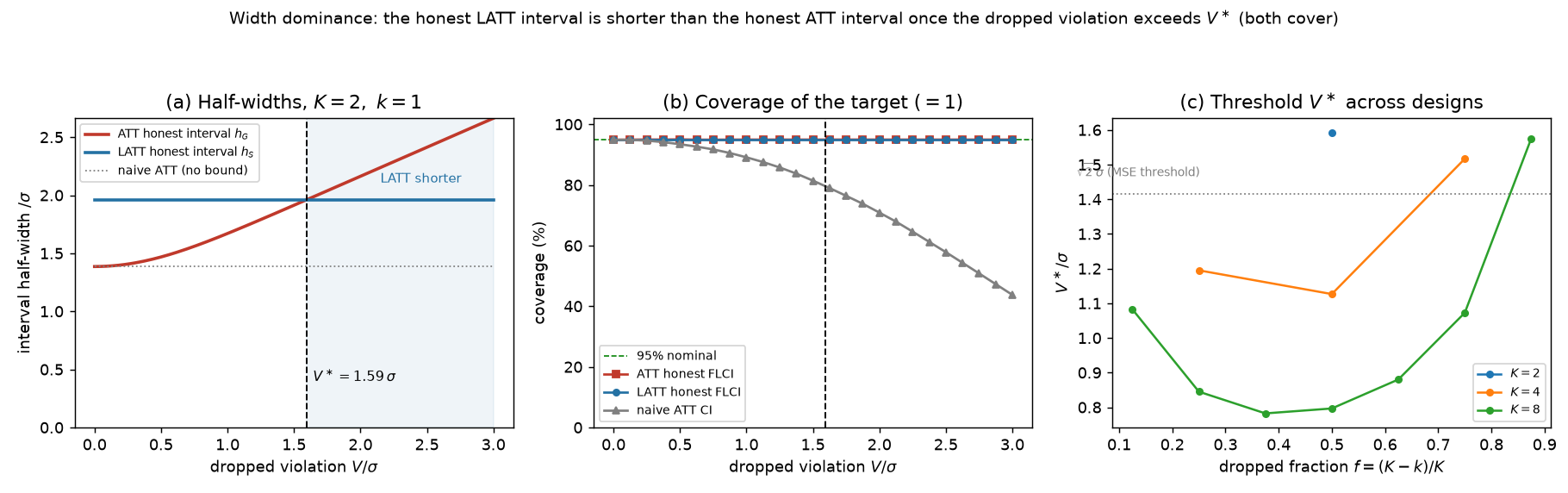}
\caption{Width dominance (Theorem~\ref{thm:width}). (a) The honest LATT and ATT half-widths cross at
$V^\ast=1.59\,\sigma$. Beyond it the LATT interval is shorter. (b) Both honest intervals cover the target at
$95\%$ throughout, so the width gain is not bought by undercoverage, while the naive ATT interval collapses.
(c) The threshold $V^\ast$ across designs, above the design-specific mean-squared-error threshold
($\sqrt2\,\sigma$ at $K=2$, $k=1$).}
\label{fig:width}
\end{figure}

\subsection{Selection, calibration, and panel confirmation}\label{sec:calib}

Two questions remain for the data-driven regime, whether choosing the selected set from the same data
distorts the sensitivity interval, and how to choose $M$ when the selected set, and hence its residual
violation, is random. On the first, the cost of choosing the set from the same data is concentrated near the threshold and falls
on the variance rather than on gross coverage. The $M$ at which coverage crosses the nominal level is nearly
the same for the full-data and sample-splitting procedures ($M\approx0.048$ and $0.056$), so where cohorts
are well separated from $c$ selection is effectively deterministic, naive inference suffices, and splitting
buys little, consistent with the pointwise reading of \citet{leeb2005}. That near-equality of crossing
points understates the cost, however, because selection distorts the standard error rather than the
crossing bound, the local-to-threshold regime the carving of Theorem~\ref{thm:carve} is built for,
which the panel model below makes precise. On the second, the residual violation of the selected aggregate is
not a constant but a spread, multi-modal random variable across replications, running from zero to about
$0.06$ with a thin tail to $0.11$, each mode a different number of borderline cohorts surviving the
screen, which is the content of Remark~\ref{rem:calib}. A fixed $M$ must dominate it, and setting $M$ at
the mean residual violation ($0.043$) undercovers, while nominal coverage is reached near the
ninety-fifth percentile ($0.057$) and well below the maximum ($0.112$). The lesson is not a value at which
to set $M$, the quantile that would deliver a target coverage depends on the unidentified violation
distribution, but a direction in which to read the breakdown curve. A bound drawn from the typical selected
set falls at the undercovering end, so the curve should be read conservatively, with $M^\ast$ weighed against
a violation one is willing to countenance rather than against the set's own average.

We next report the carved interval itself, the object Theorem~\ref{thm:carve} constructs, across the
randomization scale $\gamma$. In a selection-isolation design where every cohort is clean in the
post-period, so the causal target is one for any selected set and the identification widening is held out,
cohorts sit near the flatness threshold and the pre- and post-period estimation errors are correlated, so
the naive interval on the selected aggregate undercovers. Table~\ref{tab:carve} traces the sweep. The
carved interval covers at the nominal level for every $\gamma$, where the naive interval undercovers,
confirming the exact conditional coverage of Theorem~\ref{thm:carve} on the selected set. Its length is
summarized by the median rather than the mean, because the carved interval has infinite expected length
(Proposition~\ref{prop:nolength}): the mean is dominated by the rare near-boundary replications where the
truncation collapses, a heavy tail visible in the ninety-fifth percentile, while the median is stable.
On the median the carved interval is comparable to a proper $\sqrt2$ sample-split, not uniformly shorter,
and its median length grows with $\gamma$ as the noised screen shrinks the retained set and the
empty-selection event, on which the procedure reports the ATT interval, becomes non-negligible. The carved
interval's advantage over splitting is thus not width but that it uses the whole sample deterministically,
with the randomization $\gamma$ trading the length tail against the empty-selection rate; the useful
regime is a small $\gamma$.

\begin{table}[t]
\centering
\caption{The carved interval across the randomization scale $\gamma$ (reduced-form model,
selection-isolation design, causal target $=1$). Coverage is over non-empty selections; length is the
median, since the carved interval has infinite expected length (Proposition~\ref{prop:nolength}). The
carved interval covers where the naive interval undercovers, at a median length comparable to a proper
$\sqrt2$ sample-split.}
\begin{tabular}{cccccccc}
\hline
 & \multicolumn{3}{c}{coverage (\%)} & \multicolumn{3}{c}{length} & \\
$\gamma$ & carved & naive & split & med.\ carved & med.\ split & p90 carved & empty (\%) \\
\hline
$0.0$ & $95.4$ & $90.6$ & $94.6$ & $0.545$ & $0.444$ & $2.47$ & $0.0$ \\
$0.5$ & $95.8$ & $92.2$ & $95.1$ & $0.555$ & $0.444$ & $2.24$ & $0.0$ \\
$2.0$ & $95.1$ & $94.1$ & $95.2$ & $0.621$ & $0.512$ & $2.52$ & $0.7$ \\
$8.0$ & $95.3$ & $94.8$ & $94.5$ & $0.855$ & $0.627$ & $3.17$ & $10.6$ \\
\hline
\end{tabular}
\label{tab:carve}
\end{table}

Finally, we repeat the estimand demonstration on generated panel data, where individual outcomes are
simulated for a staggered design with a level confound and idiosyncratic noise, cohort event-study
coefficients are estimated from genuine difference-in-differences contrasts against a never-treated
group, and their covariance is estimated rather than supplied, which is the empirical counterpart of
Remark~\ref{rem:sigmahat}. The conclusions are unchanged, since
across the informativeness axis the ATT estimator's bias stays flat near $+0.30$ while the LATT's falls
toward zero, reproducing Figure~\ref{fig:scope} with estimated quantities, and the estimated standard
error matches its Monte Carlo counterpart where selection is effectively deterministic. One discrepancy
is informative, and it is the payoff promised above. At intermediate informativeness, where selection is
stochastic, the local-to-threshold regime the carved interval targets, the estimated standard error
understates its Monte Carlo counterpart ($0.016$ against $0.024$ at $\phi=0.75$), the missing
component being the variance induced by the randomness of selection, which a standard error conditional
on the realized set does not capture. This confirms, from the panel side, that conditioning on a
data-dependent selection understates uncertainty, and why the honest object to report is the sensitivity
interval of Section~\ref{sec:flci} rather than a conventional confidence interval.

\section{Empirical application}\label{sec:app}

We illustrate the method with the local effect of the shale boom on house prices, a staggered design
with many counties per cohort. The treatment date is the year a county's oil and gas production first
rises sharply, from county production records \citep{ers2015}, which sorts the boom counties into
onset cohorts, and the outcome is the log county house-price index \citep{bogin2019}, over 1998--2019. We
estimate \citet{callaway2021} group-time effects against never-treated counties, those with negligible
production, screen cohorts on the flatness of their estimated pre-trend, and report the level bound of
Section~\ref{sec:approach} for the selected aggregate.

Onset is gradual and partly endogenous, which is what makes the setting instructive. A county's
production ramps up over several years, and where it does so during a regional housing upswing its
prices are already climbing before the boom is dated. Figure~\ref{fig:frack}(a) shows the result. The
early cohorts have flat pre-trends and are retained, while the later cohorts, whose onset coincides
with the mid-2000s house-price run-up, carry pre-treatment trends several times larger and are dropped.

The two aggregates part ways (Figure~\ref{fig:frack}b). Pooled across all cohorts, the event study
climbs through the onset date to a house-price effect of $+6.7$ log points ($t=8$), the kind of estimate
a practitioner would report as a clear positive effect of the boom. But the pooled path is already
trending before treatment, and the effect is inherited from the dropped cohorts. Restricted to the
credible subpopulation, the three cohorts whose pre-trends are essentially flat (the screen at
$c\approx0.03$), the event study is flat both before and after onset, and the LATT is $-0.2$ log points,
indistinguishable from zero. The gap between the two, $+6.9$ log points, is itself sharply
estimated ($t=5$). Honest inference confirms the reading rather than resting on the point estimate.
Applying the level bound of Section~\ref{sec:approach} to the credible aggregate at the illustrative value $M$
equal to the screen tolerance $c\approx0.03$, one point on the breakdown curve, the LATT's fixed-length
interval runs from $-5.9$ to $+5.5$ log points, which contains zero and excludes the pooled $+6.7$. The
choice $M=c$ is a benchmark rather than a consequence of the screen, since $c$ bounds the pre-treatment
coefficients while $M$ bounds the unobserved post-treatment violation, so the interval is honest only under
the maintained level bound at that $M$. The breakdown value is $M^\ast\approx4.2$ log points, so reconciling the credible near-zero
reading with the pooled effect would require countenancing a post-treatment violation about $1.4$ times the
screen tolerance and larger than any retained cohort's own pre-trend. Panel (c) traces
the estimate along the credibility path of Section~\ref{sec:select}. It stays flat and close to zero
across the whole range of genuinely flat pre-trends, cohorts with $\max_e|\hat\beta_{g,\mathrm{pre}}(e)|$
below about $0.03$, and climbs toward the pooled $+6.7$ only once the threshold is loosened enough to
readmit cohorts whose own pre-trends, around $0.07$--$0.08$, are as large as the effect in question.
Reading the LATT at that lax end simply rebuilds the pooled estimate out of its least credible cohorts,
and the positive figure survives only by abandoning the credibility standard, not by tolerating a
residual violation of the credible set.

Because the credible set is read from the same estimated pre-trends, we run the carved procedure of
Theorem~\ref{thm:carve} on the data, estimating the covariance of the stacked cohort event-study
coefficients by a cluster bootstrap over counties, the estimated-$\Sigma$ case of
Remark~\ref{rem:sigmahat}. The screen boundary is genuinely uncertain here, which is what makes carving
operative rather than cosmetic. Only the 2003 cohort is retained by a wide margin, and 2007, 2008, and
2011 are dropped by one, while four of the nine cohorts sit within about one bootstrap standard error of
the threshold, 2004 and 2005 retained and 2006 and 2009 dropped. Carrying that selection uncertainty, the
carved interval for the credible LATT at $\gamma=0$ is $[-4.3,+3.3]$ log points, tracking the naive
interval because the retained aggregate stays near zero across the plausible selected sets. Composing it
with the level bound at $M=c$, the fully honest data-driven object of Corollary~\ref{cor:causal}, gives
$[-7.3,+6.3]$ log points, which still contains zero with its upper edge below the pooled $+6.7$; the
near-optimal fixed-length interval $[-5.9,+5.5]$ is its counterpart valid under separation
(Remark~\ref{rem:oracle}). The carved machinery thus confirms that the near-zero credible reading is robust
to the selection uncertainty the pre-trend screen introduces, and not an artifact of conditioning on the
realized selected set.

We checked this against the sensitivity analysis a fixed-target approach would actually run, the HonestDiD
implementation of \citet{rambachan2023} applied to the pooled event study. We use their relative-magnitudes
restriction $\Delta^{RM}(\bar M)$, the ``post bounded by pre'' restriction, which
bounds post-treatment trend changes by $\bar M$ times the largest
pre-treatment one, because it reads robustness directly off the observed pre-trend, the mechanism at issue
here. The honest ATT confidence set is $[+1.0,+13.3]$ log points at $\bar M=1$, wider than the
naive interval $[+5.1,+8.4]$, but still excluding zero, and first admits zero only at a breakdown of
$\bar M^\ast=1.2$. The relative-magnitudes restriction thus reports the $+6.7$
effect as robust, surviving post-treatment violations larger than the entire visible pre-trend. That
robustness is an artifact of aggregation. The pooled pre-trend is nearly flat, its largest pre-treatment
coefficient is under one log point, because the confounded cohorts are a minority whose large,
differently-timed pre-trends wash out in the all-cohort average, even as their own pre-trends reach eleven log
points. Reading only the aggregate, the restriction cannot detect this and certifies the effect, the false
precision that Section~\ref{sec:flci} anticipated for a bound read off an uninformative pre-trend, here on
real data.

The complementary smoothness restriction $\Delta^{SD}(M)$ fails in the opposite direction, for a reason
specific to this confound. Run on the same pooled event study through the same conditional--hybrid procedure,
its honest confidence set admits zero once the post-treatment differential trend is allowed to curve by
$M\approx0.001$, a fifth of the $\approx\!0.005$ curvature the pooled pre-trend itself displays, so it
neither certifies the $+6.7$ effect nor localizes it, but widens to include both zero and the full
effect.\footnote{We report $\Delta^{SD}$ through the same conditional least-favourable hybrid used for
$\Delta^{RM}$, so the two restrictions are compared under one inference procedure. The smoothness class also
admits the near-optimal fixed-length interval of \citet{armstrong2018}, which is tighter but leaves the
comparison unchanged. Near the breakdown the identified set is small and the hybrid is close to optimal.}
The housing bubble is a smooth secular appreciation, and a smoothness restriction cannot separate it from a
treatment effect on the aggregate. The two standard restrictions thus fail in opposite ways, relative
magnitudes falsely precise, smoothness uninformative, and neither recovers the near-zero reading the
cohort-level screen delivers.

The credible-subpopulation screen, reading pre-trends cohort by cohort, drops exactly the bubble cohorts and
returns the near-zero LATT. It is the only one of the three analyses that both shows no effect for the
credible cohorts and locates the pooled positive in the dropped ones, because the information distinguishing
bubble from effect lives in the cross-cohort composition that every aggregate analysis first averages away.
Whether the dropped cohorts' excess is differential trend or genuine heterogeneous effect is the unidentified
composition gap, so the analysis concentrates the pooled positive in cohorts with nonflat pre-trends rather
than proving the full-population effect zero.

The report of Section~\ref{sec:dominance} makes this quantitative. The estimable divergence between the
two point estimators is $\hat D=\hat\theta_{\mathcal G}-\hat\theta_{\hat S}=+6.9$ log points ($t=5.4$,
carved standard error), and it splits as $\hat D=B_{\mathcal F}-\Gamma$ into the dropped cohorts'
differential trend $B_{\mathcal F}$, a pre-trend violation, and the composition gap $\Gamma$, a difference
in true effects, which the data cannot separate. The breakdown value is $\Gamma^\ast\approx-3.3$ log
points, so the credible reading is overturned only if more than about half of the divergence, $3.3$ of the
$6.9$ log points, reflects genuine treatment-effect heterogeneity rather than differential pre-trend.
Reporting both honest intervals with $\Gamma^\ast$ completes the standing rule; the width comparison of the
switching rule is an efficiency question, demonstrated in Section~\ref{sec:width}, and here the two honest
intervals disagree on substance rather than width.

\begin{figure}[t]
\centering
\includegraphics[width=\textwidth]{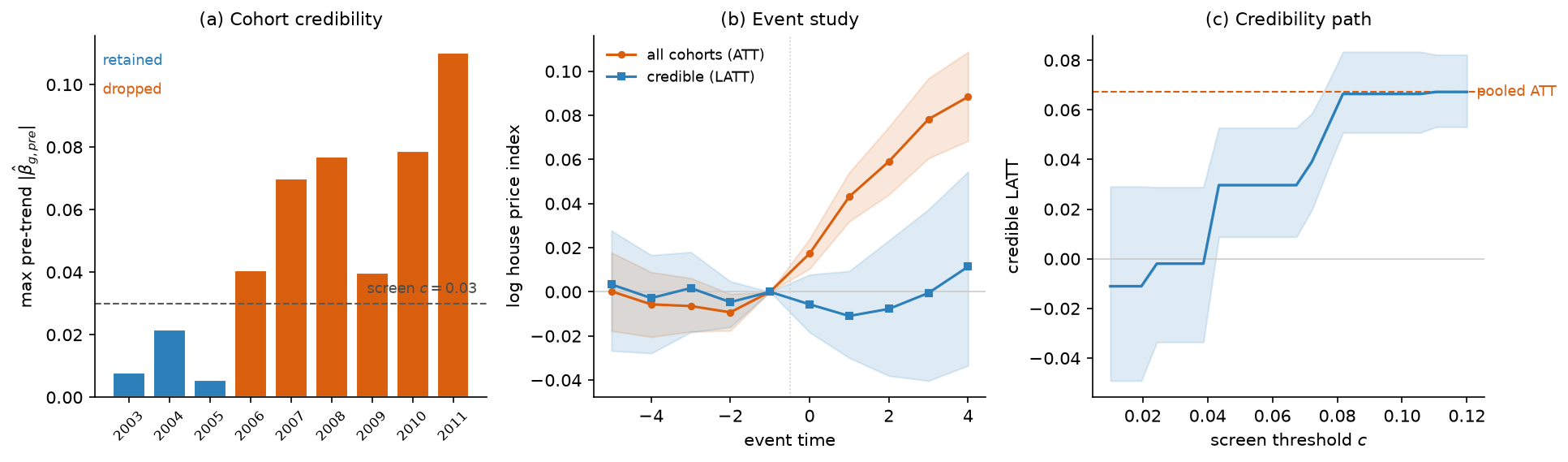}
\caption{The shale boom and county house prices. (a) Cohort credibility map, the maximum pre-treatment
violation by onset cohort with the flatness screen, where the later, trending cohorts are dropped. (b)
Event study, all cohorts (pooled estimate) versus the credible subpopulation (LATT) at the illustrative
threshold $c\approx0.03$, with $95\%$ bands. (c) The credibility path, the LATT as the screen threshold
$c$ is relaxed, near zero across the flat-pre-trend cohorts and rising toward the pooled estimate only as
trending cohorts are readmitted.}
\label{fig:frack}
\end{figure}

The reading is that the pooled effect reflects a pre-existing trend in the counties that adopt late rather
than a credibly causal effect, and a near-zero net effect for the credible subpopulation is what the
offsetting amenity and disamenity channels of local energy development would predict
\citep{muehlenbachs2015}. The method identifies the effect for the credible cohorts alone, and the excess
in the dropped cohorts blends their steep pre-trends with any genuine difference in their treatment
effects, a mixture the design cannot separate and need not, since the credible reading stands on the
retained cohorts. The application thus shows the method working in the direction opposite to
point-identification, not recovering an effect the pooled estimate misses but withdrawing one it spuriously
reports, and locating that withdrawal exactly in the cohorts the screen flags.

\section{Conclusion}\label{sec:conc}

We have argued that when parallel trends fails for some treated cohorts, a productive response is to
change the estimand rather than to defend or bound the ATT. The credible-subpopulation LATT is
point-identified under parallel trends for the selected cohorts alone, a condition that can hold when the
ATT's fails, is estimated by a transparent reweighting of standard group-time effects, and can be
accompanied by honest sensitivity bounds on the residual violation that the pre-trend screen cannot rule
out. This is a specific instance of the general principle, familiar from instrumental
variables and limited overlap, of retreating to a credibly-identified subpopulation.

The method delivers a sharp answer where the ATT estimator is biased and a fixed-target sensitivity
analysis on the aggregate is uninformative, wide when the aggregate pre-trend reveals the violation, and
falsely precise when the offending cohorts wash out of it, at the price of a narrower, subpopulation target
whose composition must be described rather than assumed. Its advantage is contingent on pre-trends being informative about post-treatment violations.
Where they are not, only economic context can carry the assumption. Under data-driven selection the
estimator carries a pre-test bias that vanishes as pre-trends grow informative. The inference we
recommend is honest regardless, combining the post-selection carving of \citet{lee2016} with the
sensitivity bounds of \citet{rambachan2023} into an interval uniformly valid against both the selection and
the residual violation. We also make the scope condition operational. A feasible rule, reported alongside both honest
intervals, signals when narrowing the target lowers risk, with a breakdown value $\Gamma^\ast$ standing in
for the one quantity, the effects of the discarded cohorts, that the data cannot identify.

An appendix characterizes the optimal credible weighting (Proposition~\ref{prop:optimal}). Minimizing
worst-case risk yields a soft-thresholded weighting of cohorts by credibility net of noise, of which the
flatness screen with size weights is a special case, which completes the analogy to the variance-minimizing
overlap weights of \citet{crump2009} that the approach borrows from.

Two directions remain for future work. One is the finite-sample behaviour of the feasible optimal weighting
of Appendix~\ref{sec:optimal}, where the credibility levels $m_g$ are estimated from the pre-trends rather
than known, and the noise this inherits awaits a full account. The other is the coverage of a single
interval selected by a hard-thresholded switching rule. The recommended report of both intervals is honest
whatever the rule decides (Proposition~\ref{prop:switch}), so the question is confined to the coverage of one
interval at the switching boundary.

\clearpage
\bibliographystyle{aer}
\bibliography{refs}

@article{armstrong2018,
  author  = {Armstrong, Timothy B. and Koles{\'a}r, Michal},
  title   = {Optimal Inference in a Class of Regression Models},
  journal = {Econometrica},
  year    = {2018},
  volume  = {86},
  number  = {2},
  pages   = {655--683}
}

@article{borusyak2024,
  author  = {Borusyak, Kirill and Jaravel, Xavier and Spiess, Jann},
  title   = {Revisiting Event-Study Designs: Robust and Efficient Estimation},
  journal = {Review of Economic Studies},
  year    = {2024},
  note    = {Forthcoming}
}

@article{callaway2021,
  author  = {Callaway, Brantly and Sant'Anna, Pedro H. C.},
  title   = {Difference-in-Differences with Multiple Time Periods},
  journal = {Journal of Econometrics},
  year    = {2021},
  volume  = {225},
  number  = {2},
  pages   = {200--230}
}

@article{abadie2010,
  author  = {Abadie, Alberto and Diamond, Alexis and Hainmueller, Jens},
  title   = {Synthetic Control Methods for Comparative Case Studies: Estimating the Effect of California's Tobacco Control Program},
  journal = {Journal of the American Statistical Association},
  year    = {2010},
  volume  = {105},
  number  = {490},
  pages   = {493--505},
}

@article{crump2009,
  author  = {Crump, Richard K. and Hotz, V. Joseph and Imbens, Guido W. and Mitnik, Oscar A.},
  title   = {Dealing with Limited Overlap in Estimation of Average Treatment Effects},
  journal = {Biometrika},
  year    = {2009},
  volume  = {96},
  number  = {1},
  pages   = {187--199}
}

@article{dechaisemartin2020,
  author  = {de Chaisemartin, Cl{\'e}ment and D'Haultf{\oe}uille, Xavier},
  title   = {Two-Way Fixed Effects Estimators with Heterogeneous Treatment Effects},
  journal = {American Economic Review},
  year    = {2020},
  volume  = {110},
  number  = {9},
  pages   = {2964--2996}
}

@article{dechaisemartin2026,
  author  = {de Chaisemartin, Cl{\'e}ment},
  title   = {Using Pre-Trends for Inference in Difference-in-Differences},
  journal = {arXiv preprint arXiv:2607.21312},
  year    = {2026}
}

@article{dechaisemartin2024pretest,
  author  = {de Chaisemartin, Cl{\'e}ment and D'Haultf{\oe}uille, Xavier},
  title   = {Is Inference Conditional on Not Rejecting a Pre-test Less Reliable than Unconditional Inference?},
  journal = {arXiv preprint arXiv:2407.03725},
  year    = {2024}
}

@article{goodmanbacon2021,
  author  = {Goodman-Bacon, Andrew},
  title   = {Difference-in-Differences with Variation in Treatment Timing},
  journal = {Journal of Econometrics},
  year    = {2021},
  volume  = {225},
  number  = {2},
  pages   = {254--277}
}

@article{imbens1994,
  author  = {Imbens, Guido W. and Angrist, Joshua D.},
  title   = {Identification and Estimation of Local Average Treatment Effects},
  journal = {Econometrica},
  year    = {1994},
  volume  = {62},
  number  = {2},
  pages   = {467--475}
}

@article{kahnlang2020,
  author  = {Kahn-Lang, Ariella and Lang, Kevin},
  title   = {The Promise and Pitfalls of Differences-in-Differences: Reflections on
             \emph{16 and Pregnant} and Other Applications},
  journal = {Journal of Business and Economic Statistics},
  year    = {2020},
  volume  = {38},
  number  = {3},
  pages   = {613--620}
}

@article{leeb2005,
  author  = {Leeb, Hannes and P{\"o}tscher, Benedikt M.},
  title   = {Model Selection and Inference: Facts and Fiction},
  journal = {Econometric Theory},
  year    = {2005},
  volume  = {21},
  number  = {1},
  pages   = {21--59}
}

@article{leeb2008,
  author  = {Leeb, Hannes and P{\"o}tscher, Benedikt M.},
  title   = {Sparse Estimators and the Oracle Property, or the Return of Hodges' Estimator},
  journal = {Journal of Econometrics},
  year    = {2008},
  volume  = {142},
  number  = {1},
  pages   = {201--211}
}

@article{kivaranovicleeb2021,
  author  = {Kivaranovic, Danijel and Leeb, Hannes},
  title   = {On the Length of Post-Model-Selection Confidence Intervals Conditional on Polyhedral Constraints},
  journal = {Journal of the American Statistical Association},
  year    = {2021},
  volume  = {116},
  number  = {534},
  pages   = {845--857}
}

@article{li2018,
  author  = {Li, Fan and Morgan, Kari Lock and Zaslavsky, Alan M.},
  title   = {Balancing Covariates via Propensity Score Weighting},
  journal = {Journal of the American Statistical Association},
  year    = {2018},
  volume  = {113},
  number  = {521},
  pages   = {390--400}
}

@article{manski2018,
  author  = {Manski, Charles F. and Pepper, John V.},
  title   = {How Do Right-to-Carry Laws Affect Crime Rates? Coping with Ambiguity
             Using Bounded-Variation Assumptions},
  journal = {Review of Economics and Statistics},
  year    = {2018},
  volume  = {100},
  number  = {2},
  pages   = {232--244}
}

@article{rambachan2023,
  author  = {Rambachan, Ashesh and Roth, Jonathan},
  title   = {A More Credible Approach to Parallel Trends},
  journal = {Review of Economic Studies},
  year    = {2023},
  volume  = {90},
  number  = {5},
  pages   = {2555--2591}
}

@article{kwonroth2024,
  author  = {Kwon, Soonwoo and Roth, Jonathan},
  title   = {(Empirical) Bayes Approaches to Parallel Trends},
  journal = {AEA Papers and Proceedings},
  year    = {2024},
  volume  = {114},
  pages   = {606--609}
}

@article{roth2022pretrends,
  author  = {Roth, Jonathan},
  title   = {Pretest with Caution: Event-Study Estimates after Testing for Parallel Trends},
  journal = {American Economic Review: Insights},
  year    = {2022},
  volume  = {4},
  number  = {3},
  pages   = {305--322}
}

@article{roth2023trending,
  author  = {Roth, Jonathan and Sant'Anna, Pedro H. C. and Bilinski, Alyssa and Poe, John},
  title   = {What's Trending in Difference-in-Differences? A Synthesis of the Recent
             Econometrics Literature},
  journal = {Journal of Econometrics},
  year    = {2023},
  volume  = {235},
  number  = {2},
  pages   = {2218--2244}
}

@article{sun2021,
  author  = {Sun, Liyang and Abraham, Sarah},
  title   = {Estimating Dynamic Treatment Effects in Event Studies with Heterogeneous
             Treatment Effects},
  journal = {Journal of Econometrics},
  year    = {2021},
  volume  = {225},
  number  = {2},
  pages   = {175--199}
}

@article{freyaldenhoven2019,
  author  = {Freyaldenhoven, Simon and Hansen, Christian and Shapiro, Jesse M.},
  title   = {Pre-event Trends in the Panel Event-Study Design},
  journal = {American Economic Review},
  year    = {2019},
  volume  = {109},
  number  = {9},
  pages   = {3307--3338}
}

@article{bilinski2020,
  author  = {Bilinski, Alyssa and Hatfield, Laura A.},
  title   = {Nothing to See Here? Non-inferiority Approaches to Parallel Trends
             and Other Model Assumptions},
  journal = {arXiv preprint arXiv:1805.03273},
  year    = {2020}
}

@misc{ers2015,
  author       = {{U.S. Department of Agriculture, Economic Research Service}},
  title        = {County-level Oil and Gas Production in the United States},
  year         = {2015},
  note         = {Onshore production, lower 48 states, 2000--2011}
}

@article{bogin2019,
  author  = {Bogin, Alexander N. and Doerner, William M. and Larson, William D.},
  title   = {Local House Price Dynamics: New Indices and Stylized Facts},
  journal = {Real Estate Economics},
  year    = {2019},
  volume  = {47},
  number  = {2},
  pages   = {365--398}
}

@article{muehlenbachs2015,
  author  = {Muehlenbachs, Lucija and Spiller, Elisheba and Timmins, Christopher},
  title   = {The Housing Market Impacts of Shale Gas Development},
  journal = {American Economic Review},
  year    = {2015},
  volume  = {105},
  number  = {12},
  pages   = {3633--3659}
}

@article{lee2016,
  author  = {Lee, Jason D. and Sun, Dennis L. and Sun, Yuekai and Taylor, Jonathan E.},
  title   = {Exact Post-Selection Inference, with Application to the Lasso},
  journal = {The Annals of Statistics},
  year    = {2016},
  volume  = {44},
  number  = {3},
  pages   = {907--927},
}

\clearpage
\appendix

\section{Optimal credible weighting}\label{sec:optimal}

The flatness screen and the cohort weights $w_g$ have so far been taken as given, the screen retaining a
cohort outright and the weights fixed at, say, cohort size. Both are choices, and the sensitivity class
that governs the honest inference also pins down their optimal form, in the decision-theoretic sense in
which \citet{crump2009} derive the optimal-overlap subpopulation and \citet{li2018} the overlap weights.
There, variance is minimized where positivity is scarce. Here, worst-case mean squared error is minimized
where credibility is scarce.

Let the estimator be a weighted average $\hat\theta(\lambda)=\sum_g\lambda_g\hat\beta_{g,\mathrm{post}}$
with $\lambda_g\ge0$ and $\sum_g\lambda_g=1$, and, following the overlap-weighting logic that the weights
define the estimand as well as the estimator, let the target be the $\lambda$-weighted causal effect
$\theta(\lambda)=\sum_g\lambda_g\theta_g$, whose interpretation is reported through the weights themselves.
Since $\hat\beta_{g,\mathrm{post}}\sim\mathcal N(\theta_g+V_g,\sigma_g^2)$ independently, the estimator has
identification bias $\sum_g\lambda_g V_g$ and variance $\sum_g\lambda_g^2\sigma_g^2$. The pre-trends enter
through the sensitivity class. A cohort's post-treatment violation is bounded by a credibility level
$m_g\ge0$, $\lvert V_g\rvert\le m_g$, with $m_g$ small for a cohort whose pre-trend is flat and large for
one whose pre-trend is steep. The hard flatness screen is the coarsening $m_g\in\{0,\infty\}$, retaining
the $m_g=0$ cohorts and discarding the rest. The level bound $M$ of Section~\ref{sec:honest} is the common
value $m_g\equiv M$ on the retained set. The worst-case mean squared error over this class is
\[
\mathcal R(\lambda)=\Big(\sum_g\lambda_g m_g\Big)^2+\sum_g\lambda_g^2\sigma_g^2,
\]
the squared worst-case bias plus the variance, and the optimal credible weighting minimizes it over the
simplex.

\begin{proposition}[Optimal credible weighting]\label{prop:optimal}
$\mathcal R$ is strictly convex on the simplex, so it has a unique minimizer $\lambda^\ast$. There are
scalars $\mu$ and $B=\sum_g\lambda_g^\ast m_g$, the worst-case bias at the optimum, such that
\[
\lambda_g^\ast=\frac{1}{\sigma_g^2}\Big[\tfrac{\mu}{2}-B\,m_g\Big]_+,\qquad [x]_+=\max(x,0),
\]
with $\mu$ fixed by $\sum_g\lambda_g^\ast=1$. When $B>0$ there is an endogenous credibility cutoff
$m^\ast=\mu/(2B)$. Cohorts with $m_g\ge m^\ast$ receive zero weight, and among the retained cohorts the
weight is $\lambda_g^\ast\propto(m^\ast-m_g)/\sigma_g^2$, decreasing in the credibility bound $m_g$ and in
the sampling variance $\sigma_g^2$. When $B=0$, as under the benchmark $m_g\equiv0$, the cutoff is vacuous
and $\lambda_g^\ast\propto1/\sigma_g^2$, inverse-variance weights with no thresholding.
\end{proposition}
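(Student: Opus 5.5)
The plan is to treat this as a convex quadratic program over the simplex and read the weights off the KKT conditions. I will assume $\sigma_g^2>0$ for every cohort, as Assumption~\ref{ass:main}(i) provides.

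\emph{Convexity and existence.} Write $\mathcal R(\lambda)=(m'\lambda)^2+\lambda'D\lambda$ with $D=\mathrm{diag}(\sigma_g^2)$. The Hessian is $2(mm'+D)$. Since $D$ is positive definite and $mm'$ is positive semidefinite, the Hessian is positive definite, so $\mathcal R$ is strictly convex on all of $\mathbb R^K$, and in particular on the simplex. The simplex is compact and convex, so a minimizer exists and is unique.

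\emph{Characterizing the minimizer.} The constraints are linear, so Slater's condition holds and the KKT conditions are necessary and sufficient. Attach multiplier $\mu$ to $\sum_g\lambda_g=1$ and multipliers $\nu_g\ge0$ to $\lambda_g\ge0$. Stationarity gives
\[
2B\,m_g+2\sigma_g^2\lambda_g-\mu-\nu_g=0,\qquad B:=\textstyle\sum_h\lambda_h m_h,
\]
together with complementary slackness $\nu_g\lambda_g=0$.
\begin{itemize}
\item If $\lambda_g>0$, then $\nu_g=0$, so $\lambda_g=(\mu/2-Bm_g)/\sigma_g^2$.
\item If $\lambda_g=0$, then $\nu_g=2Bm_g-\mu\ge0$, so $\mu/2-Bm_g\le0$.
\end{itemize}
Both cases are captured by the single formula $\lambda_g^\ast=\sigma_g^{-2}[\mu/2-Bm_g]_+$. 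The normalization $\sum_g\lambda_g^\ast=1$ fixes $\mu$.

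\emph{Self-referential $B$.} The one subtle step is that $B$ itself depends on $\lambda^\ast$, so the formula is a fixed-point description rather than an explicit solution. I will not solve the fixed point. Instead I evaluate $B$ at the unique KKT point, which existence and uniqueness already guarantee, and substitute that number back into the formula. The statement asserts only that such scalars $\mu$ and $B$ exist, so this suffices.

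\emph{Case $B>0$.} Rewrite $\mu/2-Bm_g=B(m^\ast-m_g)$ with $m^\ast=\mu/(2B)$. The weight is positive exactly when $m_g<m^\ast$, and among retained cohorts it is proportional to $(m^\ast-m_g)/\sigma_g^2$. This weight is decreasing in both $m_g$ and $\sigma_g^2$.

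We also need $\mu>0$, so that the cutoff $m^\ast$ is positive and some cohort is retained. This follows from the normalization: if $\mu\le0$ and $B>0$, then every term $[\mu/2-Bm_g]_+$ equals zero, because $m_g\ge0$. That would give $\sum_g\lambda_g^\ast=0$, contradicting $\sum_g\lambda_g^\ast=1$.

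\emph{Case $B=0$.} The formula reduces to $\lambda_g^\ast=[\mu/2]_+/\sigma_g^2$. Normalization forces $\mu>0$, so $\lambda_g^\ast\propto1/\sigma_g^2$ for every cohort. In particular no cohort is thresholded, which is the vacuous-cutoff claim in the statement, and this covers the benchmark $m_g\equiv0$.
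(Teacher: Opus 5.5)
Your proposal is correct and follows essentially the same route as the paper. Both use strict convexity of $\mathcal R$ on the simplex for existence and uniqueness, then the Lagrangian with stationarity $2Bm_g+2\sigma_g^2\lambda_g-\mu-\nu_g=0$ and complementary slackness to get the truncated form $\lambda_g^\ast=\sigma_g^{-2}[\mu/2-Bm_g]_+$ and the cutoff $m^\ast=\mu/(2B)$. Your extra steps fill in details the paper leaves implicit without changing the argument: the explicit Hessian $2(mm'+\mathrm{diag}(\sigma_g^2))$, the remark that linear constraints make KKT necessary, the normalization argument forcing $\mu>0$, and the separate $B=0$ case.
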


\begin{proof}
$\sum_g\lambda_g^2\sigma_g^2$ is strictly convex for $\sigma_g^2>0$ and $(\sum_g\lambda_g m_g)^2$ is
convex, so $\mathcal R$ is strictly convex and its minimizer over the compact convex simplex is unique. The
worst-case bias is $\sup_{\lvert V_g\rvert\le m_g}\lvert\sum_g\lambda_g V_g\rvert=\sum_g\lambda_g m_g$
since $\lambda_g\ge0$, giving the displayed $\mathcal R$. Form the Lagrangian
$\mathcal R(\lambda)-\mu(\sum_g\lambda_g-1)-\sum_g\nu_g\lambda_g$ with $\nu_g\ge0$. Stationarity gives
$2Bm_g+2\lambda_g\sigma_g^2-\mu-\nu_g=0$ with $B=\sum_g\lambda_g m_g$. For $\lambda_g>0$ complementary
slackness sets $\nu_g=0$ and $\lambda_g=(\mu/2-Bm_g)/\sigma_g^2$. For $\lambda_g=0$ it requires
$\mu/2-Bm_g\le0$, i.e.\ $m_g\ge\mu/(2B)$. Combining the two cases gives the truncated form and the cutoff
$m^\ast=\mu/(2B)$, with $\mu$ determined by the budget constraint and $B$ by its own definition at the
resulting $\lambda^\ast$.
\end{proof}

Three readings connect the result to the rest of the paper. First, the optimal rule is a soft
flatness screen. It reproduces the hard screen's qualitative content, cohorts whose pre-trends are too
steep are dropped, but with an endogenous threshold set by the bias--variance trade-off rather than a
researcher's $c$, and with the surviving cohorts weighted continuously by how credible and how precise they
are, rather than equally. A marginally less credible but large and precisely estimated cohort is retained
where the hard screen would discard it, because the variance its inclusion saves outweighs the bias it
admits, the same calculus as the dominance condition of Proposition~\ref{prop:generalK}, now internal to
the weighting. Second, the result justifies the paper's default. Among equally credible cohorts,
$m_g\equiv0$, the weights collapse to inverse-variance, $\lambda_g^\ast\propto1/\sigma_g^2$, and when
sampling variance scales inversely with cohort size, $\sigma_g^2\propto1/n_g$, inverse-variance weighting
is size weighting. The fixed $w_g$ of the definition is thus optimal precisely under homoskedastic-per-unit
noise and equal credibility, and the proposition says what to do when either fails. Third, it completes the
analogy to limited overlap. \citet{crump2009} trim and \citet{li2018} weight to minimize variance where
positivity is the binding scarcity. Here credibility is the scarcity, and the optimal response is to weight
cohorts by credibility net of noise, with trimming emerging as the corner solution for cohorts past
$m^\ast$.

The credibility levels $m_g$ are reported, not assumed, in the same currency as everything else. A natural
choice sets $m_g$ to the cohort's own maximal pre-trend, $m_g=\max_{e<0}\lvert\hat\beta_{g,\mathrm{pre}}(e)
\rvert$, so that the weighting is read off the observed pre-trends. This makes the weights data-driven, a
refinement of the selection already treated, and, as there, validity does not rest on it. The honest
interval of Section~\ref{sec:honest} takes its coverage from the researcher-set level bound $M$, while the
estimated $m_g$ only orders and weights the cohorts. Using $m_g$ to weight but $M$ to cover keeps efficiency
and honesty on separate instruments, as the carved inference of Theorem~\ref{thm:carve} keeps
selection and identification on separate instruments. One caveat separates the feasible rule from
Proposition~\ref{prop:optimal}. That optimality is minimax over the class $\lvert V_g\rvert\le m_g$, so it
holds for the feasible weights only if the observed pre-trend $m_g$ bounds the post-treatment violation
$V_g$. Absent such a pre-to-post link, the feasible $m_g$ is a credibility ranking rather than the bound
defining the optimized class, and the weighting is a heuristic that need not inherit the minimax
interpretation. A second caveat concerns coverage. Because the estimated weights make the contrast $\ell_S$
and the causal target random functions of the pre-trends, the discrete-selection guarantee of
Theorem~\ref{thm:carve}, which conditions on a fixed contrast, does not extend to the continuously weighted
interval. The level bound $M$ absorbs the residual identification bias, but the sampling and selection
uncertainty from estimating the weights requires a separate argument, which we leave to future work.
A practitioner wanting the feasible weighting with valid inference today can estimate the weights on an
independent split and invert on the other, at the usual efficiency cost; honest inference that carves the
continuous weighting without that split is the open question.

\subsection{The optimal weighting improves on the hard screen}\label{sec:optsim}

Proposition~\ref{prop:optimal} predicts that the soft, credibility-and-precision weighting lowers risk
relative to the hard flatness screen. We confirm this and measure the gain. The design has $K=12$ cohorts
with homogeneous effects, so every weighting targets the same value and mean squared error is a fair
comparison. Cohorts come in pairs at six violation levels $V_g\in\{0,0.08,\dots,0.40\}$. Within each pair
one member is precise ($n=6400$) and one noisy ($n=800$), with per-cohort post-treatment standard deviation
$\sigma_g\propto1/\sqrt{n_g}$ from a common unit scale, so the pair shares a violation level but not a
variance. A hard screen keeps or drops a pair as a
whole, since both members share the credibility $m_g$, whereas the optimal weighting can separate them. The
design is deliberately constructed to exhibit this precision channel, and a random-DGP average below reports
the typical gain.

We compare three procedures for the common target, the ATT (equal weights on all cohorts), the hard screen
(equal weights on $\{g:m_g\le c\}$ at its best threshold $c$), and the optimal weighting $\lambda^\ast$ of
Proposition~\ref{prop:optimal}, in an oracle version with $m_g=V_g$ known and a feasible version with
$\hat m_g=\max_e\lvert\hat\beta_{g,\mathrm{pre}}(e)\rvert$ estimated from pre-trends at informativeness
$\phi=1.5$. The hard-screen threshold $c$ is chosen to minimize MSE, and the $500$-DGP average draws random
violation levels and sample sizes from the same family.

Table~\ref{tab:prop11} and Figure~\ref{fig:prop11} report the result. The left panel plots MSE against the
hard-screen threshold. The optimal weighting lies below the hard-screen curve at every $c$, including its
minimum. Against the best hard screen, the oracle optimal weighting ($m_g=V_g$) lowers MSE from $0.00104$ to
$0.00039$, by $63\%$. The feasible weighting, estimating $\hat m_g$ from the pre-trends, attains $0.00048$, a
$54\%$ reduction against the same tabulated hard screen. Measured instead against a feasible hard screen that
also estimates $\hat m_g$, not shown in Table~\ref{tab:prop11}, the reduction is $28\%$. Across $500$ random
DGPs the average reduction over the best hard screen is $14\%$. The right panel shows the mechanism. At a common credibility level the optimal weighting loads on the
precise member and down-weights the noisy one, where the hard screen weights them equally, and it retains
marginally less credible cohorts at reduced weight rather than dropping them at a cutoff. The gap between the
oracle and feasible lines is the price of estimating $m_g$ from noisy pre-trends, the open question the
conclusion notes. Even so, the feasible weighting improves on the hard screen.

\begin{table}[t]
\centering
\caption{Optimal weighting versus the hard screen (reduced-form model, homogeneous effects, so all
procedures target the same value). RMSE and MSE for that target. The optimal weighting minimizes worst-case
MSE by Proposition~\ref{prop:optimal}.}
\label{tab:prop11}
\begin{tabular}{lcc}
\toprule
Procedure & RMSE & MSE \\
\midrule
ATT (all cohorts) & $0.200$ & $0.0402$ \\
Hard screen (best $c$) & $0.032$ & $0.00104$ \\
Optimal weighting, feasible ($\hat m_g$) & $0.022$ & $0.00048$ \\
Optimal weighting, oracle ($m_g=V_g$) & $0.020$ & $0.00039$ \\
\bottomrule
\end{tabular}
\end{table}

\begin{figure}[t]
\centering
\includegraphics[width=\textwidth]{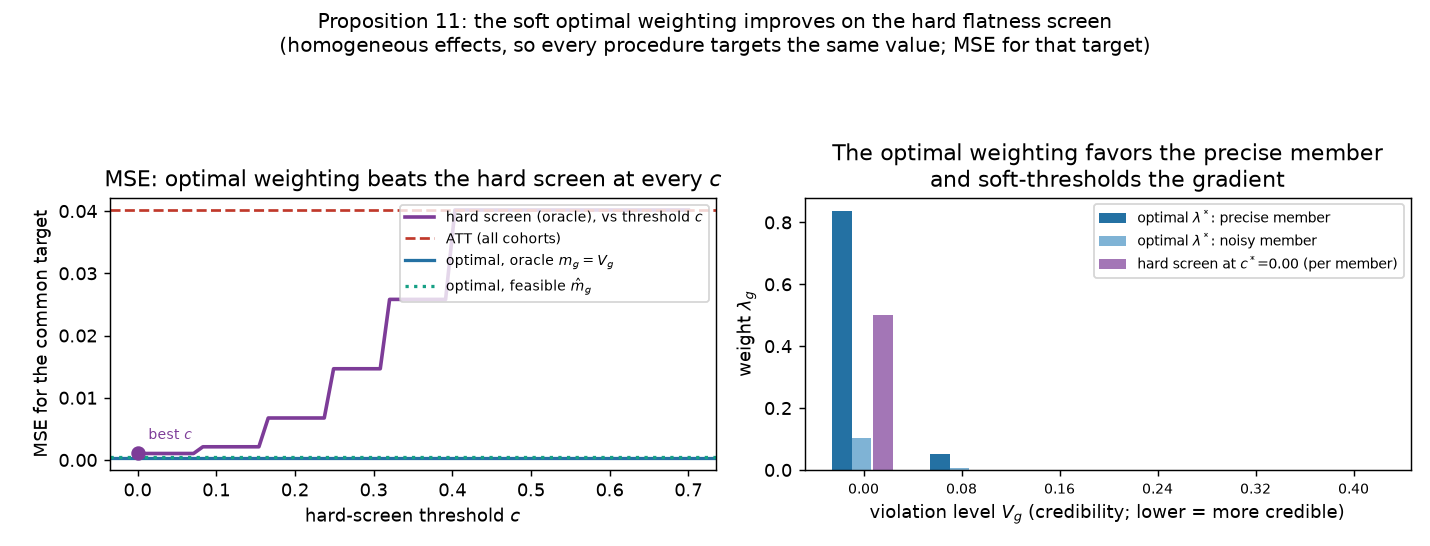}
\caption{Optimal credible weighting (Proposition~\ref{prop:optimal}) versus the hard flatness screen,
homogeneous effects so every procedure targets the same value. Left, MSE against the hard-screen threshold
$c$. The optimal weighting, oracle and feasible, lies below the hard-screen curve at every $c$ and below the
ATT. Right, weights at each violation level, the optimal weighting loads on the precise member of each pair
and soft-thresholds the credibility gradient, where the hard screen weights retained cohorts equally.}
\label{fig:prop11}
\end{figure}

\section{Mean-squared-error dominance}\label{app:mse}

This appendix develops the point-estimator account behind the width comparison of
Section~\ref{sec:dominance}, identifying which three forces set the sign of the advantage and the explicit
thresholds they imply. Changing the estimand is worth doing only when the bias it removes outweighs the variance and
the change of target it costs. Three forces set the sign, the violation the screen removes, the variance
that dropping a cohort costs, and the heterogeneity between the retained subpopulation and the full
one, and the sign turns on their balance, not on the informativeness of the pre-trends, which governs only
how much of a fixed-sign advantage is realized.

We state the comparison for $K$ cohorts, and then read from the two-cohort special case
the one thing the general result leaves implicit, how the advantage scales with the
informativeness of the pre-trends. Let each cohort $g$ carry a weight
$w_g$ (normalized so $\sum_{g}w_g=1$), a causal effect $\theta_g=\mathrm{ATT}(g,\cdot)$,
a post-treatment violation $V_g$ (the aggregate $a(\delta_{g,\mathrm{post}})$, zero for a
clean cohort), and an independent sampling variance $\sigma_g^2$ for its aggregated
post-treatment estimate. The screen retains cohort $g$ with probability $\pi_g$,
independently across cohorts and of the post-treatment estimates. The ATT estimator is
the fixed average $\sum_g w_g\hat\beta_{g,\mathrm{post}}$. The LATT estimator reweights
over the retained set $\hat S$, $\sum_{g\in\hat S}w_g\hat\beta_{g,\mathrm{post}}/W_{\hat S}$
with $W_S=\sum_{g\in S}w_g$.

\begin{proposition}[General-$K$ dominance]\label{prop:generalK}
With the target $\theta_{\mathcal G}=\sum_g w_g\theta_g$, the ATT estimator has mean squared
error $\mathrm{MSE}_{\mathrm{ATT}}=\big(\sum_g w_gV_g\big)^2+\sum_g w_g^2\sigma_g^2$, and the
LATT estimator has
\begin{equation}\label{eq:latt-exact}
\mathrm{MSE}_{\mathrm{LATT}}=\sum_{S\subseteq\mathcal G}
\Big(\textstyle\prod_{g\in S}\pi_g\prod_{g\notin S}(1-\pi_g)\Big)
\Big[\big(\theta_S-\theta_{\mathcal G}+B_S\big)^2+\mathrm{Var}_S\Big],
\end{equation}
where $\theta_S=\sum_{g\in S}w_g\theta_g/W_S$, $B_S=\sum_{g\in S}w_gV_g/W_S$, and
$\mathrm{Var}_S=\sum_{g\in S}w_g^2\sigma_g^2/W_S^2$, with the convention $S=\mathcal G$ when the
screen retains no cohort. When every cohort is retained, $\pi_g=1$, the two estimators
coincide and $\mathcal{A}=\mathrm{MSE}_{\mathrm{ATT}}-\mathrm{MSE}_{\mathrm{LATT}}=0$. Suppose instead
the screen separates the cohorts, discarding every confounded cohort, $\pi_g\to0$ for
$V_g\neq0$, and retaining every clean cohort, $\pi_g\to1$ for $V_g=0$, so that the retained
set converges in probability to the clean set $\mathcal C$ and the dropped set is the
confounded set $\mathcal F=\mathcal G\setminus\mathcal C$. Then
\begin{equation}\label{eq:deltastar}
\mathcal{A}\;\longrightarrow\;\mathcal{A}^\ast\;=\;B_{\mathcal F}^2-\Gamma^2-\Delta\mathrm{Var},
\end{equation}
where $B_{\mathcal F}=\sum_{g\in\mathcal F}w_gV_g$ is the aggregate violation removed by
selection (a realized quantity, unlike the worst-case level bound $B_{\hat S}$ of
Corollary~\ref{cor:causal} that widens the honest interval), $\Gamma=\theta_{\mathcal C}-\theta_{\mathcal G}$
is the composition gap between the
retained subpopulation and the full population, and
$\Delta\mathrm{Var}=\mathrm{Var}_{\mathcal C}-\sum_g w_g^2\sigma_g^2\ge0$ is the variance cost of
dropping. Hence in the informative limit the LATT dominates the ATT in mean squared error if
and only if
\begin{equation}\label{eq:generalK-threshold}
B_{\mathcal F}^2\;>\;\Gamma^2+\Delta\mathrm{Var}.
\end{equation}
With two equally weighted cohorts, one clean and one confounded with violation $V$, effect gap
$\gamma$, and common variance $\sigma^2$, one has $B_{\mathcal F}^2=V^2/4$, $\Gamma^2=\gamma^2/4$,
and $\Delta\mathrm{Var}=\sigma^2/2$, so \eqref{eq:generalK-threshold} becomes $V^2>\gamma^2+2\sigma^2$.
Remark~\ref{rem:inform} below develops this two-cohort case, where the informativeness of the
pre-trends enters.
\end{proposition}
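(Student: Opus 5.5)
The plan is a direct bias–variance computation, first for each fixed retained set and then averaged over the independent screen. For the ATT, write $\hat\beta_{g,\mathrm{post}}=\theta_g+V_g+\varepsilon_g$ with independent $\varepsilon_g\sim(0,\sigma_g^2)$. Then $\sum_g w_g\hat\beta_{g,\mathrm{post}}-\theta_{\mathcal G}=\sum_g w_gV_g+\sum_g w_g\varepsilon_g$. Squaring and taking expectations gives $\mathrm{MSE}_{\mathrm{ATT}}$, since the weights sum to one and the cross term has mean zero.

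For the LATT I will condition on $\{\hat S=S\}$. Because the screen is assumed independent across cohorts and independent of the post-treatment estimates, conditioning leaves each $\varepsilon_g$ with mean zero and variance $\sigma_g^2$. This independence is exactly the assumption that removes the pre/post-correlation term flagged in Proposition~\ref{prop:pretest}, so I will invoke it explicitly.

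On the event $\{\hat S=S\}$, the error is
\[
\hat\theta_S-\theta_{\mathcal G}=(\theta_S-\theta_{\mathcal G})+B_S+\textstyle\sum_{g\in S}w_g\varepsilon_g/W_S.
\]
The conditional MSE is therefore $(\theta_S-\theta_{\mathcal G}+B_S)^2+\mathrm{Var}_S$. Weighting by $P(\hat S=S)=\prod_{g\in S}\pi_g\prod_{g\notin S}(1-\pi_g)$ gives \eqref{eq:latt-exact}, with the empty event mapped to $\mathcal G$ by convention. When $\pi_g=1$ for all $g$, only $S=\mathcal G$ has positive mass, so the two estimators coincide and $\mathcal A=0$.

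For the informative limit, note that \eqref{eq:latt-exact} is a finite sum over subsets, and each bracket does not depend on $\pi$. The probability weights are continuous in $\pi$ and converge to the point mass on $S=\mathcal C$. Hence $\mathrm{MSE}_{\mathrm{LATT}}\to(\Gamma+B_{\mathcal C})^2+\mathrm{Var}_{\mathcal C}$, and $B_{\mathcal C}=0$ because clean cohorts have $V_g=0$. Since $\sum_g w_gV_g=B_{\mathcal F}$, it follows that
\[
\mathcal A\to B_{\mathcal F}^2+\textstyle\sum_g w_g^2\sigma_g^2-\Gamma^2-\mathrm{Var}_{\mathcal C}=B_{\mathcal F}^2-\Gamma^2-\Delta\mathrm{Var}.
\]
This yields \eqref{eq:deltastar}, and \eqref{eq:generalK-threshold} follows immediately, read as the statement that $\mathcal A^\ast>0$.

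The step needing an actual argument is $\Delta\mathrm{Var}\ge0$, which I expect to be the only mild obstacle. It is not true in general for arbitrary weights; for example, dropping a very noisy cohort lowers the variance. So I would either restrict it to a hypothesis or prove it in the leading case where the weights are inverse-variance. Under $w_g\propto1/\sigma_g^2$, normalized so $\sum_g w_g=1$, one has $\sum_g w_g^2\sigma_g^2=c$, where $c=1/\sum_g\sigma_g^{-2}$ is the normalizing constant with $w_g=c/\sigma_g^2$. Likewise $\mathrm{Var}_{\mathcal C}=1/\sum_{g\in\mathcal C}\sigma_g^{-2}\ge c$, so $\Delta\mathrm{Var}\ge0$ there. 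In the equal-variance, equal-weight case the inequality also holds: $\mathrm{Var}_{\mathcal C}=\sigma^2/|\mathcal C|\ge\sigma^2/K$. I will note that the nonnegativity claim should be read under such conditions, while the identity \eqref{eq:deltastar} itself needs none.

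The two-cohort specialization is arithmetic with $w_g=1/2$. It gives $B_{\mathcal F}=V/2$ and $\Gamma=\theta_1-(\theta_1+\theta_2)/2=\gamma/2$. It also gives $\Delta\mathrm{Var}=\sigma^2-\sigma^2/2=\sigma^2/2$. Multiplying the threshold by $4$ gives $V^2>\gamma^2+2\sigma^2$.
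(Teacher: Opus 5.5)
Your proposal is correct and follows the paper's proof essentially step for step: compute the ATT's bias and variance directly, condition on the retention configuration, use the independence of retention from the post-treatment estimates to get conditional mean $\theta_S+B_S$ and variance $\mathrm{Var}_S$, average over configurations, and pass to the point mass on $\mathcal C$. Your caveat on $\Delta\mathrm{Var}\ge0$ is well taken. The paper's proof never justifies it, and it does fail for arbitrary fixed weights, for example equal weights with a very noisy dropped cohort. It holds under precision weights or under equal weights with a common variance, a restriction the paper itself acknowledges only in the proof of Theorem~\ref{thm:width}.
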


\begin{proof}
See Appendix~\ref{app:proofs}.
\end{proof}

The general threshold \eqref{eq:generalK-threshold} is the two-cohort comparison written for many
cohorts, each of its three forces now an aggregate. The reason to trade is the total violation
$B_{\mathcal F}$ selection removes, not any single cohort's. The estimand cost is the composition gap
$\Gamma$ between the retained subpopulation's effect and the full population's, the many-cohort form of
$\gamma$. It vanishes under the equal-targets condition, that the retained and dropped cohorts share the
same average effect. The variance
cost $\Delta\mathrm{Var}$ is the extra sampling variance from reweighting onto fewer cohorts. The sign of
the advantage is a violation-against-noise-and-heterogeneity question, settled by
\eqref{eq:generalK-threshold} at the endpoints where the screen retains everyone ($\mathcal{A}=0$) or fully
separates the cohorts ($\mathcal{A}=\mathcal{A}^\ast$). Between them the path need not be monotone. With more
than two cohorts, partial selection can retain a subset of the confounded cohorts, and when their violations
cancel in the full ATT this can raise the LATT's bias, sending $\mathcal{A}$ negative before it recovers to
$\mathcal{A}^\ast$ at separation, so the monotone reading of $\mathcal{A}$ running from zero to
$\mathcal{A}^\ast$ as informativeness grows is specific to the two-type clean-versus-confounded case. Two
qualifications connect this to the inference in the main text. The separation that makes
\eqref{eq:deltastar} exact is the mean-squared-error counterpart of the condition behind
Remark~\ref{rem:oracle}. Near-threshold cohorts are retained with interior probability, the same
local-to-threshold configurations that obstruct uniform post-selection inference. And an undetectable
confounded cohort, flat before treatment yet violating parallel trends after, stays in both estimators, so
it enters \eqref{eq:deltastar} through the residual of Proposition~\ref{prop:pretest} that the honest
inference of Section~\ref{sec:honest} bounds.

The two-cohort case makes the three forces explicit and, uniquely, traces the advantage along the
informativeness axis. Consider two equally weighted cohorts sharing a common sampling variance
$\sigma^2$, observed as one pre- and one post-treatment coefficient each, with independent pre- and
post-period sampling errors. One cohort is clean, $\delta_1=0$. The other is confounded, with a
post-treatment violation $V>0$ and a pre-treatment violation $\phi V$, where $\phi\ge0$ is the
informativeness of the pre-trend. Their treatment effects are $\tau_1$ and $\tau_2$, with
$\gamma:=\tau_2-\tau_1$, so the target is the average $\bar\tau=(\tau_1+\tau_2)/2$. Taking $c$ large
enough that the clean cohort is retained with probability approaching one, the confounded cohort
survives the screen $\lvert\hat\beta_{2,\mathrm{pre}}\rvert\le c$ with probability
\begin{equation}\label{eq:p2}
p_2(\phi)\;=\;\Phi\!\Big(\tfrac{c-\phi V}{\sigma}\Big)-\Phi\!\Big(\tfrac{-c-\phi V}{\sigma}\Big),
\end{equation}
which decreases from near one, when the confound is invisible in the pre-period, to zero as the
pre-trend announces it.

\begin{remark}[Informativeness scaling, two cohorts]\label{rem:inform}
In the two-cohort model, with the clean cohort retained with probability approaching one, the
mean-squared-error advantage of the LATT estimator over the ATT estimator is
\begin{equation}\label{eq:delta}
\mathcal{A}(\phi)\;=\;\mathrm{MSE}_{\mathrm{ATT}}-\mathrm{MSE}_{\mathrm{LATT}}
\;=\;\frac{1-p_2(\phi)}{4}\,\big[\,V^2-\gamma^2-2\sigma^2\,\big],
\end{equation}
the general advantage \eqref{eq:deltastar} specialized by $B_{\mathcal F}^2=V^2/4$, $\Gamma^2=\gamma^2/4$,
$\Delta\mathrm{Var}=\sigma^2/2$ and scaled by the survival factor $1-p_2(\phi)$ (derivation in
Appendix~\ref{app:proofs}). The bracket's three terms are the forces of \eqref{eq:generalK-threshold} in
closed form, $+V^2$ the bias the screen removes, $-2\sigma^2$ the variance that dropping a cohort costs,
and $-\gamma^2$ the estimand-change cost, charged only to a reader who insists on the ATT as the common
target and vanishing under homogeneity $\gamma=0$. Hence the LATT dominates the ATT in mean squared error
if and only if
\begin{equation}\label{eq:threshold}
V^2\;>\;\gamma^2+2\sigma^2 ,
\end{equation}
a sign fixed independently of the informativeness $\phi$, which enters only through the nondecreasing
factor $1-p_2(\phi)$. Two readings follow. Informative pre-trends are necessary but not sufficient. If
$V^2<\gamma^2+2\sigma^2$ no degree of informativeness rescues the LATT, since $1-p_2$ only drives an
already-negative bracket further from zero, informativeness scales the advantage, it does not create one.
And when the bracket is positive the advantage rises monotonically in informativeness, from zero at
$\phi=0$ to its ceiling $\tfrac14[V^2-\gamma^2-2\sigma^2]$ as $\phi\to\infty$, the vertical gap between the
two bias curves in Figure~\ref{fig:scope}.
\end{remark}

The dominance condition can be read off the data, up to the one quantity it cannot identify, and so
furnishes the diagnostic reported in the switching rule (Proposition~\ref{prop:switch}). The variance cost
$\Delta\mathrm{Var}$ is a
function of $\Sigma$ and the weights alone and is therefore estimable, and the difference between the two
point estimators consistently estimates the violation net of the composition gap,
\begin{equation}\label{eq:D-identified}
\hat\theta_{\mathcal G}-\hat\theta_{\hat S}\;\xrightarrow{p}\;B_{\mathcal F}-\Gamma\;=:\;D,
\end{equation}
since the ATT estimator carries the dropped cohorts' violation while the LATT estimator does not. What is
not identified is $\Gamma$ on its own, because it depends on the counterfactual effects of the very
cohorts the screen discards. Writing $B_{\mathcal F}=D+\Gamma$ in \eqref{eq:generalK-threshold}, the
advantage $\mathcal{A}^\ast=B_{\mathcal F}^2-\Gamma^2-\Delta\mathrm{Var}$ becomes
$D^2+2D\Gamma-\Delta\mathrm{Var}$, linear in the unidentified $\Gamma$ because the squared composition gap
cancels. Two readings follow. Under homogeneous effects, $\Gamma=0$, the trade pays exactly when the
squared gap between the estimators exceeds the variance cost,
\begin{equation}\label{eq:diagnostic}
D^2\;>\;\Delta\mathrm{Var},
\end{equation}
a comparison the practitioner can compute and, using the carved standard error of
Theorem~\ref{thm:carve} for $D$, test. In general the verdict is monotone in $\Gamma$, increasing
when $D>0$, and flips at the breakdown value $\Gamma^\ast=(\Delta\mathrm{Var}-D^2)/(2D)$, the quantity
Proposition~\ref{prop:switch}(iii) reports. The trade pays unless the dropped
cohorts' effects differ from the retained ones by more than $\Gamma^\ast$ allows, a sensitivity statement
in the same idiom as the breakdown value $M^\ast$ for the identifying restriction, now for the estimand
change rather than the violation.

Three qualifications sharpen the dominance calculation. The advantage \eqref{eq:delta} is an upper bound
on what the method delivers, because it omits two channels that both work against the
LATT, occasional false rejection of the clean cohort, which enters at order
$1-p_1$ when the threshold is not comfortably large, and the additional variance that
data-driven selection induces and that a standard error conditional on the realized
set does not capture, the understatement documented in Section~\ref{sec:calib}. Both
shrink the realized advantage toward zero, and near the threshold, where the dominance margin
$V^2-\gamma^2-2\sigma^2$ is small, a positive omitted cost can reverse the ranking, so the sign is
guaranteed only when that margin exceeds these costs. Finally,
\eqref{eq:delta} is a comparison of point estimators under squared-error loss, not of
the honest intervals that are the paper's recommended output. The interval analogue
replaces the reason-to-trade $V$ with the residual violation net of the calibration
buffer of Remark~\ref{rem:calib}, and the variance cost $\sigma^2$ with a sampling
slack inflated by the selection quantile, so that the level-bound half-width comparison
turns on the same three forces with $V$ discounted and $\sigma$ inflated. The
mean-squared-error threshold \eqref{eq:threshold} correctly signs that comparison and
is the transparent summary. The width comparison of Section~\ref{sec:dominance} sharpens the
constants.

\section{Omitted proofs}\label{app:proofs}

\begin{proof}[Proof of Proposition~\ref{prop:id}]
For $g\in S$, $\beta_{g,\mathrm{post}} = \tau_{g,\mathrm{post}}+\delta_{g,\mathrm{post}}
= \tau_{g,\mathrm{post}}$ by the hypothesis and \eqref{eq:decomp}, so $a(\beta_{g,\mathrm{post}})
= a(\tau_{g,\mathrm{post}}) = \mathrm{ATT}(g,\cdot)$ by linearity of $a$. The reweighted sum over $S$
therefore equals the weighted average of the aggregated causal effects, which is $\theta_S$ by definition.
Cohorts outside $S$ receive zero weight, so their differential trends do not enter.
\end{proof}

\begin{proof}[Proof of Corollary~\ref{cor:causal}]
Theorem~\ref{thm:carve} gives $\theta_{\hat S}\in\mathcal C_{1-\alpha}(\hat S)$ with probability $1-\alpha$.
The maintained restriction places $\theta_{\hat S}^{\mathrm c}$ within $B_{\hat S}$ of $\theta_{\hat S}$
deterministically, so $\{\theta_{\hat S}\in\mathcal C_{1-\alpha}(\hat S)\}\subseteq\{\theta_{\hat S}^{\mathrm c}\in
\mathcal C^{M}_{1-\alpha}(\hat S)\}$, and monotonicity of probability gives the bound.
\end{proof}

\begin{proof}[Proof of Proposition~\ref{prop:nolength}]
By Theorem~\ref{thm:carve}, conditional on $\{\hat S=S\}$ and the orthogonal remainder $r$,
$\hat\theta_S\sim\mathcal{TN}(\theta_S,s_S^2,[\mathcal V^-,\mathcal V^+])$, and $\mathcal C_{1-\alpha}(S)$
inverts the truncated-normal pivot in $\theta_S$, which is exactly the confidence set of \citet{lee2016}
for a normal mean under a polyhedral, hence after conditioning on $r$ interval, truncation.
\citet{kivaranovicleeb2021} show this set has infinite expected length whenever the truncation is one-sided
with positive probability. Under the flatness screen a retained cohort's binding constraint is two-sided
only when both faces $\lvert\hat\beta_{g,\mathrm{pre}}(e)+\omega_g\rvert\le c$ are active; generically one
face binds and the induced bound on $\hat\theta_S$ is one-sided, an event of positive probability, so the
hypothesis holds. Averaging over $S$ preserves the conclusion. That only $\gamma\to\infty$ escapes is
immediate, since it empties the selection.
\end{proof}

\begin{proof}[Proof of Proposition~\ref{prop:switch}]
(i) Each interval's coverage is a property of that interval alone and holds for every $\beta$. Reporting
both leaves each untouched, so whichever the rule recommends, the reader is handed an interval that covers
at $1-\alpha$. (ii) By Theorem~\ref{thm:width}, $h_S<h_{\mathcal G}\iff B_{\mathcal G}>B_{\mathcal G}^\ast$,
and both half-widths are functions of $\hat\Sigma$ and the chosen bounds. For a fixed retained set
$\hat D\xrightarrow{p}-\Gamma+(\delta_{\mathcal G}-\delta_S)$, which equals $B_{\mathcal F}-\Gamma$ in the
informative limit where the retained aggregate violation vanishes and the full aggregate violation is
carried by the dropped cohorts. Its standard error is that of the contrast
$(\ell_{\mathcal G}-\ell_S)'\hat\beta_{\mathrm{post}}$, carved for the selection. (iii) The composition gap
$\Gamma=\theta_{\hat S}-\theta_{\mathcal G}$ depends only on
the discarded cohorts' effects and is therefore unidentified. The explicit $\Gamma^\ast$ follows from the
mean-squared-error advantage of Appendix~\ref{app:mse}.
\end{proof}

\begin{proof}[Proof of Proposition~\ref{prop:generalK}]
The ATT estimator is a fixed linear combination with mean
$\theta_{\mathcal G}+\sum_g w_gV_g$ and variance $\sum_g w_g^2\sigma_g^2$, giving
$\mathrm{MSE}_{\mathrm{ATT}}$. For the LATT, condition on the retention configuration. On the
event that the retained set is $S$, which has probability
$\prod_{g\in S}\pi_g\prod_{g\notin S}(1-\pi_g)$, the estimator is
$\sum_{g\in S}w_g\hat\beta_{g,\mathrm{post}}/W_S$, whose mean is $\theta_S+B_S$ and whose
variance is $\mathrm{Var}_S$, since retention is independent of the post-treatment estimates. Its
error relative to $\theta_{\mathcal G}$ has squared bias $(\theta_S-\theta_{\mathcal G}+B_S)^2$ and
variance $\mathrm{Var}_S$. Averaging over configurations gives \eqref{eq:latt-exact}. At
$\pi_g=1$ only $S=\mathcal G$ has mass, where $\theta_{\mathcal G}-\theta_{\mathcal G}+B_{\mathcal G}
=\sum_g w_gV_g$ and $\mathrm{Var}_{\mathcal G}=\sum_g w_g^2\sigma_g^2$, so
$\mathrm{MSE}_{\mathrm{LATT}}=\mathrm{MSE}_{\mathrm{ATT}}$ and $\mathcal{A}=0$. Under separation the
configuration $S=\mathcal C$ carries probability approaching one, so
$\mathrm{MSE}_{\mathrm{LATT}}\to(\theta_{\mathcal C}-\theta_{\mathcal G})^2+\mathrm{Var}_{\mathcal C}$,
using $B_{\mathcal C}=0$ because clean cohorts have $V_g=0$. Since clean cohorts contribute nothing
to $\sum_g w_gV_g$, that sum equals $B_{\mathcal F}$, and
$\mathcal{A}\to B_{\mathcal F}^2+\sum_g w_g^2\sigma_g^2-\Gamma^2-\mathrm{Var}_{\mathcal C}
=B_{\mathcal F}^2-\Gamma^2-\Delta\mathrm{Var}$, which is \eqref{eq:deltastar}. The threshold
\eqref{eq:generalK-threshold} is its positivity, and the two-cohort reduction is the displayed
substitution.
\end{proof}

\begin{proof}[Proof of Remark~\ref{rem:inform}]
Write $\tau_1=\bar\tau-\gamma/2$, $\tau_2=\bar\tau+\gamma/2$, so
$\hat\beta_{1,\mathrm{post}}\sim\mathcal N(\bar\tau-\gamma/2,\sigma^2)$ and
$\hat\beta_{2,\mathrm{post}}\sim\mathcal N(\bar\tau+\gamma/2+V,\sigma^2)$, independent of each other
and of the retention indicators. The ATT estimator has
$\mathrm{MSE}_{\mathrm{ATT}}=V^2/4+\sigma^2/2$. With the clean cohort always retained, the LATT
estimator is the two-cohort average with probability $p_2$ and $\hat\beta_{1,\mathrm{post}}$ with
probability $1-p_2$, so
$\mathrm{MSE}_{\mathrm{LATT}}=p_2(V^2/4+\sigma^2/2)+(1-p_2)(\gamma^2/4+\sigma^2)$. Subtracting gives
\eqref{eq:delta}, which is \eqref{eq:deltastar} under the stated substitution. Since $1-p_2\ge0$ the
sign is that of the bracket, giving \eqref{eq:threshold}, and $p_2(\phi)$ is nonincreasing in $\phi$
for $\phi V\ge0$, so $1-p_2$ is nondecreasing.
\end{proof}

\begin{proof}[Proof of Theorem~\ref{thm:width}]
By \eqref{eq:flci-hw} each interval is the near-optimal fixed-length interval of
\citet{armstrong2018}, with half-width $h(B,s)=s\,\mathrm{cv}_\alpha(B/s)$ and coverage at least
$1-\alpha$ whenever the bound dominates the estimator's bias. For the ATT this gives
$h_{\mathcal G}=h(B_{\mathcal G},s_{\mathcal G})$, and under ex-ante $S$ the same applies to the
LATT, so $h_S=h(B_S,s_S)$. Both half-widths then share $\varphi(B,s)=s\,\mathrm{cv}_\alpha(B/s)$, so
$h_S<h_{\mathcal G}$ is the displayed inequality. Under data-driven selection
$\hat\theta_S\mid\{\hat S=S\}$ is truncated Gaussian, so the honest LATT interval is the carved interval of
Theorem~\ref{thm:carve} rather than the fixed-length \eqref{eq:flci-hw}. Its length is data-dependent and
generally asymmetric, and its conditional expected length is infinite by
Proposition~\ref{prop:nolength}, so no expected-length ordering is well-posed; the displayed threshold is
therefore stated for the fixed-length intervals of ex-ante selection only.
Fix $(s_{\mathcal G},s_S,B_S)$ and set
$c_0=s_S\,\mathrm{cv}_\alpha(B_S/s_S)$. The map $g(B)=s_{\mathcal G}\,\mathrm{cv}_\alpha(B/s_{\mathcal G})$
is continuous and strictly increasing, from $g(0)=s_{\mathcal G}\,z_{1-\alpha/2}$ to $\infty$. Since
$\mathrm{cv}_\alpha:[0,\infty)\to[z_{1-\alpha/2},\infty)$ is a strictly increasing bijection and
when $s_S\ge s_{\mathcal G}$ we have $c_0/s_{\mathcal G}\ge (s_S/s_{\mathcal G})\,z_{1-\alpha/2}\ge
z_{1-\alpha/2}$, so the equation $g(B_{\mathcal G}^\ast)=c_0$ has the unique solution displayed and
$h_S<h_{\mathcal G}\iff B_{\mathcal G}>B_{\mathcal G}^\ast$. When $s_S<s_{\mathcal G}$ the argument
$c_0/s_{\mathcal G}$ can fall below $z_{1-\alpha/2}$, $g(B_{\mathcal G}^\ast)=c_0$ has no nonnegative
solution, and $h_S<h_{\mathcal G}$ already at $B_{\mathcal G}=0$. Under precision weights, reweighting onto
the retained set cannot lower the aggregate variance below the full set's optimum, so $s_S\ge s_{\mathcal G}$
and $B_{\mathcal G}^\ast>0$ whenever $s_S>s_{\mathcal G}$. Under other weightings the ordering must be checked.
\end{proof}

\end{document}